\newcommand{\blf}{\mathbf{b}}
\renewcommand{\S}{{\mathcal{S}}}
\newcommand{\notS}{{\overline{S}}}
\newcommand{\nS}{{\overline{S}}}
\newcommand{\notT}{{\overline{T}}}
\newcommand{\T}{{\mathcal{T}}}
\newcommand{\notN}{{\overline{N}}}
\newcommand{\defi}{{\mathsf{def}}}
\newcommand{\rank}{{\mathsf{rank}}}
\newcommand{\ox}{\overline{x}}
\newcommand{\st}{{\mathbf{s}}}
\newtheorem{theorem}{Theorem}
\newtheorem{lemma}{Lemma}
\theoremstyle{definition}
\newtheorem{definition}{Definition}
\theoremstyle{remark}
\newtheorem{remark}{Remark}
\title{{\bf Discrete Preference Games in Heterogeneous Social Networks}\\Subverted Majorities and the Swing Player}
\author{Vincenzo Auletta, Ioannis Caragiannis, Diodato Ferraioli,\\ Clemente Galdi, and Giuseppe Persiano}
\date{}
\begin{document}

\maketitle

\begin{abstract}
We study {\em discrete preference} games in {\em heterogeneous} social networks.
These games model the interplay between a player's {\em private belief} 
and his/her {\em publicly stated opinion} (which could be different from the player's belief) 
as a strategic game in which the players' strategies are the opinions and 
the cost of an opinion in a state is a convex combination through a parameter $\alpha\in[0,1]$ of two factors:
the disagreement between the player's opinion and his/her internal belief 
and the number of neighbors whose opinions differ from the one of the player.
The parameter $\alpha$ models how stubborn a player is: players with large $\alpha$ change 
their opinion only if many neighbors disagree with his/her belief. 
We consider social networks that are {\em heterogeneous} in the sense that the parameter 
$\alpha$ can vary from player to player.

We ask whether decisions made by a social network are robust to internal pressure and  
investigate the phenomenon by which, because of {\em local} strategic decisions 
at the level of the players, 
the {\em global} majority can be {\em subverted}.
More precisely, we ask if it is possible that the belief shared by the majority of the 
players does not coincide with the opinion that is publicly announced by the majority 
of the players in an equilibrium state. 

Our main result is a characterization of the social networks that admit an initial belief assignment 
for which there exists a sequence of best response moves that reach an equilibrium
in which the initial majority is subverted.  
Our characterization is effective in the sense that can be tested efficiently and 
the initial belief assignment that can be subverted can be computed in time polynomial 
in the number of players.
Our result is actually stronger as 
we show that in each initial belief assignment that can be subverted, 
subversion is actually obtained in a very strong way:
there exists {\em one} player, the {\em swing} player, 
that changes his/her opinion to improve his/her utility and,
as a result of this best response move, every subsequent sequence of best response 
moves of the other players leads to an equilibrium in which majority is subverted.
In other words, it only takes one move of the swing player to lead the social network to 
a point of no return in which any rational move from any player leads to a subverted majority.
\end{abstract}

\section{Introduction}
Social networks play a crucial role in the way individuals form their opinions every day.
Just to name few examples, 
a recruiter interviewing a job-seeker can be influenced by the opinion that coworkers expressed on 
Facebook or LinkedIn;
a tourist looking for a restaurant can be influenced on the opinion read on Yelp or TripAdvisor;
a traveler looking for a hotel is influenced by the opinion given by his friends on Booking.com.

Several models have been introduced in order to model how agent react to this influence
and how the opinions are formed.
A first natural model is given by majority dynamics: each agent has an initial
preference and at each time step a subset of players updates their
opinion to the one expressed by the majority of their neighbors. 
Another classical model has been proposed by DeGroot \cite{D74}, and later refined by Friedkin and Johnsen \cite{FJ90}.
The model assumes that each agent has a {\em private belief},
but the public {\em opinion} she eventually expresses can be different from her belief.
In particular, the opinion comes out from a repeated averaging between the
belief and the opinions of individuals with whom she has social relations.
A similar model, that is more suitable for the case in which beliefs and opinions are discrete,
assumes that each agent behaves strategically and aims to pick the most beneficial (or less costly)
opinion for her, where the benefit (or cost) depends both on her internal belief
and on the opinions of individuals with whom she has social relations.
This model has been recently adopted by Ferraioli {\em et al.}~\cite{fgvSAGT12} and by Chierichetti {\em et al.}~\cite{ckoEC13},
that considered the case of binary beliefs or opinions\footnote{Chierichetti {\em et al.}~\cite{ckoEC13}
also consider the case of multiple beliefs/opinions. Still, in this case multiple metrics can be considered 
for evaluating the distance among opinions, 
and different metrics embeds a very different behavior of people in social networks.}.

\paragraph*{Our setting.}
In this paper we consider binary {\em discrete preference games} with $n$ players.
A discrete preference game consists of a $n$-vertex undirected graph $G$ (the social network), 
{\em stubbornness} levels $\alpha_1,\ldots,\alpha_n\in (0,1)$ and {\em beliefs } 
$\blf(1),\ldots,\blf(n)\in\{0,1\}$.
The players are identified with the vertices of $G$ and each player $i$ has 
a {\em stubbornness} level $\alpha_i\in (0,1)$ and a {\em belief} $\blf(i)\in\{0,1\}$.
Player $i$ can choose to play {\em opinion } $\st(i)\in\{0,1\}$ and 
the cost $c_i(\st)$ of agent $i$ in state 
$\st=(\st(1), \ldots, \st(n))\in\{0,1\}^n$ is defined as 
\begin{equation}
 \label{eq:cost}
c_i(\st) = \alpha_i \cdot \left|\st(i) - \blf(i)\right|+\left(1-\alpha_i\right) \cdot \sum_{j \in N(i)}{\left|\st(i) - \st(j)\right|},
\end{equation}
where $N(i)$ is the set of neighbors of vertex $i$ in $G$ (i.e., friends in the social network). 
Note that the cost is the convex combination through $\alpha_i$ of two components that depend
on whether the opinion coincides with the belief and on the strategies of the neighbors, respectively.
The stubbornness level $\alpha_i$ measures the weight given by player $i$ to her own belief. 
Roughly speaking, high values of $\alpha_i$ are associated with players that need a lot of convincing from 
their friends to change their mind and adopt an opinion in contrast with their beliefs.

We consider the game starting in the {\em truthful} state in which $\st(i)=\blf(i)$ for all $i$ and 
then evolving through a set of sequential best response moves until an equilibrium is reached.
We define an {\em equilibrium} state to be a state $\st=(\st(1),\ldots,\st(n))$ for which
there is no player $i$ whose best response is to adopt strategy $1-\st(i)$. 
More precisely, $\st$ is an {\em equilibrium} if for all $i$
$$c_i(\st)\leq c_i((\st_{-i},1-\st(i))),$$
where we have used the standard game theoretic notation by which 
$(\st_{-i},a)$ denotes the vector 
$$(\st_{-i},a)=(\st(1),\ldots,\st(i-1),a,\st(i+1),\ldots,\st(n)).$$

Obviously, the strategic moves of the player affect the way the network works and evolves.
Hence, in order to get an insight about how we can improve the performance of the social network,
we need to understand at which extent the behavior of network's members may affect the system at large.
Is it possible that the {\em local} behavior of the players affect the {\em global} behavior of the network?
Does the social pressure felt by individual members of a social networks have any effects on the entire network?

In this paper we 
take the behavior of a network to be the majority of the opinions expressed by its members
once the network has reached an equilibrium state.
We ask whether it is possible that the majority of the opinions differ from the majority of the belief.
In other words, is it possible that majority can be subverted by social pressure?

\paragraph*{Our contribution.}
We say that a pair $(G,(\alpha_1,\ldots,\alpha_n))$ consisting of a graph $G$ with $n$ vertices 
(we assume $n$ odd so that majority is well defined)  
and of the sequence of stubbornness levels is {\em subvertable} if 
there exists a belief assignment $(\blf(1),\ldots,\blf(n))$ with a majority of $0$ 
and a sequence of best response moves that goes from the truthful state to an equilibrium state with 
a majority of $1$. We call such a belief assignment {\em subvertable}.

Our main contribution is a characterization of the subvertable pairs.  Roughly speaking, 
our characterization says that a pair is subvertable unless all players are {\em stubborn}. 
More precisely, 
consider vertex $x$ with $\blf_x=0$ and $d_0$ neighbors with opinion $0$ and $d_1$ neighbors with opinion $1$.  
Then the cost of opinion $0$ for $x$ is $(1-\alpha_x) d_1$ whereas
the cost of opinion $1$ is $\alpha_x+(1-\alpha_x) d_0$. Therefore $x$ has an incentive to 
declare opinion $1$ iff $d_1-d_0\geq a_x+1$, where $a_x=\lfloor\frac{\alpha_x}{1-\alpha_x}\rfloor$. 
Clearly, this cannot happen if the degree $d(x)=d_1+d_0$ of $x$ satisfies $d(x)\leq a_x$.
Suppose now the majority (that is at least $(n+1)/2$ vertices) have belief $0$ and thus $d_1\leq (n-1)/2$.  
If $d(x)\geq n-a_x-1$ then 
$$d_1-d_0 =  2d_1-d(x) \leq n-1 -d(x) \leq  a_x$$
and thus vertex $x$ does not have an incentive to declare opinion $1$. 
The same reasoning applies for vertices $x$ with $\blf_x=1$ in case majority is $1$. 
We have thus the following definition.
\begin{definition}[Stubborn vertex]
A vertex $x$ with degree $d(x)$ and stubbornness $\alpha_x$ is {\em stubborn} 
if 
$$a_x:=\left\lfloor{\frac{\alpha_x}{1-\alpha_x}}\right\rfloor \geq \min\left\{d(x), n-d(x)-1\right\}.$$
\end{definition}
Clearly, if all vertices are stubborn then majority cannot be subverted as no vertex has an incentive 
to play an opinion be different from the belief. 
The main result of this paper shows that 
\begin{center}
{\em if there exists at least one non-stubborn vertex then there exists a subvertable belief assignment.}
\end{center}
A possible interpretation of our result is that social networks are extremely vulnerable to social pressure
since there always exists a subvertable majority unless all vertices are stubborn and never change their
mind (in which case we do not have much of a social network).
This is particularly negative as an external adversary might be able to orchestrate the sequence of best 
response moves so to reach the state in which majority is subverted. 
In principle, though, this could be very difficult since there could be different sequences of best response moves that lead to different equilibria with different 
majorities and the adversary has to be very careful in scheduling the best response moves.
Our characterization instead proves that there is always one single {\em swing} player 
whose best response in the truthful state is to change her opinion and this leads the social
network to a state in which {\em any} sequence of best response moves leads to an equilibrium in which 
majority has been subverted. In other words, the adversary that wants to subvert the majority only has
to influence the swing player and then the system will evolve without any further intervention towards an
equilibrium in which majority is subverted.
More precisely, 
\begin{definition}
A vertex $u$ is said to be a \emph{swing} vertex for subvertable belief assignment 
$\blf$ with $\frac{n+1}{2}$ vertices with belief $0$ if 
\begin{enumerate}
\item $\blf(u)=0$;
\item $c_u(\blf)>c_u(\blf')$, where $\blf'=(\blf_{-u},1))$. 

    That is, in the truthful state, $u$'s best response is to play opinion $1$.

\item For every $x$ with $\blf(x)=1$, 
it holds that $c_x(\blf')\leq c_x(0,\blf'_{-x})$.

That is, after $u$'s best response no vertex with belief $1$ has an incentive to change her opinion.
\end{enumerate}
\end{definition}
Note that definition above does not imply that the majority at equilibrium consists of only $\frac{n+1}{2}$ vertices
with belief $1$ (the initial $\frac{n-1}{2}$ plus the swing vertex).
It may be indeed the case that other vertices with belief $0$ have an incentive to change their opinion after 
the swing vertex's best response move.
Still, the definition of swing vertex assures that, after her best response, the number of vertices with opinion $1$ 
is a majority and the size of this majority does not decrease.

Our main result can be improved as follows
\begin{center}
{\em if there exists at least one non-stubborn vertex then there exists a subvertable belief assignment with a swing vertex.}
\end{center}

It is natural to ask whether the characterization can be strengthened to take into account
strong majorities (that is, majorities of size at least $(1+\delta)\frac{n+1}{2}$ for some $0<\delta<1$).
That is, to characterize the pairs
(consisting of a social network and stubbornness levels) that admit at least a subvertable strong majority.  
We prove that no such characterization can be given by showing that
there exists $\delta_{\text{max}}\approx 0,85$ such that for all $0<\delta<\delta_{\text{max}}$ 
it is NP-hard to decide whether a given $G$ and given stubbornness levels $\alpha_1,\ldots,\alpha_n$ 
admit a subvertable majority of size at least $(1+\delta)\frac{n+1}{2}$.

\paragraph*{Previous work.}
Our work is strictly related with a line of work in social sciences that
aims to understand how opinions are formed and expressed in a social context.
A classical simple model in this context has been proposed by Friedkin and
Johnsen \cite{FJ90} (see also \cite{D74}). Its main assumption is that each individual has a
private initial belief and that the opinion she eventually expresses is the result
of a repeated averaging between her initial belief and the opinions expressed by
other individuals with whom she has social relations. 
The recent work of Bindel et al.~\cite{BKO11} assumes that initial beliefs and opinions belong to [0, 1] and interprets
the repeated averaging process as a best-response play in a naturally defined
game that leads to a unique equilibrium.

Discrete belief and opinions have been first considered in ~\cite{fgvSAGT12} that studied 
rate of convergence of the game under different dynamics  
and in \cite{ckoEC13} that were mainly interested in the price of stability and price of anarchy of the games.
In a previous paper~\cite{acfgpWINE15}, the authors have studied subvertable majorities
for the {\em majority} dynamics in which players adopt the majority of the opinions expressed by 
the neighbors and uses their private belief only as a tie breaker.
This dynamics corresponds to the special case of {\em homogeneous } networks in which for all $x$, 
$\alpha_x=\alpha$ for some $\alpha<1/2$.

\paragraph*{Notation.}
For subsets $A,B\subseteq V$ of the vertices of $G$ we denote by 
$W(A,B)$ the number of edges with one endpoint in $A$ and the other in $B$.
If $A=\{x\}$ is a singleton, we will simply write $W(x,B)$; similarly for $B$.
Thus, for vertices $x,y$, $W(x,y)=1$ if and only if $x$ and $y$ are adjacent.

\section{Definitions and Technical Overview}
In this section we introduce the concepts of a bisection and of a good bisection and give an 
overview of the proof.

\paragraph*{Good bisections yield subvertable belief assignments.}
A {\em bisection} $\S=(S,\notS)$ of a graph $G$ with an odd number $n$ of vertices is 
a partition of the vertices of $G$ into two sets $S$ and $\notS$ of cardinality $\frac{n+1}{2}$ and 
$\frac{n-1}{2}$, respectively.
We define the {\em deficiency} $\defi_\S(x)$ of a vertex $x$ with respect to bisection $\S=(S,\notS)$ 
as follows:
$$\defi_\S(x)=\begin{cases}
	W(x,S)-W(x,\notS), & \text{if } x\in S;\\
	W(x,\notS)-W(x,S), & \text{if } x\in \notS.\\
	\end{cases}
$$
We say that a bisection $\S=(S,\notS)$ is \emph{good} if
\begin{enumerate}
\item for every $x \in S$, $\defi_\S(x) \geq -a_x$;
\item there is $u \in S$ with $\defi_\S(u) \geq a_u + 1$.
\end{enumerate}
Vertices $u\in S$ with  $\defi_\S(u) \geq a_u + 1$ 
are called the \emph{good vertices} of $\S$ and
vertices $y\in S$ with $\defi_\S(y)<-a_y$ are called the {\em obstructions} of $\S$\footnote{%
We remind the reader that, for vertex $x$, we set $a_x=\left\lfloor\frac{\alpha_x}{1-\alpha_x}\right\rfloor$.
}.
Next lemma proves that if $G$ has a good bisection then one can easily construct 
a subvertable belief assignment for $G$.
\begin{lemma}
\label{lem:good_bisection}
Let $\S=(S,\notS)$ be a good bisection for graph $G$ and let $u$ be one of its good vertices.
Then $G$ admits a subvertable belief assignment $\blf$ such that $u$ is a swing vertex for $\blf$.
\end{lemma}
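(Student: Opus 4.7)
The plan is to construct the subvertable assignment directly from the good bisection by setting $\blf(x)=1$ for every $x \in S \setminus \{u\}$ and $\blf(x)=0$ for every $x \in \notS \cup \{u\}$. Since $|\notS \cup \{u\}|=\frac{n+1}{2}$, this has the required belief majority of $0$. The intuition is that the good bisection identifies $S$ as the target opinion-$1$ cluster in the final equilibrium; placing the opposing belief $0$ on the single vertex $u \in S$ makes $u$ the swing vertex, since the inequality $\defi_\S(u) \geq a_u+1$ gives $u$ a surplus of neighbors inside $S$, all of which carry belief $1$, providing enough opinion-$1$ influence in the truthful state to make switching attractive for $u$.

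The three conditions defining a swing vertex are then verified by direct computation. Condition (1) is immediate. For Condition (2), in the truthful state $S \setminus \{u\}$ plays $1$ and $\notS$ plays $0$, so $c_u(\blf)=(1-\alpha_u)W(u,S)$ and $c_u(\blf_{-u},1)=\alpha_u+(1-\alpha_u)W(u,\notS)$; the strict inequality $c_u(\blf) > c_u(\blf_{-u},1)$ rearranges to $\defi_\S(u) > \alpha_u/(1-\alpha_u)$, equivalently $\defi_\S(u) \geq a_u+1$ since $\defi_\S(u)$ is an integer, which is precisely the good-vertex hypothesis. For Condition (3), the state $\blf'=(\blf_{-u},1)$ has all of $S$ playing $1$ and all of $\notS$ playing $0$; for each $x \in S \setminus \{u\}$ the inequality $c_x(\blf') \leq c_x(0,\blf'_{-x})$ rearranges analogously to $\defi_\S(x) \geq -a_x$, which is exactly the first good-bisection condition.

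To conclude that $\blf$ is subvertable, I would start from $\blf'$ (reached from the truthful state by $u$'s best response) and prove inductively that the invariant \emph{every vertex of $S$ plays opinion $1$} is maintained throughout every subsequent best-response sequence. Moves by vertices in $\notS$ preserve the invariant trivially, so the only potentially harmful case is that of some $y \in S$ switching from $1$ to $0$. If $T \subseteq \notS$ denotes the current set of $\notS$-vertices playing $1$, then the strict preference of $y$ for $0$ rearranges to $-\defi_\S(y) - 2W(y,T) > \alpha_y/(1-\alpha_y)$ for $y \in S \setminus \{u\}$ (contradicting $\defi_\S(y) \geq -a_y$) and to $\defi_\S(u) + 2W(u,T) < \alpha_u/(1-\alpha_u)$ for $y=u$ (contradicting $\defi_\S(u) \geq a_u+1$). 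Hence the invariant is preserved, and since the game is an ordinal potential game with potential $\Phi(\st)=\sum_i \frac{\alpha_i}{1-\alpha_i}|\st(i)-\blf(i)| + \sum_{(i,j) \in E}|\st(i)-\st(j)|$, best-response dynamics converges to an equilibrium in which all of $S$ still plays $1$, so a majority-$1$ equilibrium is reached. The main obstacle is the invariant step: reachable states can contain arbitrary subsets $T \subseteq \notS$ playing $1$, but the monotonicity principle (extra opinion-$1$ neighbors can only strengthen preference for opinion $1$) makes the cushions of the good-bisection conditions uniformly sufficient.
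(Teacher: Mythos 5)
Your proposal is correct and uses the same construction and core calculations as the paper: define $\blf(x)=1$ on $S\setminus\{u\}$ and $\blf(x)=0$ on $\notS\cup\{u\}$, then verify the three swing-vertex conditions by the same cost comparisons. The paper stops after establishing conditions (1)--(3) and treats the ``subvertable'' conclusion as following from the remark accompanying the swing-vertex definition, whereas you spell out the remaining step explicitly: the monotone invariant that $S$ keeps playing $1$ after $u$'s move, together with the ordinal potential $\Phi(\st)=\sum_i \frac{\alpha_i}{1-\alpha_i}|\st(i)-\blf(i)| + \sum_{(i,j) \in E}|\st(i)-\st(j)|$ guaranteeing convergence. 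This is a welcome clarification of a point the paper leaves implicit, but it is not a different proof route.
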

\begin{proof}
Consider the belief assignment $\blf$ such that $\blf(x) = 1$ for every $x \in S\setminus\{u\}$
and $\blf(x) = 0$ for every $x\in\notS\cup\{u\}$.
Thus, in the truthful profile $\blf$ there is a majority of vertices with opinion $0$.
 
Now, consider vertex $u$. 
Since $\S$ is good and $u$ is a good vertex for $\S$,
$$\defi_\S(u) = W(u,S) - W(u,\notS) \geq a_u + 1 > \frac{\alpha_u}{1-\alpha_u}.$$
The cost of $u$ in the truthful state $\blf$ is $c_u(\blf)=(1-\alpha_u)W(u,S)$; 
if $u$ plays opinion $1$ instead the cost is $c_u((\blf_{-u},1))=\alpha_u+(1-\alpha_u)W(u,\notS)$.
It follows that $c_u(\blf)-c_u((\blf_{-u},1))=(1-\alpha_u)\defi_\S(u)-\alpha_u>0$.
Then it is a best-response for $u$ to adopt opinion $1$.

Let $\blf' = (1,\blf_{-u})$, i.e., the profile reached after the best-response of $u$.
Note that in $\blf'$ there is a majority of vertices with opinion $1$.
We prove that no vertex $x$ with opinion $1$ (that is, no vertex $x \in S$)
has an incentive to change her opinion, from which we can conclude that
$\blf$ is a subvertable belief assignment and $u$ is a swing vertex for $\blf$.
 
This is obvious for $x = u$. 
Since $\S$ is good, then, 
for every $x\in S$, 
$$\defi_\S(x)=W(x,S)-W(x,\notS)\geq -a_x\geq -\frac{\alpha_x}{1-\alpha_x}.$$
As for $x \neq u$, the cost of $x$
in state $\blf'=(\blf_{-u},1)$ is $c_x(\blf')=(1-\alpha_x)W(x,\notS)$
whereas the cost of $x$ in $\blf''=(\blf'_{-x},0)$ is $c_x(\blf'')=\alpha_x + (1-\alpha_x)W(x,S)$.
It follows that 
$c_x(\blf'')-c_x(\blf')=(1-\alpha_x)\defi_\S(x)+\alpha_x\geq 0$
and, thus, $x$ has no incentive to adopt opinion $0$.
\end{proof}

\paragraph*{Minimal bisections.}
The technical core of our proof is the construction of a good bisection starting from a bisection 
$\S$ of minimal potential $\Phi$. 
We define the {\em potential} $\Phi$ of a bisection $(S, \notS)$ as
$$\Phi(S,\notS) = W(S,\notS) + \frac{1}{2}\left(\sum_{x \in S} a_x - \sum_{y \in \notS} a_y\right).$$
We say that a bisection $\S$ has {\em $k$-minimal} potential if $\S$ minimizes the potential among all the
bisections that can be obtained from $\S$ by swapping at most $k$ vertices between $S$ and $\notS$. 
That is, $\S$ has {\em $k$-minimal} potential if, for all $A \subseteq S$ and for all $B \subseteq\notS$, 
with $1\leq |A|=|B|\leq k$,
$$\Phi(S,\notS)\leq \Phi(S \setminus A \cup B,\notS \setminus B \cup A).$$
We will simply write that $\S$ has minimal potential whenever $\S$ has $1$-minimal potential.

Next lemmas prove some useful properties of minimal bisections.
\begin{lemma}
\label{lem:minimality}
Let $\S=(S,\notS)$  be a bisection of minimal potential. Then 
for all $x\in S$ and $y\in\notS$, 
$$\defi_\S(x)+\defi_\S(y)+2 W(x,y)\geq a_x - a_y.$$
\end{lemma}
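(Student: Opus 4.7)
The natural plan is to apply the definition of minimality directly to the single swap that exchanges $x$ and $y$, since $|A|=|B|=1$ is allowed. Let $\S' = (S', \notS')$ with $S' = (S \setminus \{x\}) \cup \{y\}$ and $\notS' = (\notS \setminus \{y\}) \cup \{x\}$. Minimality of $\S$ gives $\Phi(\S') - \Phi(\S) \geq 0$, and the claim will fall out once we write this difference in terms of deficiencies.

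The potential splits into two parts, and each part is easy to track across the swap. First, the $a$-term: the quantity $\frac{1}{2}(\sum_{z\in S}a_z - \sum_{z\in\nS}a_z)$ changes by $a_y - a_x$, because $x$ is removed from $S$ and added to $\nS$ while $y$ is removed from $\nS$ and added to $S$, yielding a net change of $2(a_y - a_x)$ before the $\frac{1}{2}$ factor. Second, the cut-size term $W(S,\nS)$: I would account edge by edge. Edges with neither endpoint in $\{x,y\}$ are unaffected. Edges from $x$ to $S\setminus\{x\}$ move from uncut to cut ($+W(x,S)$); edges from $x$ to $\nS\setminus\{y\}$ move from cut to uncut ($-(W(x,\nS)-W(x,y))$). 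Symmetrically for $y$: $+W(y,\nS)$ and $-(W(y,S)-W(x,y))$. Adding these gives
$$W(S',\nS') - W(S,\nS) \;=\; \bigl(W(x,S)-W(x,\nS)\bigr) + \bigl(W(y,\nS)-W(y,S)\bigr) + 2\,W(x,y) \;=\; \defi_\S(x) + \defi_\S(y) + 2\,W(x,y).$$
Combining the two parts,
$$0 \;\leq\; \Phi(\S') - \Phi(\S) \;=\; \defi_\S(x) + \defi_\S(y) + 2\,W(x,y) + a_y - a_x,$$
which rearranges exactly to the claimed inequality.

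The only subtle point, and thus the place to be careful, is the treatment of the edge $(x,y)$ when it exists. Since $x\in S$ and $y\in\nS$ before the swap, and $x\in\nS'$ and $y\in S'$ after, this edge is cut in both bisections and contributes $0$ to the net change of $W(S,\nS)$. However, in the deficiency-based bookkeeping above, $W(x,y)$ is implicitly subtracted once inside $-W(x,\nS)$ (from the $x$-side accounting) and once inside $-W(y,S)$ (from the $y$-side accounting), so it must be added back twice, which is precisely where the $+2\,W(x,y)$ term originates. Once this is handled correctly, the rest of the argument is purely mechanical.
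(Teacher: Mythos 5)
Your proof is correct and takes essentially the same approach as the paper: swap the single pair $\{x,y\}$, invoke $1$-minimality, and expand $\Phi(\S')-\Phi(\S)$ in terms of deficiencies. The paper organizes the cut-size bookkeeping by decomposing $S$ and $\nS$ into $A\cup\{x\}$ and $B\cup\{y\}$ rather than by your edge-by-edge accounting, but the computation and the origin of the $+2W(x,y)$ term are identical.
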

\begin{proof}
Set $A=S\setminus\{x\}$, $B=\nS\setminus\{y\}$, 
$T=A\cup\{y\} \text{ and } \overline{T}=B\cup\{x\}$.
Note that $$\Phi(T,\overline{T}) = W(A,B)+W(x,A)+W(y,B)+W(x,y)+\frac{1}{2}\left(\sum_{u\in A} a_u - \sum_{v \in B} a_v + a_y - a_x\right)$$ and
$$\Phi(S,\nS) = W(A,B)+W(x,B)+W(y,A)+W(x,y)+\frac{1}{2}\left(\sum_{u\in A} a_u - \sum_{v \in B} a_v + a_x - a_y\right).$$
Since $\S$ has minimal potential we have
\begin{align*}
0 &\leq \Phi(T,\overline{T}) - \Phi(S,\nS) = W(x,A)+W(y,B) - W(x,B)-W(y,A) + a_y - a_x\\
  &=   W(x,S)-W(x,\nS)+W(y,\nS)-W(y,S)+2W(x,y) + a_y - a_x\\
  &=   \defi_\S(x)+\defi_\S(y)+2W(x,y) + a_y - a_x. \qedhere
\end{align*}
\end{proof}

\paragraph*{Swapping vertices.}
To turn a minimal bisection $\S$ into a good bisection $\T=(T,\notT)$, we need at least one vertex in $T$ 
with high deficiency. One way to increase the deficiency of a vertex $u\in S$ is to move vertices
that are not adjacent to $u$ away from $S$ 
and to bring the same number of vertices that are adjacent to $u$ into $S$.
We define the {\em rank} of a vertex $u$ with respect to bisection $\S$ as
$$
 \rank_\S(u)= \left\lceil\frac{a_u+1-\defi_\S(u)}{2}\right\rceil.
$$
Note that a vertex $u$ of $\rank_\S(u)$ has deficiency $\defi_\S(u)$ such that
\begin{equation}
\label{eq:def_rank}
a_u - 2\rank_\S(u) +1 \leq \defi_\S(u)\leq a_u-2\rank_\S(u)+2.
\end{equation}
It is not hard to see that the rank is exactly the number of vertices that need to be moved. We next 
formalize the notion of swapping of vertices and prove that it is always possible to increase the deficiency 
of a non-stubborn vertex $x$ to $a_x$. 

Given a bisection $\S = (S, \notS)$ and a vertex $u$,
a \emph{$u$-pair} for $\S$ is a pair of sets $(A_u,B_u)$ such that:
\begin{itemize}
\item if $u\in S$, then $A_u\subseteq S\cap\notN(u)$ and $B_u\subseteq\notS\cap N(u)$ with $|A_u|=|B_u|=\rank_\S(u)$;
\item if $u\in\notS$, then $A_u \subseteq S \cap N(u)$ and $B_u \subseteq \notS \cap \notN(u)$ with $|A_u|=\rank_\S(u)$ and $|B_u|=\rank_\S(u)-1$.
\end{itemize}
The bisection $\T$ \emph{associated} with the $u$-pair $(A_u,B_u)$ for $\S$ is defined as
\begin{itemize}
\item if $u\in S$,    $\T=(S\setminus A_u\cup B_u,\notS\setminus B_u\cup A_u)$;
\item if $u\in\notS$, $\T=(\notS\setminus B_u\cup A_u,S\setminus A_u\cup B_u)$.
\end{itemize}
The next lemma shows that a $u$ is a good vertex in the bisection associated with a $u$-pair.
\begin{lemma}
\label{lem:goodvertex}
For each bisection $\S$, let $u$ be a vertex of the graph, $(A_u, B_u)$ a $u$-pair for $\S$, and $\T$ the bisection associated to $(A_u, B_u)$.
Then $\defi_\T(u)\geq a_u+1$.
\end{lemma}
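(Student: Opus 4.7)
The plan is a direct calculation: I will show that the swap associated with a $u$-pair increases $u$'s deficiency by exactly $2\rank_\S(u)$, and then appeal to the definition of rank to obtain the bound $a_u+1$.

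I would split into the two cases prescribed by the definition of a $u$-pair. If $u \in S$, then $u$ remains on the majority side $T = S\setminus A_u \cup B_u$ (since $u\notin A_u$). The asymmetry in the definition of the $u$-pair is tailored precisely so that every edge between $u$ and $A_u$ is absent (since $A_u \subseteq \notN(u)$) while every edge between $u$ and $B_u$ is present (since $B_u \subseteq N(u)$). Counting edges from $u$ to either side of $\T$, one gets $W(u,T) = W(u,S) + \rank_\S(u)$ and $W(u,\notT) = W(u,\notS) - \rank_\S(u)$, so that $\defi_\T(u) = \defi_\S(u) + 2\rank_\S(u)$.

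If instead $u\in\notS$, then $u$ migrates to the new majority side $T = \notS\setminus B_u\cup A_u$, where now $|A_u| = \rank_\S(u)$ and $|B_u| = \rank_\S(u)-1$ are chosen so that $|T|=(n+1)/2$ after $u$ changes side. Here $A_u \subseteq N(u)$ contributes $\rank_\S(u)$ new neighbors of $u$ to $T$, while $B_u \subseteq \notN(u)$ removes no edges of $u$, giving $W(u,T) = W(u,\notS) + \rank_\S(u)$; the symmetric computation yields $W(u,\notT) = W(u,S) - \rank_\S(u)$. Using the side-dependent definition of deficiency (now $\defi_\T(u) = W(u,T) - W(u,\notT)$ because $u\in T$, and originally $\defi_\S(u) = W(u,\notS) - W(u,S)$), one again obtains $\defi_\T(u) = \defi_\S(u) + 2\rank_\S(u)$.

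To finish, I apply the definition $\rank_\S(u) = \lceil (a_u+1-\defi_\S(u))/2\rceil$, which gives $2\rank_\S(u) \geq a_u+1-\defi_\S(u)$ and thus $\defi_\T(u) \geq a_u+1$. The main obstacle is not conceptual but a matter of careful bookkeeping: in the case $u\in\notS$ one must simultaneously track that $u$ crosses sides (so the sign convention in $\defi$ flips) and that $A_u$ and $B_u$ have \emph{different} cardinalities, yet verify that these two asymmetries conspire to produce exactly the same net gain $+2\rank_\S(u)$ as in the case $u\in S$.
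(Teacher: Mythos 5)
Your proposal is correct and follows essentially the same route as the paper's proof: split into the two cases $u\in S$ and $u\in\notS$, use the structural constraints of the $u$-pair ($A_u$ avoids $u$'s neighbors on one side while $B_u$ consists of them on the other, or vice versa) to compute that $\defi_\T(u)=\defi_\S(u)+2\rank_\S(u)$, and then unwind the definition of rank (the paper via the displayed inequality $\defi_\S(u)\geq a_u-2\rank_\S(u)+1$, you via $2\lceil z\rceil\geq 2z$) to conclude $\defi_\T(u)\geq a_u+1$. The bookkeeping in the $u\in\notS$ case that you flagged as the main obstacle is handled correctly.
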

\begin{proof}
Denote $\rank_\S(u)$ by $\ell$ and thus $\defi_\S(u) \geq a_u -2\ell +1$.
If $u \in S$, then, by definition of $u$-pair, $W(u, A_u) = 0$ and $W(u,B_u) = \ell$.
Hence,
\begin{align*}
 \defi_\T(u) & = W(u,T) - W(u, \notT)\\
 & = W(u,S) - W(u, \notS) - 2W(u,A_u) + 2W(u, B_u)\\
 & = \defi_\S(u) + 2\ell \geq a_u +1.
\end{align*}
 If $u \in \notS$, then, by definition of $u$-pair, $W(u, A_u) = \ell$ and $W(u,B_u) = 0$.
Hence,
\begin{align*}
 \defi_\T(u) & = W(u,T) - W(u, \notT)\\
 & = W(u,\notS) - W(u, S) + 2W(u,A_u) - 2W(u, B_u)\\
 & = \defi_\S(u) + 2\ell \geq a_u +1. \qedhere
\end{align*}
\end{proof}

Next lemma proves that, for every bisection $\S$ and every vertex $u$, 
a $u$-pair for $\S$ exists if and only if vertex $u$ is non-stubborn.
\begin{lemma}\label{lem:construction}
For every bisection $\S=(S,\notS)$ and every vertex $u$, a $u$-pair for $\S$ exists if and only if $u$ is non-stubborn.
\end{lemma}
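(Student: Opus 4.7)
The plan is to reduce the existence of a $u$-pair to two cardinality inequalities in each of the two cases ($u\in S$ or $u\in\notS$), and then to show via a parity argument that both inequalities collapse to non-stubbornness. Throughout, let $\ell=\rank_\S(u)$ and $k=|S\cap N(u)|$; the four sets $S\cap N(u)$, $S\cap\notN(u)$, $\notS\cap N(u)$, $\notS\cap\notN(u)$ partition $V\setminus\{u\}$, so their cardinalities are fully determined by $k$, $d(u)$ and $n$.

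First I handle the case $u\in S$. The definition of a $u$-pair asks for $|A_u|=|B_u|=\ell$ with $A_u\subseteq S\cap\notN(u)$ (of size $\tfrac{n-1}{2}-k$) and $B_u\subseteq\notS\cap N(u)$ (of size $d(u)-k$); since such subsets exist iff the enclosing sets are large enough, existence is equivalent to the pair of inequalities $\ell+k\leq\tfrac{n-1}{2}$ and $\ell+k\leq d(u)$. Using $\defi_\S(u)=2k-d(u)$ I rewrite
\[
\ell + k \;=\; \frac{(\defi_\S(u)+2\ell)+d(u)}{2}.
\]
By (\ref{eq:def_rank}), $\defi_\S(u)+2\ell\in\{a_u+1,a_u+2\}$; the crucial observation is that $\defi_\S(u)\equiv d(u)\pmod 2$, which forces exactly one of these two values and yields
\[
\ell + k \;=\; \left\lceil\frac{a_u+d(u)+1}{2}\right\rceil.
\]
Both inequalities then simplify via this ceiling to $a_u<d(u)$ and $a_u<n-d(u)-1$, whose conjunction is exactly non-stubbornness.

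The case $u\in\notS$ is symmetric. Now $|A_u|=\ell$, $|B_u|=\ell-1$, the relevant cardinalities are $|S\cap N(u)|=k$ and $|\notS\cap\notN(u)|=\tfrac{n-3}{2}-d(u)+k$, and the $u$-pair conditions become $\ell-k\leq 0$ and $\ell-k\leq\tfrac{n-1}{2}-d(u)$. Using $\defi_\S(u)=d(u)-2k$ and the same parity argument, I obtain
\[
\ell - k \;=\; \left\lceil\frac{a_u+1-d(u)}{2}\right\rceil,
\]
and the two inequalities once more reduce to $a_u<d(u)$ and $a_u<n-d(u)-1$.

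The main obstacle, and indeed the only step requiring care, is the parity observation. Without it, (\ref{eq:def_rank}) alone would only yield a bound like $\ell+k\leq\lceil(a_u+d(u)+2)/2\rceil$, which would force the strictly stronger condition $a_u\leq d(u)-2$ for sufficiency and leave a gap with the stubbornness threshold $a_u\geq d(u)$. Noting that $\defi_\S(u)$ and $d(u)$ share parity eliminates one of the two candidate values for $\defi_\S(u)+2\ell$ and closes this gap, delivering both directions of the characterization at once.
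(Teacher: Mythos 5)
Your proof is correct and lands on the same algebraic skeleton as the paper's, but closes both directions in a cleaner, more self-contained way. Both arguments reduce existence of a $u$-pair to the two cardinality bounds on $|S\cap\notN(u)\setminus\{u\}|$ and $|\notS\cap N(u)|$ (resp.\ the mirror pair), and both exploit the identities $W(u,S)=\tfrac{d(u)+\defi_\S(u)}{2}$, $W(u,\notS)=\tfrac{d(u)-\defi_\S(u)}{2}$. The paper then proves \emph{sufficiency} by combining the one-sided inequality $\defi_\S(u)\le a_u-2\ell+2$ from \eqref{eq:def_rank} with the degree bounds $a_u+1\le d(u)\le n-a_u-2$ and a final integrality step ($W(u,\notS)\ge\ell-\tfrac12$ rounds up to $\ge\ell$), and proves \emph{necessity} separately, by contradiction, via Lemma~\ref{lem:goodvertex} and Lemma~\ref{lem:defiForbibben}. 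You instead observe that $\defi_\S(u)\equiv d(u)\pmod 2$ selects exactly one of the two values $\{a_u+1,a_u+2\}$ permitted for $\defi_\S(u)+2\ell$ by \eqref{eq:def_rank}, which yields the exact formula $\ell+k=\lceil\tfrac{a_u+d(u)+1}{2}\rceil$ (and its $\notS$ analogue) and thereby turns the whole argument into a single chain of equivalences. This buys a proof that does not invoke Lemma~\ref{lem:defiForbibben} at all and delivers necessity for free; the paper's version stays modular and reuses lemmas needed elsewhere. One shared, unaddressed corner case: both your proof and the paper's implicitly assume $\rank_\S(u)\ge 0$ (resp.\ $\ge 1$ when $u\in\notS$), without which the required set sizes would be negative; this is harmless where the lemma is actually applied, since the warm-up cases of Algorithm~\ref{algo} guarantee $\defi_\S(x)\le a_x$ for all $x$, but you might flag it as an assumption.
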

\begin{proof}
Suppose that $u$ is a stubborn vertex and let $(A_u, B_u)$ a $u$-pair for $\S$.
By Lemma~\ref{lem:goodvertex}, in the bisection $\T$ associated with this $u$-pair,
$\defi_\T(u) \geq a_u + 1$. But this contradicts Lemma~\ref{lem:defiForbibben}.

Consider now a non-stubborn $u$. Let us denote $\rank_\S(u)$ by $\ell$. 
If $u \in S$, then it is sufficient to show that 
$$W(u,S)\leq \frac{n+1}{2}-\ell\qquad\text{and}\qquad W(u,\notS)\geq\ell.$$
Indeed, we have 
$$    W(u,S)=\frac{d(u)+\defi_\S(u)}{2}\qquad\text{and}\qquad 
  W(u,\notS)=\frac{d(u)-\defi_\S(u)}{2}.$$
Since $u$ is non-stubborn, we have $a_u+1\leq d(u)\leq n-a_u-2$.
Moreover, recall that $\defi_\S(u)\leq a_u-2\ell+2$.
Whence
\begin{align*}
 W(u,S) & \leq \frac{(n-a_u-2)+(a_u+2-2\cdot \ell)}{2}=
        \frac{n-2\cdot\ell}{2}<\frac{n+1}{2}-\ell\\
 W(u,\notS) & \geq \frac{(a_u+1)-(a_u+2-2\cdot\ell)}{2}=\ell - \frac{1}{2}.
\end{align*}
Since $W(u,\notS)$ is an integer, then it must be the case that $W(u,\notS) \geq \ell$.

If $u \in \notS$, we instead need to show that
$$W(u,S)\geq \ell\qquad\text{and}\qquad W(u,\notS)\leq \frac{n-1}{2} - \ell + 1.$$
A reasoning similar to the one above proves that these inequalities hold.
\end{proof}

Hence, if $\T$ is the bisection associated to $u$-pair $(A_u, B_u)$ for $\S$,
then $u$ is certainly a good vertex for $\T$. Thus, if $\T$ is not good then there is 
a vertex $y$ that is an obstruction for $\T$.
In the last case, we will say that the vertex $u$,
the $u$-pair $(A_u, B_u)$ and the bisection $\T$ are \emph{obstructed} by $y$.

\paragraph*{Stubborn vertices cannot be obstructions.}
Next lemma says that stubborn vertices are sort of neutral: they cannot be obstruction but they cannot be good
either.
\begin{lemma}
\label{lem:defiForbibben}
For every bisection $\S=(S,\notS)$ and every stubborn vertex $x\in S$ it holds that
$$-a_x \leq\defi_\S(x) \leq a_x.$$
\end{lemma}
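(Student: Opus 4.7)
The plan is to bound the deficiency in two different ways and then observe that the stubbornness hypothesis $a_x \ge \min\{d(x), n-d(x)-1\}$ forces one of the two bounds to be at most $a_x$ in absolute value.

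First I would use the definition $\defi_\S(x) = W(x,S) - W(x,\notS)$ for $x\in S$, together with the identity $W(x,S) + W(x,\notS) = d(x)$, to derive the trivial bound $|\defi_\S(x)| \le d(x)$. This already gives the claim whenever $a_x \ge d(x)$.

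Next, to cover the other case, I would exploit that $x$ itself lies in $S$, so $W(x,S) \le |S|-1 = (n-1)/2$, and symmetrically $W(x,\notS) \le |\notS| = (n-1)/2$. Rewriting $\defi_\S(x) = 2W(x,S) - d(x)$ and $-\defi_\S(x) = 2W(x,\notS) - d(x)$, each is at most $(n-1) - d(x) = n-d(x)-1$. Hence $|\defi_\S(x)| \le n-d(x)-1$, which yields the claim whenever $a_x \ge n-d(x)-1$.

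Combining the two cases, $|\defi_\S(x)| \le \min\{d(x), n-d(x)-1\} \le a_x$, where the last inequality is precisely the definition of $x$ being stubborn. There is no real obstacle here: the lemma is essentially a counting statement, and the only subtlety is to remember to use the refined bound $W(x,S) \le |S|-1$ rather than $|S|$ (since $x\in S$ does not contribute to $W(x,S)$), which is what makes the $n-d(x)-1$ bound tight.
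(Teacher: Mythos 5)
Your proof is correct and follows essentially the same counting argument as the paper's: both use $W(x,S)+W(x,\notS)=d(x)$ together with the bounds $W(x,S)\le |S|-1=\frac{n-1}{2}$ and $W(x,\notS)\le |\notS|=\frac{n-1}{2}$. The only cosmetic difference is that you establish both bounds $|\defi_\S(x)|\le d(x)$ and $|\defi_\S(x)|\le n-d(x)-1$ unconditionally and then invoke the stubbornness hypothesis once via the $\min$, whereas the paper splits into the two cases $a_x\ge d(x)$ and $d(x)\ge a_x\ge n-d(x)-1$ up front; the underlying inequalities are identical.
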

\begin{proof}
By definition, 
$x$ is stubborn if $a_x\geq \min\left\{d(x), n-d(x)-1\right\}$.
The statement is obvious if $a_x \geq d(x)$. Thus consider $x\in S$ with $d(x)\geq a_x\geq n-d(x)-1$.
From the definition of $\defi_\S(x)$ and from the fact that
$d(x)=W(x,S)+W(x,\notS)$
we obtain
$$W(x,S)=\frac{d(x)+\defi_\S(x)}{2}\geq \frac{n-a_x-1+\defi_\S(x)}{2}$$
Since $W(x,S)\leq\frac{n+1}{2}-1$, we obtain $\defi_\S(x)\leq a_x$.

Similarly, we have 
$$\frac{n-1}{2}\geq W(x,\notS)=\frac{d(x)-\defi_\S(x)}{2}\geq\frac{n-a_x-1-\defi_\S(x)}{2}$$
whence we obtain $\defi_\S(x)\geq -a_x$.
\end{proof}

\section{Main theorem}
Our main result is the following.
\begin{theorem}
Every graph $G$ with an odd number of vertices and at least one non-stubborn vertex
has a subvertable belief assignment $\blf$ and a swing vertex $u$ for $\blf$.
Moreover, $\blf$ and $u$ can computed in polynomial time.
\end{theorem}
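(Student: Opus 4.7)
By Lemma~\ref{lem:good_bisection}, the whole theorem reduces to exhibiting, in polynomial time, a good bisection of $G$ together with one of its good vertices: the recipe in that lemma then produces the subvertable assignment $\blf$ and the swing vertex. The plan is to start from a bisection $\S=(S,\notS)$ of $1$-minimal potential $\Phi$, which can be computed in polynomial time by local search on pair-swaps, since $\Phi$ is half-integer valued and polynomially bounded in the input size.

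The first step is to show that the $1$-minimality of $\S$ already enforces condition~1 of being good, i.e.\ $\defi_\S(x)\geq -a_x$ for every $x\in S$; this is the natural conclusion to extract from Lemma~\ref{lem:minimality}. Indeed, if some $x\in S$ violated this inequality, the bound $\defi_\S(x)+\defi_\S(y)+2W(x,y)\geq a_x-a_y$ would force every $y\in\notS$ to have a correspondingly large deficiency, and summing that lower bound on $\defi_\S(y)$ against the total edge count $\sum_{y\in\notS}W(x,y)=W(x,\notS)\leq d(x)$ should produce a contradiction. If, in addition, $\S$ already contains some $u\in S$ with $\defi_\S(u)\geq a_u+1$, we are done. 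Otherwise every $x\in S$ satisfies $\defi_\S(x)\leq a_x$, and the hypothesis that at least one vertex is non-stubborn must now be used to manufacture a good vertex.

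For this final step, invoke Lemma~\ref{lem:construction} on a non-stubborn vertex $u$ to obtain a $u$-pair $(A_u,B_u)$ for $\S$; Lemma~\ref{lem:goodvertex} guarantees that $u$ is a good vertex of the associated bisection $\T$. The main obstacle, and the real technical heart, is to show that no vertex of $T$ becomes an obstruction of $\T$. My plan is to pick $(A_u,B_u)$ extremally---among all $u$-pairs for $\S$, one of minimum rank and, subject to that, one whose swapped sets lexicographically minimize a secondary measure of ``danger'' (for instance, the sum of $-\defi_\S$ over the vertices of $A_u$)---and then suppose for contradiction that some $y\in T$ has $\defi_\T(y)<-a_y$. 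By Lemma~\ref{lem:defiForbibben} the obstruction $y$ must itself be non-stubborn, and using the explicit formula for $\defi_\T$ in terms of $\defi_\S$ and the edge counts $W(\cdot,A_u\cup B_u)$ together with Lemma~\ref{lem:minimality} applied between $y$ and well-chosen vertices of $A_u\cup B_u$, one should either build a pair-swap that decreases $\Phi$ below its minimum (a contradiction) or exhibit a different $u$-pair with strictly smaller lexicographic key (contradicting extremality). Since every step is polynomial in $n$, the resulting good bisection and good vertex yield $\blf$ and the swing vertex $u$ in polynomial time.
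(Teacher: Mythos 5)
Your reduction to finding a good bisection via Lemma~\ref{lem:good_bisection} matches the paper's strategy, but the execution has two substantive gaps.

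First, the claim that $1$-minimality of $\S$ already forces $\defi_\S(x)\geq -a_x$ for all $x\in S$ is false, and the sketched summation argument does not close. Consider $n=7$, all $a_x=0$, with $\{1,5,6,7\}$ a clique, vertex $2$ adjacent only to $1$, and $3,4$ isolated. Take $S=\{1,2,3,4\}$, $\notS=\{5,6,7\}$. Every pair-swap $(x,y)$ satisfies $\defi_\S(x)+\defi_\S(y)+2W(x,y)\geq 0$, so $\S$ is $1$-minimal, yet $\defi_\S(1)=1-3=-2<-a_1$. The paper is well aware of this: the entire warm-up case of Algorithm~\ref{algo} at Line~\ref{line:wuc1} exists precisely to handle a minimal bisection with $\defi_\S(u)\leq -a_u-1$, by moving $u$ across and re-arguing goodness from Lemma~\ref{lem:minimality}. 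Minimality constrains pair-swaps, not the ``move one vertex and flip the roles of $S$ and $\notS$'' operation that turns a large negative deficiency into a large positive one, so the two are genuinely independent facts.

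Second, the closing step---choose a $u$-pair extremal under a secondary lexicographic key and derive a contradiction from any obstruction---is stated at the level of a wish rather than a proof, and there is good reason to doubt it goes through as sketched. The paper's analysis shows that obstructions can arise even for a carefully chosen pair, and resolving them requires an iterated cascade of swaps and re-ranks: procedures {\tt MinRankInNotS} and {\tt MinRankInS} produce a chain of intermediate bisections $\S_1,\ldots,\S_7$ and pair bisections $\T_1,\ldots,\T_9$, with delicate case splits on whether the obstruction has negative or zero deficiency, whether it sits in $S$ or $\notS$, and whether certain adjacency patterns (Lemma~\ref{lem:conditions}) hold. Critically, several of these later steps (e.g.\ Lemma~\ref{lem:last}, Lemma~\ref{lem:s6minimal}, Lemma~\ref{lem:caseT3noproblem}) rely on $\S$ being $3$-minimal, not merely $1$-minimal, because they bound the potential change across a two- or three-vertex exchange. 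Starting from a $1$-minimal bisection, as you propose, does not license those comparisons. A single extremal choice of $u$-pair cannot substitute for this cascade: the obstructing vertex $y$ may be forced by the graph structure to lie in \emph{every} admissible $u$-pair's complement, so no better $u$-pair exists, and the only way forward is to move $y$ itself and repeat---which is exactly the multi-stage process the paper's algorithm implements.
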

We prove the theorem by exhibiting a 
polynomial-time algorithm (see Algorithm~\ref{algo}) that, given a graph $G$ with an odd number of vertices,
and at least one of which that is non-stubborn, 
returns a good bisection $\S$ and a good vertex $u$ for $\S$.
The theorem then follows from Lemma~\ref{lem:good_bisection}.

\begin{algorithm}[Hhtbp]
\DontPrintSemicolon
\SetKwFunction{MinRankInNotS}{MinRankInNotS}
\SetKwFunction{MinRankInS}{MinRankInS}
\KwIn{A graph $G$ with an odd number of vertices and at least one non-stubborn vertex}
\KwOut{A pair $(\S, u)$ where $\S$ is a good bisection and $u$ is its good vertex}
Compute a bisection $\S = (S, \notS)$ of $G$ of $3$-minimal potential \nllabel{line:bisection}\;
Let $M$ be the set of non-stubborn vertices of minimum rank in $\S$\;
\tcc{Warm-up cases}
\If{there is $u \in S$ with $\defi_\S(u) \leq -a_u - 1$}
  {Let $\T=(\notS \cup \{u\}, S \setminus \{u\})$\;
  \Return $(\T,u)$ \nllabel{line:wuc1}\;}
 \If{there is $u \in S$ with $\defi_\S(u) \geq a_u + 1$}
  {\Return $(\S,u)$ \nllabel{line:wuc2}\;}
 \If{there is $u \in \notS$ with $\defi_\S(u) \geq a_u+1$}
  {Pick $w \in S$ and let $\T=(\notS \cup \{w\}, S \setminus \{w\})$\;
  \Return $(\T,u)$ \nllabel{line:wuc3}\;}
 \If{there is $u \in S \cap M$ with $\defi_\S(u) < 0$}
  {Let $\S' = (\notS \cup \{u\}, S \setminus \{u\})$ \nllabel{line:Sp_neg_def}\;
  Pick $u$-pair $(A_u, B_u)$ for $\S'$ and let $\T$ be the associated bisection\;
  \Return $(\T,u)$  \nllabel{line:neg_def}\;}
 \tcc{There is a non-stubborn vertex of minimal rank in $\notS$}
 \If{$M \cap \notS \neq \emptyset$ \nllabel{line:second_part}}
  {\Return \MinRankInNotS{$\S$}\;}
 \tcc{All non-stubborn vertices of minimal rank are in $S$}
 \Else{\Return \MinRankInS{$\S$} \nllabel{line:third_part}\;}
 \caption{Algorithm for computing a good bisection and its good vertex}
 \label{algo}
\end{algorithm}

First, we note that the algorithm runs in time that is polynomial on the size of the input.
Indeed, a bisection of $3$-minimal potential at Line~\ref{line:bisection}
can be efficiently computed through a local search algorithm \cite{SY91SIAM},
and all remaining steps only involve computationally easy tasks.

Next we prove that the algorithm is correct; that is, it outputs $(\T, u)$
where $\T$ is a good bisection and $u$ is a good vertex for $\T$.
Recall that, by Lemma~\ref{lem:defiForbibben}, it is sufficient to check that
$\defi_\T(u) \geq a_u + 1$ and that non-stubborn vertices $x \in S$ have $\defi_\T(x) \geq -a_x$.

\subsection{Warm-up Cases}
In this section we show that if Algorithm~\ref{algo} stops before reaching 
Line~\ref{line:second_part} 
then it returns a good bisection and a good vertex.

\paragraph*{The algorithm stops at Line~\ref{line:wuc1}.}
In this case, $\T=(\notS\cup\{u\},S\setminus\{u\})$ and 
$\defi_\S(u)\leq -a_u-1$.
Since $u\in S$, we have that 
$\defi_\T(u)=-\defi_\S(u)\geq a_u+1$.
Moreover, for every non-stubborn $x \in T$, $x \neq u$,
$\defi_\T(x) = \defi_\S(x) + 2 W(x,u).$
By applying Lemma~\ref{lem:minimality} to $u\in S$ and $x\in \notS$, 
we obtain 
$\defi_\S(x)+2W(x,u)\geq -\defi_\S(u)+a_u-a_x.$
Therefore 
\begin{align*}
\defi_\T(x)&=\defi_\S(x) + 2 W(x,u)\\
        &\geq -\defi_\S(u)+a_u-a_x \\
        &\geq 2 a_u+1-a_x\geq -a_x.
\end{align*}

\paragraph*{The algorithm stops at Line~\ref{line:wuc2}.}
In this case all vertices $x\in S$ have $\defi_\S(x)\geq -a_x$
(for otherwise the algorithm would have stopped at Line~\ref{line:wuc1})
and $u$ is a good vertex.

\paragraph*{The algorithm stops at Line~\ref{line:wuc3}.}
In this case we have that $\T=(\notS\cup\{w\},S\setminus\{w\})$ and
it must be the case that $-a_x \leq \defi_\S(x) \leq a_x$ for every $x \in S$
(for otherwise the algorithm would have stopped at an earlier step)
and there is $u \in \notS$ with $\defi_\S(u)\geq a_u+1$.

Now, $\defi_\T(w) = - \defi_\S(w) \geq - a_w$ and 
$\defi_\T(u) = \defi_\S(u) + 2W(u,w) \geq a_u + 1$.
Moreover, for every non-stubborn vertex $y \in T$, we have $\defi_\T(y)=\defi_\S(y)+2W(y,w)$.
By  applying Lemma~\ref{lem:minimality} to 
$w\in S$ and to $y \in \notS$, we obtain 
that $\defi_\S(y)+2W(y,w)\geq -\defi_\S(w)+a_w-a_y$ and therefore we can write
$$
\defi_\T(y)=\defi_\S(y)+2W(y,w) \geq -\defi_\S(w)+a_w-a_y \geq -a_y,
$$
since we have $\defi_\S(w)\leq a_w$.

\paragraph*{The algorithm stops at Line~\ref{line:neg_def}.}
In this case we have that $\T=(\notS\cup \{u\}\setminus A_u \cup B_u, S \setminus \{u\} \cup A_u \setminus B_u)$ and 
let us denote $\rank_\S(u)$ by $\ell$.

Since $\T$ is the bisection associated to a $u$-pair,
then, by Lemma~\ref{lem:goodvertex}, $\defi_\T(u) \geq a_u + 1$.
Moreover, 
$\defi_{\S'}(u)=-\defi_{\S}(u)$ 
(where $\S'$ is defined at Line~\ref{line:Sp_neg_def}) and therefore
\begin{align*}
\rank_{\S'}(u) & = \left\lceil \frac{a_u+1-\defi_{\S'}(u)}{2}\right\rceil = 
                   \left\lceil \frac{a_u+1+\defi_{\S}(u)}{2}\right\rceil\\
   &=\left\lceil\frac{a_u+1-\defi_\S(u)}{2}\right\rceil+\defi_{\S}(u)=\rank_{\S}(u) + \defi_{\S}(u) \leq \ell - 1.
\end{align*}

For every non-stubborn $x\in T$ with $x\ne u$ we have that 
$$
  \defi_\T(x) = W(x,\notS) - W(x,S) + 2W(x, u) - 2W(x,A_u) + 2W(x,B_u).
$$

If $x\in \notS\setminus A_u$, then 
\begin{align*}
\defi_\T(x)&= \defi_\S(x)+2W(x,u) - 2W(x,A_u)+2W(x,B_u)\\
           &\geq \defi_\S(x) +2 W(x,u) - 2|A_u| = \defi_\S(x) + 2 W(x,u) - 2\rank_{\S'}(u)\\
           & \geq \defi_\S(x) + 2 W(x,u) - 2\ell + 2.
\end{align*}
By applying Lemma~\ref{lem:minimality} to 
$x \in \notS$ and $u \in S$, we obtain 
$\defi_\S(x) + 2W(x,u) \geq -\defi_\S(u) + a_u - a_x$ and, 
from \eqref{eq:def_rank}, we obtain
$\defi_\S(u)\leq a_x-2\ell+2$.
Hence $\defi_\T(x) \geq -a_x$.

Finally, let us consider $x\in B_u$. Then $x \in S$ and we have 
\begin{align*}
\defi_\T(x)&= -\defi_\S(x)+2W(x,u)-2W(x,A_u)+2W(x,B_u)\\
           &\geq -\defi_\S(x)-2W(x,A_u) \geq -\defi_\S(x)-2\ell+2.
\end{align*}
However, by hypothesis $u$ has minimum $\rank_\S(u)$ among the non-stubborn vertices 
and thus it must be the case that $\rank_\S(x)\geq\rank_\S(u)$ which implies that
 $\defi_\S(x)\leq a_x+2-2\ell$.
Therefore, $\defi_\T(x)\geq -a_x$.

\subsection{Properties of the obstructions}
Most of the work in the remaining cases will be devoted to deal with obstructions.
Therefore, before to proceed, we prove some their useful properties.
\begin{lemma}
\label{lem:obstruct_prop}
Let $u \in \notS$ be a vertex of minimum rank for the bisection $\S$ and let $y$ be an obstruction for $u$. Then $y \in \notS$.
Similarly, let $u \in S$ be a vertex of minimum rank for the bisection $\S$
and assume there is no vertex of minimum rank in $\notS$. If $y$ is an obstruction for $u$, then $y \in S$.
\end{lemma}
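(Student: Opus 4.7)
The plan is to argue both parts by contradiction using the minimality of $\rank_\S(u)=:\ell$ among non-stubborn vertices, after translating the obstruction inequality on $\defi_\T(y)$ into a lower bound on $\defi_\S(y)$, and hence into an upper bound on $\rank_\S(y)$. Throughout, note that obstructions are automatically non-stubborn: Lemma~\ref{lem:defiForbibben} applied to the bisection $\T$ shows that a stubborn vertex cannot satisfy $\defi_\T(y)<-a_y$, so any obstruction $y$ is non-stubborn and its rank is a meaningful comparison against $u$'s.

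For the first part, suppose $u\in\notS$ has minimum rank and assume toward contradiction that $y\in S$. Since $y\in T=\notS\setminus B_u\cup A_u$ and $y\in S$, we must have $y\in A_u$. A direct expansion of $W(y,T)$ and $W(y,\notT)$ using the definition of $\T$ gives
\[\defi_\T(y)=-\defi_\S(y)+2W(y,A_u)-2W(y,B_u),\]
so the obstruction bound $\defi_\T(y)<-a_y$ rearranges to $\defi_\S(y)>a_y+2W(y,A_u)-2W(y,B_u)$. Plugging in the trivial bounds $W(y,A_u)\geq 0$ and $W(y,B_u)\leq |B_u|=\ell-1$, this becomes $\defi_\S(y)\geq a_y-2\ell+3$ (by integrality), and substituting into the definition of the rank gives $\rank_\S(y)\leq\lceil(2\ell-2)/2\rceil=\ell-1$. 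Thus $y$ is a non-stubborn vertex of rank strictly less than $\ell$, contradicting the minimality of $u$'s rank.

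The second part is essentially symmetric. With $u\in S$ of minimum rank and $M\cap\notS=\emptyset$, assume $y\in\notS$. Then $y\in T=S\setminus A_u\cup B_u$ forces $y\in B_u$, and the analogous computation now produces
\[\defi_\T(y)=-\defi_\S(y)+2W(y,B_u)-2W(y,A_u).\]
Using $W(y,B_u)\geq 0$ and $W(y,A_u)\leq |A_u|=\ell$, the obstruction inequality yields $\defi_\S(y)\geq a_y-2\ell+1$ and hence $\rank_\S(y)\leq\lceil 2\ell/2\rceil=\ell$. Since $y$ is non-stubborn and $\ell$ is the global minimum rank, $y\in M$; combined with $y\in\notS$ this contradicts the hypothesis $M\cap\notS=\emptyset$.

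The main thing to notice is that the sizes $|A_u|$ and $|B_u|$ prescribed by the definition of a $u$-pair provide exactly the slack needed to convert the obstruction inequality into the desired rank bound — strictly less than $\ell$ in the first case (enough to contradict minimality of $u$), and at most $\ell$ in the second case (enough, in combination with $y\in\notS$, to contradict the stronger hypothesis $M\cap\notS=\emptyset$). No use of the $3$-minimality of $\S$ is required here; only the rank-minimality of $u$ among non-stubborn vertices is used.
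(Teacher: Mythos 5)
Your proof is correct and takes essentially the same approach as the paper's: both expand $\defi_\T(y)$ in terms of $\defi_\S(y)$, use $|A_u|=\ell$, $|B_u|=\ell-1$ to bound the cross-edge terms, and derive a contradiction with the minimality of $\rank_\S(u)$, merely running the implication chain in opposite directions (you deduce a rank bound from the obstruction inequality; the paper deduces a violation of the obstruction inequality from the rank bound). Your explicit observation that obstructions are automatically non-stubborn by Lemma~\ref{lem:defiForbibben}—so that comparing $\rank_\S(y)$ with $\ell$ is legitimate—is a point the paper leaves implicit but does rely on, and is a nice tightening.
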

\begin{proof}
Suppose $y$ is an obstruction for the bisection $\T$ associated to $u$-pair $(A_u,B_u)$ and let $\rank_\S(u) = \ell$.  

We proceed by contradiction.
Suppose first that $y \in S$ and $u \in \notS$.
Since $u$ is a vertex of minimum rank, it must be the case that 
$\rank_\S(y)\geq\ell$ and thus, by \eqref{eq:def_rank},  
$\defi_{\S}(y)\leq a_y + 2 - 2\ell$.
Then, it holds that
\begin{align*}
\defi_{\T}(y)&= W(y,\notS)-W(y,S)+2W(y,A_u)-2W(y,B_u)\\
           &\geq -\defi_\S(y)-2W(y,B_u)\\
           &\geq -\defi_\S(y)-2\cdot(\ell-1) \geq -a_y.
\end{align*}
This is a contradiction, because $y$ is an obstruction for $\T$ and $\defi_\T(y)<-a_y$.

Suppose now that $y \in \notS$ and $u \in S$.
Since there are no vertices of minimum rank in $\notS$, then
$\rank_\S(y)\geq\ell+1$ and thus, by \eqref{eq:def_rank},  
$\defi_{\S}(y)\leq a_y - 2\ell$.
Then, it holds that
\begin{align*}
\defi_{\T}(y)&= W(y,S)-W(y,\notS)-2W(y,A_u)+2W(y,B_u)\\
           &\geq -\defi_\S(y)-2W(y,A_u)\\
           &\geq -\defi_\S(y)-2\ell \geq -a_y.
\end{align*}
As above, this contradicts that $y$ is an obstruction for $\T$.
\end{proof}

\begin{lemma}
\label{lem:obstruction}
Let $\S$ be a bisection and let $u$ be a vertex of minimum rank in $\notS$.
Let $\T$ be the bisection associated with a $u$-pair $(A_u,B_u)$ for $\S$.
If vertex $y$ is an obstruction for $\T$, then $$\defi_\S(y)\leq -a_y+2\rank_\S(u)-3.$$
Moreover, for every non-stubborn $v\in S$ if
\begin{equation}
\label{eq:min_cond}
\defi_\S(v) + \defi_\S(y) + 2 W(v,y) \geq a_v - a_y, 
\end{equation}
then $v$ and $y$ are adjacent, $v$ has minimum rank
and
$$\defi_\S(y)\geq -a_y+2\rank_\S(u)-4.$$
\end{lemma}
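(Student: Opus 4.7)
The plan is to compute $\defi_\T(y)$ explicitly, then combine the resulting obstruction inequality with the hypothesis in two different ways. Write $\ell = \rank_\S(u)$. Since $u \in \notS$, the $u$-pair convention gives $|A_u| = \ell$, $|B_u| = \ell - 1$, $T = (\notS \setminus B_u) \cup A_u$, and $\notT = (S \setminus A_u) \cup B_u$. By Lemma~\ref{lem:obstruct_prop}, the obstruction $y$ lies in $\notS$; combined with $y \in T$, this forces $y \in \notS \setminus B_u$. Expanding $W(y, T)$ and $W(y, \notT)$ directly will give
$$\defi_\T(y) = \defi_\S(y) + 2\,W(y, A_u) - 2\,W(y, B_u).$$
Using $W(y, A_u) \geq 0$, $W(y, B_u) \leq \ell - 1$, and the obstruction bound $\defi_\T(y) \leq -a_y - 1$, I conclude the first claim $\defi_\S(y) \leq -a_y + 2\ell - 3$.

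For the second part, fix a non-stubborn $v \in S$ satisfying the hypothesis. Substituting the just-proved upper bound on $\defi_\S(y)$ into that hypothesis yields
$$a_v - a_y \leq \defi_\S(v) + (-a_y + 2\ell - 3) + 2\,W(v, y),$$
which rearranges to $a_v - \defi_\S(v) \leq 2\ell - 3 + 2\,W(v, y)$. Since $u$ has minimum rank among the non-stubborn vertices, every non-stubborn $v$ satisfies $\rank_\S(v) \geq \ell$, and \eqref{eq:def_rank} then gives $a_v - \defi_\S(v) \geq 2\ell - 2$. Comparing these two bounds forces $W(v, y) \geq 1/2$ and hence $W(v, y) = 1$, so $v$ and $y$ are adjacent. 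Reinserting $W(v, y) = 1$ sharpens the first inequality to $a_v - \defi_\S(v) \leq 2\ell - 1$, whence $\rank_\S(v) \leq \ell$; together with $\rank_\S(v) \geq \ell$ this pins $\rank_\S(v) = \ell$, so $v$ has minimum rank.

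The lower bound on $\defi_\S(y)$ then falls out of the hypothesis with $W(v,y) = 1$:
$$\defi_\S(y) \geq a_v - a_y - \defi_\S(v) - 2 = (a_v - \defi_\S(v)) - a_y - 2 \geq -a_y + 2\ell - 4,$$
using $a_v - \defi_\S(v) \geq 2\ell - 2$ once more. The main obstacle I anticipate is the sign bookkeeping in the $\defi_\T(y)$ computation, since $A_u$ and $B_u$ swap sides between $T$ and $\notT$ and $y \in \notS$ reverses the sign convention in the deficiency, together with the need to extract adjacency from a margin of exactly $1$ between the two derived bounds on $a_v - \defi_\S(v)$; the rest is straightforward algebra driven by the minimality of $\ell$.
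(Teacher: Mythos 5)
Your proof is correct and follows essentially the same route as the paper: compute $\defi_\T(y)=\defi_\S(y)+2W(y,A_u)-2W(y,B_u)$, bound it using $|B_u|=\ell-1$ together with the obstruction inequality to get the upper bound on $\defi_\S(y)$, then play that bound off against the minimum-rank inequality $\rank_\S(v)\ge\ell$ inside hypothesis \eqref{eq:min_cond} to force $W(v,y)=1$, $\rank_\S(v)=\ell$, and the lower bound on $\defi_\S(y)$. Your version packages the algebra around the quantity $a_v-\defi_\S(v)$ and thereby sidesteps the integrality step the paper uses for the final bound, but this is a cosmetic rearrangement of the same argument.
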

\begin{proof}
Let $\ell$ be the minimum rank with respect to the bisection $\S$. 

Since $u\in \notS$,
by Lemma~\ref{lem:obstruct_prop}, $y \in \notS$. Moreover, its deficiency is such that
$$
 -a_y > \defi_{\T}(y) = \defi_\S(y)+2W(y,A_u)-2W(y,B_u) \geq \defi_\S(y)-2|B_u| = \defi_\S(y) -2(\ell - 1),
$$
from which we obtain that 
$$
\defi_\S(y) \leq -a_y+2\ell-3.
$$

For every non-stubborn vertex $v\in S$ that satisfies \eqref{eq:min_cond},
we have that $\rank_\S(v) \geq \ell$, and thus, by \eqref{eq:def_rank}, $\defi_\S(v) < a_v - 2\ell + 3$. Then
\begin{align*}
-a_y +2\ell -3 \geq \defi_\S(y) & \geq -\defi_\S(v) - 2W(v,y) + a_v - a_y \\
            & > -a_y + 2\ell - 3 - 2W(v,y),
\end{align*}
from which we obtain that $W(v, y) = 1$ and $\defi_\S(y) \geq -a_y + 2\ell - 4$.
Moreover, from Lemma~\ref{lem:minimality},
\begin{align*}
\defi_\S(v) & \geq - \defi_\S(y) - 2W(v,y) + a_v - a_y \\
            & \geq a_y - 2\ell + 3 - 2 + a_v - a_y     \\
            & =    a_v - 2\ell + 1,
\end{align*}
and this implies that $\rank_\S(v) \leq \ell$. But, since $\ell$ is the minimum rank with respect to $\S$, then $\rank_\S(v) = \ell$. 
\end{proof}

\begin{lemma}
\label{lem:obstruction2}
Let $\S$ be a bisection and suppose that there is no vertex in $\notS$ with minimum rank.
Let $u$ be a vertex of minimum rank in $S$.
Let $\T$ be the bisection associated with a $u$-pair $(A_u,B_u)$ for $\S$.
Suppose there is an obstruction $y$ for $\T$ with $\defi_\S(y) < 0$ and $\rank_\S(y) > \rank_\S(u)$.
Let $\S' = (\notS \cup \{y\}, S \setminus \{y\})$.
Then $\rank_{\S'}(y) \leq \rank_\S(u)$.
\end{lemma}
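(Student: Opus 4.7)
The strategy is to use the obstruction inequality $\defi_\T(y) < -a_y$ to derive a strong upper bound on $\defi_\S(y)$, and then observe that moving $y$ from $S$ to $\notS$ flips the sign of its deficiency, so this bound translates directly into the desired bound on $\rank_{\S'}(y)$. Throughout, set $\ell = \rank_\S(u)$.

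The first step is to locate $y$. The second part of Lemma~\ref{lem:obstruct_prop} places $y$ in $S$, and since $y$ must also lie in $T = (S \setminus A_u) \cup B_u$ while $B_u \subseteq \notS$, we conclude $y \in S \setminus A_u$. This sorting of $y$ lets the ensuing edge-bookkeeping go through cleanly.

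The second step is a direct expansion of $\defi_\T(y) = W(y,T) - W(y,\notT)$ using $T = (S \setminus A_u) \cup B_u$ and $\notT = (\notS \setminus B_u) \cup A_u$, which collapses to
$$\defi_\T(y) = \defi_\S(y) - 2 W(y,A_u) + 2 W(y,B_u).$$
Combining this with $\defi_\T(y) < -a_y$ and the trivial bounds $W(y,A_u) \leq |A_u| = \ell$ and $W(y,B_u) \geq 0$, and then using integrality to pass from strict to weak inequality, yields
$$\defi_\S(y) \leq 2\ell - a_y - 1.$$

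The final step is to translate this bound into one on $\rank_{\S'}(y)$. Since $y \in S$ and $\S'$ is obtained by moving only $y$ to the other side, $\defi_{\S'}(y) = -\defi_\S(y)$. Plugging into the definition of rank,
$$\rank_{\S'}(y) = \left\lceil \frac{a_y + 1 + \defi_\S(y)}{2} \right\rceil \leq \left\lceil \frac{2\ell}{2} \right\rceil = \ell = \rank_\S(u),$$
which is the claim. The only real subtlety — hardly an obstacle — is correctly tracking which of the four disjoint pieces $A_u$, $B_u$, $S \setminus A_u$, $\notS \setminus B_u$ contains $y$ and each of its neighbors when expanding $\defi_\T(y)$. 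I note that the hypotheses $\defi_\S(y) < 0$ and $\rank_\S(y) > \rank_\S(u)$ do not appear to be used in this particular argument; they presumably provide context for the application of the lemma elsewhere in the paper.
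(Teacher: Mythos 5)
Your proof is correct and follows essentially the same route as the paper's: use the second part of Lemma~\ref{lem:obstruct_prop} to place $y$ in $S$, expand $\defi_\T(y)$ and bound it via $W(y,A_u)\leq\ell$, then translate the resulting bound on $\defi_\S(y)$ into a bound on $\rank_{\S'}(y)$ through $\defi_{\S'}(y)=-\defi_\S(y)$; the only cosmetic difference is that you apply integrality to $\defi_\S(y)$ before plugging into the ceiling, whereas the paper bounds the ceiling by $\frac{a_y+\defi_\S(y)}{2}+1$ and then substitutes $a_y+\defi_\S(y)<2\ell$. Your closing observation is also accurate: the paper's own proof derives $\defi_\S(y)\leq a_y-2\ell$ from $\rank_\S(y)>\ell$ but never uses it, and neither $\defi_\S(y)<0$ nor $\rank_\S(y)>\rank_\S(u)$ is needed for this lemma --- they only matter at the call sites to select which branch of the algorithm invokes it.
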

\begin{proof}
Let $\rank_\S(u) = \ell$.
Therefore $\rank_\S(y)>\ell$ and thus $\defi_\S(y)\leq a_y-2\ell$.
Moreover, since by Lemma~\ref{lem:obstruct_prop}, $\defi_{\S'}(y)=-\defi_{\S}(y)$ and thus
\begin{equation}\label{eq:lastcase}
 \rank_{\S'}(y) = \left\lceil \frac{a_y + 1 - \defi_{\S'}(y)}{2}\right\rceil = \left\lceil \frac{a_y + 1 + \defi_{\S}(y)}{2}\right\rceil \leq \frac{a_y + \defi_\S(y)}{2} + 1.
\end{equation}
Since $y$ is an obstruction for $\T$
$$
-a_y>\defi_{\T}(y)=\defi_\S(y)-2W(y,A_u)+2W(y,B_u) \geq \defi_\S(y) - 2\ell,
$$
where we used that $W(y, A_u) \leq |A_u| = \ell$.
Hence, $a_y + \defi_\S(y) < 2\ell$ and, by plugging this in \eqref{eq:lastcase},
we obtain $\rank_{\S'}(y) \leq\ell$. 
\end{proof}

\begin{lemma}
\label{lem:obstruction3}
Let $\S$ be a bisection and let $u$ be a vertex of minimum rank in $S$.
Let $\T$ be the bisection associated with a $u$-pair $(A_u,B_u)$ for $\S$.
Suppose there is an obstruction $y$ for $\T$ with $\defi_\S(y) \geq 0$.
Then $y$ has minimum rank $\ell=\left\lceil\frac{a_y+1}{2}\right\rceil$ and $\defi_\S(y) = 0$.
\end{lemma}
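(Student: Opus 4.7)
The plan is to derive two upper bounds on $\defi_\S(y)$ that almost collide, and use the hypothesis $\defi_\S(y)\geq 0$ together with integrality to pin down $\defi_\S(y)=0$. Let $\ell=\rank_\S(u)$ throughout.

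First I would locate $y$ and establish it is non-stubborn. By the second part of Lemma~\ref{lem:obstruct_prop} (applied in the regime where no vertex of $\notS$ has minimum rank), the obstruction $y$ must lie in $S$. Moreover, by Lemma~\ref{lem:defiForbibben} a stubborn vertex $y\in S$ satisfies $\defi_\T(y)\geq -a_y$ for any bisection $\T$ containing $y$ on the majority side, so $y$ cannot be an obstruction; hence $y$ is non-stubborn and $\rank_\S(y)$ is comparable with the minimum rank $\ell$.

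Next I would compute $\defi_\T(y)$ explicitly. Since $y\in S$ and $y$ ends up on the majority side $T$ of $\T$, we must have $y\in S\setminus A_u$, and a direct calculation gives
$$
\defi_\T(y)=\defi_\S(y)-2W(y,A_u)+2W(y,B_u).
$$
Using $W(y,A_u)\leq|A_u|=\ell$, $W(y,B_u)\geq 0$, and the integer obstruction inequality $\defi_\T(y)\leq -a_y-1$, this yields the first bound
$$
\defi_\S(y)\leq 2\ell-1-a_y. \qquad(\ast)
$$
On the other hand, since $y$ is non-stubborn and $\ell$ is the minimum rank with respect to $\S$, we have $\rank_\S(y)\geq\ell$, which by definition of rank and \eqref{eq:def_rank} gives the second bound
$$
\defi_\S(y)\leq a_y-2\ell+2. \qquad(\ast\ast)
$$

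Finally I would combine the two bounds. Adding $(\ast)$ and $(\ast\ast)$ yields $2\defi_\S(y)\leq 1$, so the hypothesis $\defi_\S(y)\geq 0$ together with integrality forces $\defi_\S(y)=0$. Plugging this back into $(\ast)$ and $(\ast\ast)$ gives $2\ell-2\leq a_y\leq 2\ell-1$, which in either parity implies $\lceil(a_y+1)/2\rceil=\ell$; since $\defi_\S(y)=0$, we conclude $\rank_\S(y)=\lceil(a_y+1)/2\rceil=\ell$, so $y$ has minimum rank, as claimed.

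There is no real technical obstacle here: the only delicate point is making sure the two upper bounds on $\defi_\S(y)$ are tight enough to sandwich it with the hypothesis $\defi_\S(y)\geq 0$, and this just requires being careful with the strict-vs-weak inequality in the obstruction condition and with the ceiling in the definition of rank.
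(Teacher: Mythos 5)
Your proof is correct and rests on the same two ingredients as the paper's: the lower bound $\defi_\T(y) \geq \defi_\S(y) - 2\ell$ combined with the obstruction inequality $\defi_\T(y) \leq -a_y-1$, and the upper bound $\defi_\S(y) \leq a_y - 2\ell + 2$ coming from $\rank_\S(y) \geq \ell$. The only difference is the final step: the paper first pins down $\ell = \bigl\lceil(a_y+1)/2\bigr\rceil$ by sandwiching $\ell$ and then splits on the parity of $a_y$ (re-invoking the obstruction bound to rule out $\defi_\S(y)=1$ in the odd case), whereas you add the two bounds to get $2\defi_\S(y) \leq 1$ and read off $\defi_\S(y)=0$ in one stroke --- a mild streamlining, not a different route. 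One remark: your appeal to Lemma~\ref{lem:obstruct_prop} to place $y$ in $S$ uses the ambient assumption that no vertex of $\notS$ has minimum rank, which is not among the stated hypotheses of Lemma~\ref{lem:obstruction3}; but the paper tacitly assumes it as well (the formula it writes for $\defi_\T(y)$ is valid only for $y \in S \setminus A_u$) and the assumption holds in every context where the lemma is invoked, so you are simply being more explicit than the source rather than introducing a gap.
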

\begin{proof}
Let $\rank_\S(u) = \ell$.
We start by observing that
\begin{align*}
  \defi_{\T}(y)&=    \defi_\S(y) - 2W(y,A_u) + 2W(y,B_u)\\
                &\geq \defi_\S(y) - 2\ell.
\end{align*}
Since $y$ is an obstruction for $\T$, it must be that $\defi_{\T}\leq -(a_y+1)$ and 
thus, since $\defi_\S(y)\geq 0$, we have that $\ell\geq\frac{a_y+1}{2}$.

On the other side, since $\rank_\S(y)\geq\ell$, we have that
\begin{equation}\label{eq:defiy}
0 \leq \defi_\S(y) \leq a_y + 2 - 2\ell,
\end{equation}
from which we obtain the $\frac{a_y + 1}{2} \leq \ell \leq \frac{a_y}{2} + 1$.
Since $\ell$ is an integer, it follows that $\ell = \left\lceil\frac{a_y+1}{2}\right\rceil$.
 
Now, if $a_y$ is even (and thus $\ell = \frac{a_y}{2} + 1$),
then, by \eqref{eq:defiy}, we obtain $\defi_\S(y) = 0$.
Moreover, since $0 \geq a_y-2\ell+1$, $y$ has minimum rank $\ell$.

If $a_y$ is odd (and thus $\ell = \frac{a_y+1}{2}$),
then, by \eqref{eq:defiy}, we obtain $0\leq\defi_\S(y)\leq 1$.
If $\defi_\S(y) = 1$ then 
 $$\defi_{\T}(y) = \defi_\S(y) - 2W(y,A_v) + 2W(y, B_v)\geq 1 - 2\ell = -a_y, $$
which contradicts the fact that $y$ is an obstruction for $\T$.
Hence $\defi_\S(y) = 0$ and, since $0 \geq a_y-2\ell+1$, $y$ has minimum rank. 
\end{proof}

\subsection{\texorpdfstring{There is a non-stubborn vertex of minimal rank in $\notS$}{There is a non-stubborn vertex of minimal rank in notS}}
Consider now the case that there is a non-stubborn vertex $u \in \notS$ of minimum rank.
We then execute the procedure {\MinRankInNotS} (see Algorithm~\ref{algo_notS}).
\begin{algorithm}[Hhtbp]
\DontPrintSemicolon
\SetKwProg{Fn}{}{\string:}{}%
\Fn{\MinRankInNotS{$\S$}}{
  \For{every $u \in \notS \cap M$ \nllabel{line_for_notS}} 
    {\For{every $v\in S \cap N(u) \cap M$}
      {Pick a $u$-pair $(A_u, B_u)$ for $\S$ with $v\in A_u$ and let $\T$ be the associated bisection \nllabel{line:u_pair_notS_if}\;
      \If{$\T$ is good}
        {\Return $(\T, u)$ \nllabel{line:case_notS_good_if}\;}}}
  Let $y$ be an obstruction for one of the bisections $\T$ defined at Line~\ref{line:u_pair_notS_if} \nllabel{line:obstruction_notS}\;
  \If{there is non-stubborn $v \in S$}
   {Pick a $v$-pair $(A_v, B_v)$ for $\S$ with $y \in B_v$ and let $\T$ be the associated bisection \nllabel{line:pair_notS}\;
    \Return $(\T, v)$ \nllabel{line:case_notS_notgood_notstubbornS}\;}
  Pick a vertex $w \in S \cap \notN(y)$ and let $\S' = (S \setminus \{w\} \cup \{y\}, \notS \setminus \{y\} \cup \{w\})$ \nllabel{line:Sp_notS}\;
  Pick a $y$-pair $(A_y, B_y)$ for $\S'$ and let $\T$ be the associated bisection \nllabel{line:y_pair_notS}\;
  \Return $(\T, y)$ \nllabel{line:case_notS_notgood_stubbornS}\;
}
 \caption{There is a non-stubborn vertex of minimal rank in $\notS$}
 \label{algo_notS}
\end{algorithm}
If {\MinRankInNotS} stops at Line~\ref{line:case_notS_good_if},
then clearly the returned bisection is good and, 
since the bisection is associated with a $u$-pair,
then, by Lemma~\ref{lem:goodvertex}, $u$ is a good vertex for it.

\paragraph{{\tt MinRankInNotS} stops at Line~\ref{line:case_notS_notgood_notstubbornS}.}
In this case, $y$ is an obstruction for a vertex $u\in\notS$ of minimum rank, and 
thus, by Lemma~\ref{lem:obstruct_prop}, $y$ belongs to $\notS$.
Moreover, there is at least one non-stubborn vertex in $S$.
Observe that, since $\S$ has minimal potential, from Lemma~\ref{lem:minimality}
it follows that for every non-stubborn vertex $x \in S$
$$
 \defi_\S(x) + \defi_\S(y) + 2W(x,y) \geq a_x - a_y.
$$
Thus, by Lemma~\ref{lem:obstruction},
$y$ and $v$ are adjacent,
where $v$ is the vertex considered at Line~\ref{line:pair_notS}.
Therefore it is possible to pick a $v$-pair $(A_v,B_v)$ with $y\in B_v$.

Since $\T$ is the bisection associated with a $v$-pair,
by Lemma~\ref{lem:goodvertex}, $\defi_\T(v) \geq a_v + 1$.
Thus, we only need to prove that $\defi_\T(x) \geq -a_x$ for every non-stubborn $x \in T$.
First note that, by Lemma~\ref{lem:obstruction}, $v$ has minimum rank and we denote 
$\rank_\S(v)$ by $\ell$.
Moreover, for all non-stubborn vertices $x\in T$ we have
$$\defi_\T(x)=W(x,S)-W(x,\notS)-2W(x,A_v)+2W(x,B_v).$$
For non-stubborn $x\in B_v$, we have $\defi_\S(x)=W(x,\notS)-W(x,S)$ and thus
$$\defi_\T(x)=-\defi_\S(x)-2W(x,A_v)+2W(x,B_v)\geq -\defi_\S(x)-2\ell$$
We next prove that $\defi_\S(x)\leq a_x-2\ell$ and thus $\defi_\T(x)\geq -a_x$.
Suppose by contradiction that $\defi_\S(x)\geq a_x-2\ell+1$.
Therefore, by \eqref{eq:def_rank}, $x$ is a non-stubborn vertex of $\notS$ of minimum rank  $\ell$ 
and thus the for-loop starting at Line~\ref{line_for_notS} has considered
a bisection $\T'$ associated with an $x$-pair $(A_x, B_x)$ for $\S$ with $v\in A_x$ that admitted an obstruction $y'$.
Note that, by Lemma~\ref{lem:obstruct_prop}, $y' \in \notS$.
Then
\begin{align*}
 \defi_{\T'}(y') & = W(y', \notS) - W(y', S) + 2W(y', A_x) - 2W(y', B_x)\\
		 & \geq \defi_\S(y') + 2W(y',v) -2|B_x|.
\end{align*}
Since $|B_x|=\ell-1$ and, by Lemma~\ref{lem:obstruction},
$\defi_\S(y')\geq -a_{y'}+2\ell-4$ and $W(y',v) = 1$, we have that
$\defi_{\T'}(y') \geq -a_{y'}$.
This is a contradiction, because $y'$ is an obstruction for $\T'$.

We conclude the proof by considering non-stubborn $x\in S\setminus A_v$, $x\ne v$.
For such a vertex we have 
$$\defi_\T(x)=\defi_\S(x)-2W(x,A_v)+2W(x,B_v)\geq \defi_\S(x)-2\ell+2$$
where we used that, according to Lemma~\ref{lem:obstruction}, $x$ and $y$ are adjacent and, by construction, $y\in B_v$.
It is thus sufficient that $\defi_\S(x)\geq -(a_x-2\ell+2)$.
Note that, by Lemma~\ref{lem:obstruction}, $\rank_\S(x)=\ell$ and thus minimal.
Hence, it must be the case that $\defi_\S(x) \geq 0$ (otherwise the algorithm stops at Line~\ref{line:neg_def})
and $\defi_\S(x)\leq a_x-2\ell+2$.
Thus $a_x-2\ell+2\geq 0$ and then $\defi_\S(x)\geq-(a_x-2\ell+2)$, as desired.

\paragraph{{\tt MinRankInNotS} stops at Line~\ref{line:case_notS_notgood_stubbornS}.}
Since the algorithm has not stopped before reaching this line, 
then in $\notS$ there is a non-stubborn vertex of minimal rank,
and every vertex of $S$ is stubborn.
Before proving that $\T$ is a good bisection, we show that
there exists at least one vertex $w \in S$ that is not adjacent 
to obstruction $y$ defined at Line~\ref{line:obstruction_notS} and 
thus $\T$ can be constructed.

We proceed by contradiction and assume that $y$ is adjacent 
to every vertex of $S$.
Recall that, by Lemma~\ref{lem:obstruct_prop}, $y \in \notS$.
Then, by applying Lemma~\ref{lem:minimality} to any $z\in S$ and to $y$
we obtain that
$$\defi_\S(y)\geq -\defi_\S(z)-2W(z,y)+a_z - a_y\geq -a_y-2,$$
where we used that $z$ is stubborn and thus, 
by Lemma~\ref{lem:defiForbibben}, $\defi_\S(z)\leq a_z$.
Remember that $y$ is an obstruction for $u$-pair $(A_u,B_u)$ where $u \in \notS$ 
is a vertex of minimum rank $\ell$. Let us denote by $\T'$ the bisection
associated with this $u$-pair. 
We have that
\begin{align*}
\defi_{\T'}(y)&=\defi_\S(y) + 2W(y,A_v) - 2W(y,B_v)\\
           &\geq -a_y-2 + 2W(y,A_v)-2W(y,B_v)\\
           &\geq -a_y-2+2W(y,A_v)-2(\ell-1)
\end{align*}
Since $y$ is adjacent to all vertices of $S$ then 
$W(y, A_v)=|A_v|= \ell$, and thus $\defi_{\T'}(y) \geq -a_y$.
This is a contradiction, as $y$ is an obstruction for $\T'$.
We have thus established that $y$ is not adjacent to all the vertices in $S$
and thus the algorithm can pick vertex $w$ at Line~\ref{line:Sp_notS}.

Since $\T$ is the bisection associated to a $y$-pair,
then, by Lemma~\ref{lem:goodvertex}, $\defi_\T(y) \geq a_y + 1$.
Thus, we only need to prove that $\defi_\T(x) \geq -a_x$ 
for every non-stubborn $x\in T$.
Consider the bisection $\S'$ defined at Line~\ref{line:Sp_notS}.
Note that $\defi_{\S'}(y) = - \defi_{\S}(y)$ and
$$\rank_{\S'}(y) = \left\lceil\frac{a_y+1-\defi_{\S'}(y)}{2}\right\rceil = \left\lceil\frac{a_y+1+\defi_{\S}(y)}{2}\right\rceil \leq \ell - 1,$$
where in the last inequality we used that, by Lemma~\ref{lem:obstruction}, $\defi_\S(y)\leq -a_y+2\ell-3$.

Let $(A_y, B_y)$ be the $y$-pair defined at Line~\ref{line:y_pair_notS}.
Since all vertices in $S$ are stubborn, then the only non-stubborn vertices in $T$
different from $y$ belong to $B_y \subseteq \notS \setminus \{y\} \cup \{w\}$.
For every such vertex $x$, we have that,
by minimality of $\ell$, $\defi_\S(x)\leq a_x-2\ell+2$,
$W(x,y) = 1$. Moreover, $W(x,w)\leq 1$ and $|A_y| = \rank_{\S'}(y)$.
Then
\begin{align*}
\defi_\T(x)&= W(x, S') - W(y, S') + 2W(x,B_y)-2W(x,A_y)\\
	   &= W(x,S)-W(x,\notS)+2W(x,y)-2W(x,w)+2W(x,B_y)-2W(x,A_y)\\
           &\geq -\defi_\S(x)-2|A_y| \geq -a_x.\qedhere
\end{align*}

\begin{remark}
\label{remark:notSmin}
 The procedure {\MinRankInNotS} correctly returns a pair $(\T, u)$,
 where $\T$ is a good bisection and $u$ is its good vertex,
 whenever the bisection $\S$ in input is a minimal bisection (not necessarily $3$-minimal)
 with a vertex in $\notS$ of minimum rank.
\end{remark}

We leverage on this property of {\MinRankInNotS} in the next section.

\subsection{\texorpdfstring{There is no non-stubborn vertex of minimal rank in $\notS$}{There is no non-stubborn vertex of minimal rank in notS}}
Finally, let us consider the case in which the algorithm invokes procedure {\MinRankInS} 
(described as Algorithm~\ref{algo_S}).
In this case, all non-stubborn vertices of minimum rank belong to $S$. 
Moreover, all such vertices have non-negative deficiency for otherwise the 
Algorithm would have stopped at Line~\ref{line:neg_def}.

\begin{algorithm}[Hhtbp]
\DontPrintSemicolon
\SetKwProg{Fn}{}{\string:}{}
\Fn{\MinRankInS{}}{
 \For{every vertex $u\in S\cap M$}
    {\For{every vertex $v\in\notS \cap N(u)$}
      {Pick a $u$-pair $(A_u, B_u)$ for $\S$ with $v\in B_u$ and let $\T$ be the associated bisection \nllabel{line:u_pair_S_if}\;
	\lIf{$\T$ is good}{\Return $(\T, u)$ \nllabel{line:case_S_good_if}}
	\If{there is an obstruction $y$ to $\T$ with $\defi_\S(y) < 0$ \nllabel{line:obstruction_S_if}}
	  {Let $\S_0 = (\notS \cup \{y\}, S \setminus \{y\})$ \nllabel{line:Sp_S_if}\;
	  Pick a $y$-pair $(A_y, B_y)$ for $\S_0$ and let $\T_0$ be the associated bisection \nllabel{line:y_pair_S_if}\;
	  \Return $(\T_0,y)$ \nllabel{line:case_S_notgood_if}\;}}}
\If{there is an obstruction $y$ to one bisection considered at Line~\ref{line:u_pair_S_if} with odd $a_y$}
  {Let $y$ be such an obstruction \nllabel{line:obstruction_y_alt}\;}
\Else{Let $y$ be an obstruction to one bisection considered at Line~\ref{line:u_pair_S_if}\nllabel{line:obstruction_y}\;}
Let $\S_1 = (\notS \cup \{y\}, S \setminus \{y\})$ \nllabel{line:S1}\;
Let $O = \left\{w \in \notS \cap \notN(y) \mid \defi_\S(w) = a_y - a_w + 1\right\}$\;
Pick a $y$-pair $(A_y, B_y)$ for $\S_1$ such that $|A_y \cap O| = \max \left\{|A_y|, |O|\right\}$ \nllabel{line:T1_pair}\;
Let $\T_1$ be the associated bisection \nllabel{line:T1}\;
 \lIf{$\T_1$ is good}{\Return $(\T_1, y)$ \nllabel{line:returnT1}}
 \If{there is $w \in \left(\notS \cup \{y\} \setminus A_y\right) \cap \notN(y)$ such that $\defi_\S(w) = a_y - a_w$ \nllabel{line:obstruction}}
  {Let $y_1$ be such a vertex \nllabel{line:obstruction_y1_alt}\;}
 \Else{Let $y_1$ be an obstruction $y_1$ to $\T_1$ \nllabel{line:obstruction_y1}\;}
 Let $\S_2 = (S_2, \notS_2) = (S \cup \{y_1\} \setminus \{y\}, \notS \cup\{y\} \setminus \{y_1\})$ \nllabel{line:S2}\;
 \If{there is a vertex $v$ with $\rank_{\S_2}(v)<\rank_{\S_2}(y)$}
  {Pick a $v$-pair $(A_v, B_v)$ for $\S_2$ and let $\T_2$ be the associated bisection \nllabel{line:T2}\;
  \lIf{$\T_2$ is good}{\Return $(\T_2, v)$ \nllabel{line:returnT2}}
  Pick an obstruction $y_2$ to $\T_2$ \nllabel{line:obstruction_y2}\;
  Let $\S_3 = (\notS_2 \cup \{y_2\}, S_2 \setminus \{y_2\})$ \nllabel{line:S3}\;
  Pick a $y_2$-pair $(A_{y_2}, B_{y_2})$ for $\S_3$ and let $\T_3$ be the associated bisection \nllabel{line:T3}\;
  \Return $(\T_3, y_2)$ \nllabel{line:returnT3}\;}
 Pick a $y$-pair $(A_{y}, B_{y})$ for $\S_2$ and let $\T_4$ be the associated bisection \nllabel{line:T4}\;
 \lIf{$\T_4$ is good}{\Return $(\T_4,y)$ \nllabel{line:returnT4}}
 \If{there is $w \in S \cup \{y_1\} \setminus \{y\}$ of minimum rank for $\S_2$ with $\defi_{\S_2}(w) < 0$}
  {Let $\S_4 = (\notS_2 \cup \{w\}, S_2 \setminus \{w\})$ \nllabel{line:Sp_neg_defS}\;
  Pick $w$-pair $(A_w, B_w)$ for $\S_4$ with $y \in A_w$ and let $\T_5$ be the associated bisection \nllabel{line:pair_neg_defS}\;
  \Return $(\T_5,w)$  \nllabel{line:neg_defS}\;}
 Pick an obstruction $y_4$ to $\T_4$ \nllabel{line:obstruction_y4}\;
 Pick a $y_1$-pair $(A_{y_1}, B_{y_1})$ for $\S_2$ with $y_4 \in B_{y_1}$ and let $\T_6$ be the associated bisection \nllabel{line:T5}\;
 \lIf{$\T_6$ is good}{\Return $(\T_6, y_1)$ \nllabel{line:returnT5}}
 Pick an obstruction $y_6$ to $\T_6$ \nllabel{line:obstruction_y6}\;
 
 \If{$W(y_6, y) = W(y_4, y) = W(y_6, y_4) = 1$ and $W(y_1, y_6)$}
 {Let $\S_5 = (\notS_2 \cup \{y_6\}, S_2 \setminus \{y_6\})$ \nllabel{line:S7}\;
 Pick a $y$-pair $(A_y, B_y)$ for $\S_5$ and let $\T_7$ be the associated bisection \nllabel{line:T9}\;
 \Return $(\T_7, y)$ \nllabel{line:returnT9}\;}
 
 Let $\S_6 = (S_6, \notS_6) = (S_2 \cup \{y_4\} \setminus \{y_6\}, \notS_2 \cup\{y_6\} \setminus \{y_4\})$ \nllabel{line:S5}\;
 
 \If{there is a non-stubborn vertex $x$ in $\notS_6$ with $\rank_{\S_6}(x) = \ell - 1$}
   {\Return \MinRankInNotS{$\S_6$} \nllabel{line:lastreturn}\;}
 
 Pick a $y_1$-pair $(A_{y_1}, B_{y_1})$ for $\S_6$ and let $\T_8$ be the associated bisection \nllabel{line:T7}\;
 \lIf{$\T_8$ is good}{\Return $(\T_8, y_1)$ \nllabel{line:returnT7}}
 Pick an obstruction $y_8$ to $\T_8$ \nllabel{line:obstruction_y7} and let $\S_7 = (\notS_6 \cup \{y_8\}, S_6 \setminus \{y_8\})$ \nllabel{line:S6}\;
 Pick a $y_8$-pair $(A_{y_8}, B_{y_8})$ for $\S_7$ and let $\T_9$ be the associated bisection \nllabel{line:T8}\;
 \Return $(\T_9, y_8)$ \nllabel{line:returnT8}\;
}
 \caption{All non-stubborn vertices of minimal rank are in $S$}
 \label{algo_S}
\end{algorithm}

Clearly, if {\MinRankInS} stops at Line~\ref{line:case_S_good_if}, 
Line~\ref{line:returnT1},
Line~\ref{line:returnT2},
 Line~\ref{line:returnT4},
Line~\ref{line:returnT5} or Line~\ref{line:returnT7},
then the bisection output is good and $u$ is a good vertex for it.

Suppose now that {\MinRankInS} stops 
at Line~\ref{line:case_S_notgood_if},
Line~\ref{line:returnT3},
Line~\ref{line:neg_defS},
Line~\ref{line:returnT9},
Line~\ref{line:lastreturn}
or at Line~\ref{line:returnT8}.
Since in all cases the algorithm returns a pair $(\T,v)$ where $\T$ is the bisection associated to a $v$-pair,
then, by Lemma~\ref{lem:goodvertex}, $\defi_\T(v) \geq a_v + 1$.
Thus, we only need to prove that $\defi_\T(x) \geq -a_x$ for every 
non-stubborn $x \in T$.

\subsubsection{{\tt MinRankInS} stops at Line~\ref{line:case_S_notgood_if}}
In this case, we have that $u$ is a vertex of $S$ with minimum rank $\ell$. 
Vertex $y$ is an obstruction of bisection $\T$ associated with $u$-pair $(A_u, B_u)$,
and $\defi_\S(y)<0$. By Lemma~\ref{lem:obstruct_prop}, $y \in S$.
Observe that $\rank_\S(y)>\ell$,
for otherwise Algorithm~\ref{algo} would have stopped at Line~\ref{line:neg_def}.
From Lemma~\ref{lem:obstruction2}, we obtain that $\rank_{\S_0}(y)\leq\ell$.
We remind the reader that $\S_0=(\notS\cup\{y\}, S \setminus \{y\})$ (see Line~\ref{line:Sp_S_if}) and 
$\T_0=(\notS\cup\{y\}\setminus A_y\cup B_y, S\setminus\{y\}\cup A_y\setminus B_y)$.
 
For every non-stubborn $x\in T_0\setminus\{y\}$, we have that 
$$\defi_\T(x)= W(x,\notS)-W(x,S)+2W(x,y)-2W(x,A_y)+2W(x,B_y).$$
If $x\in\notS\setminus A_y$ then
 \begin{align*}
 \defi_\T(x)&= \defi_\S(x)+2W(x,y) - 2W(x,A_y)+2W(x,B_y)\\
           &\geq \defi_\S(x) +2 W(x,y) - 2|A_y|\\
           &= \defi_\S(x) + 2 W(x,y) - 2\rank_{\S'}(y)\\
           & \geq \defi_\S(x) + 2 W(x,y) - 2\ell.
\end{align*}
Since $\rank_\S(y) > \ell$, by \eqref{eq:def_rank}, $\defi_\S(y) \leq a_y - 2\ell$.
By applying Lemma~\ref{lem:minimality} 
to $y\in S$ and $x \in \notS$ we obtain that
$$\defi_\S(x) + 2W(x,y) \geq -\defi_\S(y) + a_y - a_x \geq -a_x + 2\ell.$$
Hence $\defi_\T(x) \geq -a_x$.

Finally, if $x\in B_y$, then $x \in S$ and, by definition of $y$-pair, $W(x,y) = 1$.
Therefore we have 
\begin{align*}
\defi_\T(x)&= -\defi_\S(x)+2W(x,y)-2W(x,A_y)+2W(x,B_y)\\
           &\geq -\defi_\S(x)-2W(x,A_y) +2 \geq -\defi_\S(x)-2\rank_{\S'}(y)+2 \geq -\defi_\S(x)-2\ell+2
\end{align*}
Since $\ell$ is the minimum rank, it must be the case that $\rank_\S(x)\geq\ell$ which implies that
 $\defi_\S(x)\leq a_x+2-2\ell$.
Therefore, $\defi_\T(x)\geq -a_x$.

\subsubsection{{\tt MinRankInS} reaches Line~\ref{line:obstruction}}
We remind the reader that in this case 
$u \in S$ is a non-stubborn vertex of minimum rank $\ell$ and 
$y$ is an obstruction to bisection $\T$ associated with $u$-pair $(A_u,B_u)$ for $\S$.
By Lemma~\ref{lem:obstruct_prop}, $y \in S$.
Note also that $\defi_\S(y) \geq 0$,
for  otherwise {\tt MinRankInS} would have stopped at Line~\ref{line:case_S_notgood_if}.
From Lemma~\ref{lem:obstruction3}, it then follows that
$\defi_\S(y) = 0$ and $\rank_\S(y)= \ell = \left\lceil\frac{a_y + 1}{2}\right\rceil$.
Moreover, given $y$-pair $(A_y, B_y)$ of $\S_1=(\notS\cup\{y\},S\setminus\{y\})$ (see Line~\ref{line:T1_pair} of {\MinRankInS}),
$y_1$ is either a vertex of $\left(\notS \cup \{y\} \setminus A_y\right) \cap \notN(y)$ with $\defi_\S(w) = a_y - a_{y_1}$
(see Line~\ref{line:obstruction_y1_alt} of {\tt MinRankInS})
or it is an obstruction to bisection $\T_1$ associated with this pair (see Line~\ref{line:obstruction_y1} of {\tt MinRankInS}).
Note that, by Lemma~\ref{lem:obstruct_prop}, even in this last case $y_1 \in \notS \cup \{y\} \setminus A_y$.

\paragraph{Properties of $y$ and $y_1$.}
Before proving that the bisections returned by {\tt MinRankInS} after Line~\ref{line:obstruction} are good,
we need to establish some properties of $y$ and $y_1$.

\begin{lemma}
 \label{lem:second}
 $W(y,y_1) = 0$.
\end{lemma}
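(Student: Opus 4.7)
The plan is a case analysis on how the algorithm produces $y_1$. In the easy case, $y_1$ was chosen at Line~\ref{line:obstruction_y1_alt}, and then by construction $y_1\in(\notS\cup\{y\}\setminus A_y)\cap\notN(y)$, so $W(y,y_1)=0$ is immediate.

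Otherwise $y_1$ is an obstruction to $\T_1$ chosen at Line~\ref{line:obstruction_y1}, and the excerpt has already observed via Lemma~\ref{lem:obstruct_prop} that $y_1\in(\notS\cup\{y\})\setminus A_y$. I would first rule out the degenerate possibility $y_1=y$: since $A_y\subseteq\notN(y)$ and $B_y\subseteq N(y)$ with $|A_y|=|B_y|=\ell$, a direct unpacking of the deficiency in $\T_1$ from the definition of a $y$-pair gives $\defi_{\T_1}(y)=-\defi_\S(y)+2\ell=2\ell>-a_y$, so $y$ is not an obstruction. Hence $y_1\in\notS\setminus A_y$.

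Next I would suppose, for contradiction, that $W(y,y_1)=1$ and derive a very rigid adjacency pattern for $y_1$. Applying Lemma~\ref{lem:minimality} to $y\in S$ and $y_1\in\notS$ with $\defi_\S(y)=0$ yields $\defi_\S(y_1)\ge a_y-a_{y_1}-2$. Expanding
\[
\defi_{\T_1}(y_1)=\defi_\S(y_1)+2W(y_1,y)+2W(y_1,B_y)-2W(y_1,A_y)
\]
and combining the obstruction condition $\defi_{\T_1}(y_1)\le -a_{y_1}-1$ with $W(y_1,y)=1$, the lower bound on $\defi_\S(y_1)$, and $|A_y|=|B_y|=\ell=\lceil (a_y+1)/2\rceil$, one is forced into $W(y_1,A_y)=\ell$ and $W(y_1,B_y)=0$; that is, $y_1$ would be adjacent to every vertex of $A_y$ and to no vertex of $B_y$.

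The contradiction would then be extracted from the $3$-minimality of $\S$. I would apply the corresponding potential inequality to the $2$- or $3$-pair swap exchanging $\{y\}\cup U$ with $\{y_1\}\cup V$ for carefully selected $U\subseteq B_y$ and $V\subseteq A_y$, and combine it with Lemma~\ref{lem:minimality} applied pairwise to $(y,v)$ for $v\in V$ (giving $\defi_\S(v)\ge a_y-a_v$), to $(u,y)$ for $u\in U$ (giving $\defi_\S(u)\ge a_u-a_y-2$), and with the rank upper bound $\defi_\S(z)\le a_z-2\ell$ available for $z\in\notS$ under the present hypothesis that all non-stubborn vertices of minimum rank lie in $S$. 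The algorithmic choice that maximizes $|A_y\cap O|$ pins down $\defi_\S(v)$ for $v\in A_y$ in a narrow range (essentially $a_y-a_v$ or $a_y-a_v+1$), which removes the residual slack in $\defi_\S(y_1)$ and closes the argument. The main obstacle I expect is precisely this last step: one must choose the auxiliary swap small enough to fit within the $3$-minimal regime while still making the resulting inequality tight enough to contradict the structural restrictions already derived for $y_1$; getting the combinatorics of $U$, $V$, and the edges among them to line up is where the bookkeeping will be heaviest.
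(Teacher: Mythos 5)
The two easy reductions — the Line~\ref{line:obstruction_y1_alt} case, and ruling out $y_1=y$ — are handled correctly, and the derivation that the obstruction condition together with $W(y,y_1)=1$, $\defi_\S(y)=0$, and $|A_y|=\ell=\lceil(a_y+1)/2\rceil$ forces $W(y_1,A_y)=\ell$ and $W(y_1,B_y)=0$ is also sound (in both parities of $a_y$). But the last step, which is where you explicitly concede the ``bookkeeping will be heaviest,'' does have a genuine gap, and the paper closes it by a different mechanism that you do not use.

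Concretely, two problems. First, your ``to $(u,y)$ for $u\in U$'' application of Lemma~\ref{lem:minimality} is ill-typed: $U\subseteq B_y\subseteq S\setminus\{y\}$ and $y\in S$, so both endpoints lie in $S$, and that lemma only compares a vertex of $S$ with a vertex of $\notS$. The bound $\defi_\S(u)\geq a_u-a_y-2$ does not follow from anything available. Second, and more importantly, you are trying to extract the contradiction purely from $3$-minimality of $\S$ applied to small swaps built out of $\{y\}\cup U$ and $\{y_1\}\cup V$. That is not the route the paper takes, and it is far from clear it can be made to work: the rigidity constraints you derived concern all $\ell$ vertices of $A_y$ and $B_y$, while $3$-minimality only lets you touch at most three at a time, and the pairwise deficiency bounds they imply leave slack of size $\Theta(1)$ that a bounded swap does not obviously eliminate.

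The missing idea is that the argument should exploit the algorithm's \emph{control flow}, not just static minimality. Precisely because $y_1\in\notS\cap N(y)$ under the hypothesis $W(y,y_1)=1$, the inner for-loop at Line~\ref{line:u_pair_S_if} of {\tt MinRankInS} (with $u=y$ and $v=y_1$) already considered a $y$-pair for $\S$ with $y_1\in B_y$, and the associated bisection $\T'$ must have had an obstruction $y'$ with $\defi_\S(y')\geq 0$ (else {\tt MinRankInS} would have stopped at Line~\ref{line:obstruction_S_if}). Lemmas~\ref{lem:obstruct_prop} and~\ref{lem:obstruction3} then pin down $y'\in S$, $\defi_\S(y')=0$, $\rank_\S(y')=\ell=\lceil(a_{y'}+1)/2\rceil$, hence $a_{y'}\geq 2\ell-2$. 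Plugging the derived upper bound $\defi_\S(y_1)\leq -a_{y_1}+2\ell-3$ into Lemma~\ref{lem:minimality} applied to $(y',y_1)$ forces $W(y_1,y')=1$, and then, since $y_1\in B_{y'}$ for that $\T'$, one shows $\defi_{\T'}(y')\geq -a_{y'}$, contradicting that $y'$ is an obstruction. That contradiction is what completes the proof; without invoking the earlier loop iteration, the swap argument you sketch does not appear to terminate.
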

\begin{proof}
This is obvious if $y_1$ has been defined at Line~\ref{line:obstruction_y1_alt} of {\MinRankInS}.

Suppose instead that $y_1$ is an obstruction to $\T_1$.
Then we have that $\defi_{\T_1}(y_1)<-a_{y_1}$. 
On the other hand
$$\defi_{\T_1}(y_1)\geq\defi_{\S_1}(y_1)-2|B_y|\geq\defi_{\S_1}(y_1)- 2\rank_{\S_1}(y)
  =\defi_{\S_1}(y_1)- 2\ell,$$
where we used that $\defi_{\S_1}(y) = - \defi_\S(y) = 0$ and thus $\rank_{\S_1}(y) = \left\lceil\frac{a_y + 1}{2}\right\rceil = \ell$.

We then obtain that $\defi_{\S_1}(y_1)\leq - a_{y_1} + 2\ell-1$ and 
\begin{equation}
\label{eq:defS_y1}
\defi_\S(y_1)=\defi_{\S_1}(y_1)-2W(y_1,y)\leq -a_{y_1}-1+2\ell - 2 W(y_1,y).
\end{equation}
Suppose now, for sake of contradiction, that $W(y,y_1) = 1$ and thus
\begin{equation}
\label{eq:ub_y1}
\defi_\S(y_1)\leq -a_{y_1}-3+2\ell.
\end{equation}

At Line~\ref{line:u_pair_S_if} of {\MinRankInS}, since $y_1\in\notS\cap N(y)$,
a $y$-pair $(A_y, B_y)$ for $\S$ such that $y_1 \in B_y$ has been considered
and the bisection $\T'$ associated with this $y$-pair 
had an obstruction that we call $y'$. It must be the case that $\defi_\S(y')\geq 0$,
for otherwise the procedure would have stopped at Line~\ref{line:obstruction_S_if}.
Then, from Lemma~\ref{lem:obstruct_prop}, $y'\in S$ and, from Lemma~\ref{lem:obstruction3}, 
$y'$ is a vertex of minimum rank $\ell$ in $\S$ and $\defi_\S(y')=0$. 
This implies that
$\ell=\rank_\S(y')=\left\lceil\frac{a_{y'}+1}{2}\right\rceil\leq\frac{a_{y'} + 2}{2}$ and 
thus $a_{y'}\geq 2\ell -2$.
Moreover, as we shall show next, $y_1$ and $y'$ are neighbors. 
Indeed, from \eqref{eq:ub_y1} and by applying Lemma~\ref{lem:minimality} to $y_1\in\notS$ and $y'\in S$,
we have
\begin{align*}
2\ell-a_{y_1}-3 & \geq  \defi_\S(y_1) \geq -\defi_\S(y') - 2 W(y_1,y') +a_{y'}-a_{y_1}  \\
                & \geq  2\ell-a_{y_1}-2-2W(y_1,y').
\end{align*}
From the above chain of inequalities we obtain $W(y_1,y')=1$.
Since $y'$ is an obstruction to $\T'$, then $\defi_{\T'}(y')<-a_{y'}$.
On the other hand, 
\begin{align*}
\defi_{\T'}(y')&\geq\defi_{\S}(y')-2W(y',A_y)+2W(y', B_y)\\
&\geq -2\ell + 2W(y',y_1)\\
&\geq -(a_{y'} + 2) + 2 = - a_{y'},
\end{align*}
that is a contradiction.
Therefore we can conclude that $y$ and $y_1$ are not neighbors. 
\end{proof}

\begin{lemma}
 \label{lem:third}
 If $\defi_\S(y_1)\neq a_y-a_{y_1}$, then $a_y$ is even and $\defi_\S(y_1) = a_y-a_{y_1} + 1$. 
\end{lemma}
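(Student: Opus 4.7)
The plan is to exploit a dichotomy on how $y_1$ is defined and then to pin down $\defi_\S(y_1)$ via a matching upper and lower bound. If $y_1$ is chosen at Line~\ref{line:obstruction_y1_alt}, then by construction $\defi_\S(y_1) = a_y - a_{y_1}$, so the hypothesis of the lemma fails and the statement is vacuously true; I may therefore assume throughout that $y_1$ is obtained at Line~\ref{line:obstruction_y1}, i.e.\ $y_1$ is an obstruction to $\T_1$.

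The next step is to locate $y_1$ in the original bisection $\S$. Since $y_1$ is an obstruction, it lies in the $S$-part of $\T_1$, namely $(\notS \cup \{y\}) \setminus A_y \cup B_y$. Lemma~\ref{lem:second} gives $W(y, y_1) = 0$, so $y_1 \neq y$ and $y_1 \notin N(y)$; combined with $B_y \subseteq N(y)$ this forces $y_1 \in \notS \setminus A_y$, so in particular $y_1 \in \notS$ with respect to $\S$.

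I would then extract two opposing bounds on $\defi_\S(y_1)$. For the upper bound, from $\defi_{\T_1}(y_1) \leq -a_{y_1} - 1$ and the swap identity
$$\defi_{\T_1}(y_1) = \defi_{\S_1}(y_1) - 2W(y_1, A_y) + 2W(y_1, B_y) \geq \defi_{\S_1}(y_1) - 2|A_y|,$$
together with $|A_y| = \rank_{\S_1}(y) = \ell$ (using $\defi_{\S_1}(y) = -\defi_\S(y) = 0$ from Lemma~\ref{lem:obstruction3}) and $\defi_{\S_1}(y_1) = \defi_\S(y_1)$ (using $W(y, y_1) = 0$), I obtain $\defi_\S(y_1) \leq 2\ell - a_{y_1} - 1$. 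For the lower bound, I apply Lemma~\ref{lem:minimality} to $y \in S$ and $y_1 \in \notS$: using $\defi_\S(y) = 0$ and $W(y, y_1) = 0$ I get $\defi_\S(y_1) \geq a_y - a_{y_1}$.

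Combining these bounds gives $a_y \leq 2\ell - 1$, and since $\ell = \lceil (a_y + 1)/2 \rceil$ by Lemma~\ref{lem:obstruction3}, the conclusion splits on the parity of $a_y$. If $a_y$ is odd, then $\ell = (a_y + 1)/2$ and $2\ell - 1 = a_y$, so the two bounds coincide at $\defi_\S(y_1) = a_y - a_{y_1}$, contradicting the hypothesis; hence $a_y$ must be even. When $a_y$ is even, $\ell = a_y/2 + 1$ and $2\ell - 1 = a_y + 1$, so $\defi_\S(y_1) \in \{a_y - a_{y_1},\, a_y - a_{y_1} + 1\}$; the hypothesis rules out the lower value, yielding $\defi_\S(y_1) = a_y - a_{y_1} + 1$ as required. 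The main subtlety is verifying that $y_1$ actually lands in $\notS \setminus A_y$ so that both bounds apply cleanly; once Lemma~\ref{lem:second} is in hand, the remainder is a tight parity-driven squeeze.
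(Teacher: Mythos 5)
Your proof is correct and takes essentially the same approach as the paper's: a lower bound $\defi_\S(y_1)\geq a_y-a_{y_1}$ from Lemma~\ref{lem:minimality} applied to $y\in S$, $y_1\in\notS$ (using $\defi_\S(y)=0$ and $W(y,y_1)=0$), an upper bound $\defi_\S(y_1)\leq 2\ell-a_{y_1}-1$ from the obstruction inequality on $\T_1$ (which re-derives \eqref{eq:defS_y1} inline rather than citing it), and a squeeze via $\ell=\left\lceil\frac{a_y+1}{2}\right\rceil$ to force $a_y$ even and pin $\defi_\S(y_1)=a_y-a_{y_1}+1$. The only addition is making explicit the vacuous case where $y_1$ is set at Line~\ref{line:obstruction_y1_alt}, which the paper leaves implicit.
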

\begin{proof}
By applying Lemma~\ref{lem:minimality} to $y\in S$ and $y_1\in\notS$, we obtain
\[\defi_\S(y_1)\geq-\defi_\S(y)-2W(y,y_1)+ a_y-a_{y_1} = a_y - a_{y_1}.\]
Since $\defi_\S(y_1)\neq a_y-a_{y_1}$, it must be then the case that $\defi_\S(y_1)\geq a_y - a_{y_1} + 1$.

On the other side, by substituting $W(y,y_1)=0$ in \eqref{eq:defS_y1}, we obtain that 
$$\defi_{\S}(y_1)\leq -a_{y_1}+2\ell-1=2\left\lceil\frac{a_y+1}{2}\right\rceil-(a_{y_1}+1) \leq a_y - a_{y_1} + 1$$,
from which the claim follows.
\end{proof}

\begin{lemma}
\label{lem:fourth}
 If $\defi_\S(y_1)\neq a_y-a_{y_1}$, then $\defi_\S(w) \geq a_y-a_{w} + 1 - W(w, y)$ for every $w \in \notS$.
\end{lemma}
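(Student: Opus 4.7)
The plan is to apply the $2$-minimality of $\S$ (which follows from the $3$-minimality hypothesis) to the $2$-swap that exchanges $\{y,z\}\subseteq S$ with $\{y_1,w\}\subseteq\notS$ for an appropriately chosen vertex $z\in S\setminus\{y\}$. First I would decompose the potential change of this $2$-swap as
\[\Delta_2 \;=\; \Delta_1(y,w)+\Delta_1(z,y_1)+2\bigl[W(z,w)+W(y,y_1)-W(y,z)-W(y_1,w)\bigr],\]
where $\Delta_1(a,b)$ denotes the change in potential induced by the $1$-swap of $a$ with $b$. By Lemma~\ref{lem:second} we have $W(y,y_1)=0$, and by $2$-minimality $\Delta_2\geq 0$.

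Next I would substitute $\defi_\S(y)=0$ (from the context provided by Lemma~\ref{lem:obstruction3}) and $\defi_\S(y_1)=a_y-a_{y_1}+1$ (Lemma~\ref{lem:third}) to expand the two $\Delta_1$ terms. Taking $z$ to be a non-stubborn vertex of minimum rank $\ell=\lceil(a_y+1)/2\rceil$ in $S$ — such as the vertex $u$ available in this branch of the algorithm — and using that $a_y$ is even (Lemma~\ref{lem:third}) yields $\defi_\S(z)\leq a_z-2\ell+2=a_z-a_y$, hence $a_z-\defi_\S(z)\geq a_y$. Plugging this into $\Delta_2\geq 0$ reduces the desired bound to
\[2W(y,z)+2W(y_1,w)-W(y,w)-2W(z,y_1)-2W(z,w)\;\geq\;2.\]
A short case analysis on the adjacencies of $z$ to $y,y_1,w$ establishes this inequality whenever $z$ can be picked adjacent to $y$ and non-adjacent to $y_1$ and $w$; this covers every sub-case except $W(y,w)=1$ and $W(y_1,w)=0$.

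For the remaining case I would invoke the full $3$-minimality of $\S$ on a $3$-swap $\{y,z,z'\}\leftrightarrow\{y_1,w,w'\}$, with the auxiliary pair $\{z',w'\}$ drawn from $B_y\cup A_y$. The structural fact, already established inside the proof of Lemma~\ref{lem:second}, that $y_1$ is adjacent to every vertex of $A_y$ and to no vertex of $B_y$, combined with the algorithmic rule that $A_y$ maximises $|A_y\cap O|$, ensures that the cross-terms contributed by $\{z',w'\}$ supply exactly the extra unit of slack that the $2$-swap was missing.

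The main obstacle is precisely this last case: identifying a compatible auxiliary pair $\{z',w'\}$ whose cross-edges with $y,z,y_1,w$ do not cancel the slack we are trying to extract. Controlling these cross-terms is the delicate step, and it depends both on the specific choice of $A_y$ and on the minimum-rank constraints already used for $z$.
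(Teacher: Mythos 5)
Your plan takes a genuinely different route from the paper's: you try to extract the inequality directly from $2$-minimality (and, in one sub-case, $3$-minimality) of $\S$, whereas the paper's proof rests essentially on the algorithm's choices — the maximization of $|A_y\cap O|$ at Line~\ref{line:T1_pair} (handling $w\in A_y$), the preferential definition of $y_1$ at Line~\ref{line:obstruction_y1_alt} (handling $w\notin A_y$, $w\in\notN(y)$), and a contradiction argument that traces the for-loop at Line~\ref{line:u_pair_S_if} to find a suitable vertex $x\in S$ (handling $w\in N(y)$). The algebra of your $2$-swap decomposition and the reduction to $2W(y,z)+2W(y_1,w)-W(y,w)-2W(z,y_1)-2W(z,w)\geq 2$ check out, but the argument has a genuine gap that the algebra does not resolve.

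The critical missing step is the existence of the vertex $z$: you need $z\in S$ of minimum rank with $W(y,z)=1$, $W(z,y_1)=0$, and $W(z,w)=0$, and you offer no argument that such a $z$ exists (it cannot just be $u$, which need not be adjacent to $y$, and certainly need not avoid $y_1$ and $w$). The paper must work quite hard — via a separate contradiction with the for-loop — just to exhibit an $x\in S$ with the much weaker properties $\defi_\S(x)=0$, $a_x=a_y$, $W(x,w)=0$, and no adjacency constraint to $y$ or $y_1$. Your $z$ requires strictly more, so the existence problem is harder, not easier. In addition, your $3$-swap fallback leans on a ``structural fact'' that $y_1$ is adjacent to every vertex of $A_y$ and none of $B_y$; this is not proved in Lemma~\ref{lem:second}, which only uses the crude bound $\defi_{\T_1}(y_1)\geq\defi_{\S_1}(y_1)-2|B_y|$, and it can fail. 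Finally, note that the $+1$ in the target bound is recovered in the paper precisely because the algorithm chose $A_y$ to maximize $|A_y\cap O|$ and chose $y_1$ via Line~\ref{line:obstruction_y1_alt}; raw minimality of $\S$ by itself only yields $\defi_\S(w)\geq a_y-a_w$, so any proof that ignores these algorithmic choices must find the missing unit elsewhere, and your plan does not yet succeed in doing so.
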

\begin{proof}
If $\defi_\S(y_1)\neq a_y-a_{y_1}$, then, by Lemma~\ref{lem:third},
we have that $a_y$ is even and $\defi_\S(y_1) = a_y-a_{y_1} + 1$.
We first show that $\defi_\S(y_1) = a_y-a_{y_1} + 1$
implies that $\defi_\S(w) \geq a_y-a_{w} + 1$ for every $w \in \notS \cap \notN(y)$.

Since $y_1$ is an obstruction to $\T_1$, then $y_1 \notin A_y$.
By our choice of $A_y$ (see Line~\ref{line:T1_pair}), it must be then the case that
$\defi_\S(w) = a_y-a_{w} + 1$ for every $w \in A_y$.

Consider now $w \in \left(\notS \setminus A_w\right) \cap \notN(y)$.
By applying Lemma~\ref{lem:minimality} to $y\in S$ and $w\in\notS$, we 
obtain
$$\defi_\S(w)\geq-\defi_\S(y)-2W(y,w)+ a_y-a_{w} = a_y - a_{w},$$
where we used that $\defi_\S(y) = 0$ and $w \in \notN(y)$.
However, it cannot be the case that $\defi_\S(w) = a_y - a_{w}$,
otherwise $y_1$ was returned at Line~\ref{line:obstruction_y1_alt},
and thus $\defi_\S(y_1) = a_y-a_{y_1}$, a contradiction.

Now we prove that if $a_y$ is even, then $\defi_\S(w) \geq a_y-a_{w}$ for every $w \in \notS \cap N(y)$.

We show that there is a vertex $x \in S$ such that $\defi_\S(x) = 0$,
$a_{x}=a_y$ and $W(w,x) = 0$.
Then, by applying Lemma~\ref{lem:minimality} to $x\in S$ and $w\in\notS$, we 
obtain
$$\defi_\S(w)\geq-\defi_\S(x)-2W(x,w)+ a_{x}-a_{w} = a_y - a_{w}.$$

Suppose indeed, by sake of contradiction, that $w$ is a neighbor of every vertex $x \in S$ such that $\defi_\S(x) = 0$ and $a_{x}=a_y$.
Observe that at Line~\ref{line:u_pair_S_if}, since $w\in\notS\cap N(y)$,
a $y$-pair $(A_y, B_y)$ for $\S$ such that $w \in B_y$ has been considered
and the bisection $\T'$ associated with this $y$-pair 
had an obstruction that we call $y'$. It must be the case that $\defi_\S(y')\geq 0$,
for otherwise the procedure would have stopped at Line~\ref{line:obstruction_S_if} of {\MinRankInS}.
Then, from Lemma~\ref{lem:obstruct_prop}, $y'\in S$ and, from Lemma~\ref{lem:obstruction3}, 
$y'$ is a vertex of minimum rank $\ell$ in $\S$, $\defi_\S(y')=0$ and $\ell = \left\lceil\frac{a_{y'} + 1}{2}\right\rceil$.
Moreover $a_{y'}$ is even, because otherwise $y$ was returned at Line~\ref{line:obstruction_y_alt},
that is not possible since, by hypothesis, $a_y$ is odd.
This implies that
$\frac{a_{y'} + 2}{2} = \ell=\frac{a_{y} + 2}{2}$ and 
thus $a_{y'}=a_y$.
Finally, since $W(y', w) = 1$, then, by construction of $B_y$, $W(y', B_y) \geq 1$.

Now, since $y'$ is an obstruction to $\T'$, then $\defi_{\T'}(y')<-a_{y'}$.
On the other hand, 
\begin{align*}
\defi_{\T'}(y')&\geq\defi_{\S}(y')-2W(y',A_y)+2W(y', B_y)\\
&\geq -2\ell + 2 = -(a_{y'} + 2) + 2 = - a_{y'},
\end{align*}
that is a contradiction. 
\end{proof}

\begin{lemma}
 \label{lem:sixth}
 $\rank_{\S_2}(y) = \rank_{\S_2}(y_1) = \ell$.
\end{lemma}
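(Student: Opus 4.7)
The plan is to compute the deficiencies of $y$ and $y_1$ under the new bisection $\S_2$ directly and then read off their ranks, using the two previously established lemmas about the pair $(y,y_1)$.

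First I would exploit the crucial fact from Lemma~\ref{lem:second} that $W(y,y_1)=0$. When we pass from $\S$ to $\S_2$, only the two vertices $y$ and $y_1$ switch sides. Since they are non-adjacent, the edge counts from $y$ to each side are unaffected by moving $y_1$, and similarly for the edge counts from $y_1$. Concretely, $W(y,S_2)=W(y,S)$ and $W(y,\nS_2)=W(y,\nS)$, so $\defi_{\S_2}(y)=-\defi_\S(y)=0$ because $y$ now lies in $\nS_2$ and Lemma~\ref{lem:obstruction3} gives $\defi_\S(y)=0$. Plugging into the definition of rank yields $\rank_{\S_2}(y)=\lceil (a_y+1)/2\rceil=\ell$, which is exactly what Lemma~\ref{lem:obstruction3} identifies as the minimum rank.

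The second equality requires a small case split. By the same non-adjacency argument, $\defi_{\S_2}(y_1)=-\defi_\S(y_1)$ since $y_1$ now lies in $S_2$. Lemma~\ref{lem:third} gives two possibilities for $\defi_\S(y_1)$. In the first case, $\defi_\S(y_1)=a_y-a_{y_1}$, so $\defi_{\S_2}(y_1)=a_{y_1}-a_y$ and
\[
\rank_{\S_2}(y_1)=\left\lceil\frac{a_{y_1}+1-(a_{y_1}-a_y)}{2}\right\rceil=\left\lceil\frac{a_y+1}{2}\right\rceil=\ell.
\]
In the second case, $a_y$ is even and $\defi_\S(y_1)=a_y-a_{y_1}+1$, so $\defi_{\S_2}(y_1)=a_{y_1}-a_y-1$ and
\[
\rank_{\S_2}(y_1)=\left\lceil\frac{a_{y_1}+1-(a_{y_1}-a_y-1)}{2}\right\rceil=\left\lceil\frac{a_y+2}{2}\right\rceil=\frac{a_y+2}{2}=\ell,
\]
where the last equality uses that, since $a_y$ is even, $\ell=\lceil (a_y+1)/2\rceil=(a_y+2)/2$. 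Combining the two subcases gives $\rank_{\S_2}(y_1)=\ell$ as required.

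There is no real obstacle here: the result is essentially a direct computation once Lemmas~\ref{lem:second} and~\ref{lem:third} are in hand. The only subtlety is keeping the ceiling arithmetic straight in the even/odd case split for $a_y$, and remembering that the sign of the deficiency flips precisely because $y$ and $y_1$ have been swapped across the bisection while their mutual (non-)edge contributes nothing to either side's count.
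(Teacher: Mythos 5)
Your proof is correct and follows essentially the same route as the paper's: both use Lemma~\ref{lem:second} to flip the sign of the deficiencies under the swap, invoke Lemma~\ref{lem:obstruction3} for $\defi_\S(y)=0$, and split on the two values of $\defi_\S(y_1)$ given by Lemma~\ref{lem:third}, with the same ceiling arithmetic in each branch.
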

\begin{proof}
We remind the reader that $\S_2=(S\cup \{y_1\} \setminus \{y\}, \notS \cup\{y\} \setminus \{y_1\})$. 
Since, by Lemma~\ref{lem:second}, $W(y, y_1) = 0$, we have $\defi_{\S_2}(y)=-\defi_\S(y)=0$.
  Hence
  $$
    \rank_{\S_2}(y) = \left\lceil \frac{a_{y} + 1 - \defi_{\S_2}(y)}{2}\right\rceil = \left\lceil \frac{a_{y} + 1}{2}\right\rceil = \ell.
  $$
  Similarly, $\defi_{\S_2}(y_1) = -\defi_\S(y_1)$ and thus
  $$
   \rank_{\S_2}(y) = \left\lceil \frac{a_{y_1} + 1 - \defi_{\S_2}(y_1)}{2}\right\rceil = \left\lceil \frac{a_{y_1} + 1 + \defi_{\S}(y_1)}{2}\right\rceil.
  $$
  If $\defi_{\S}(y_1) = a_y - a_{y_1}$, then $\rank_{\S_2}(y) = \left\lceil \frac{a_{y} + 1}{2}\right\rceil = \ell$.
  If $\defi_{\S}(y_1) = a_y - a_{y_1} + 1$ (and thus, by Lemma~\ref{lem:third}, $a_y$ is even), then
  $\rank_{\S_2}(y) = \left\lceil \frac{a_{y} + 2}{2}\right\rceil = \ell$.
\end{proof}
  
\begin{lemma}
 \label{lem:last}
 For every $u \in S_2$ and every $v \in \notS_2 \setminus \{y\}$, we have
\begin{itemize}
 \item $\defi_{\S_2}(u) + \defi_{\S_2}(v) + 2W(u,v) \geq a_u - a_v$, if $\defi_\S(y_1) = a_y - a_{y_1}$ or $u = y_1$;
 \item $\defi_{\S_2}(u) + \defi_{\S_2}(v) + 2W(u,v) = a_u - a_v + c + 2W(y,v) + 2W(u,y_1) - 2W(u,y) - 2W(v, y_1)$, 
 for $c \geq \max\left\{0, 2W(u,v) - 2W(y_1,u) - W(y,v)\right\}$, otherwise.
\end{itemize}
\end{lemma}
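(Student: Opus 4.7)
The plan is to compute how deficiencies transform under the single swap $y \leftrightarrow y_1$ that turns $\S$ into $\S_2$, and then to derive each bullet by an appropriate application of Lemma~\ref{lem:minimality}, supplemented by Lemma~\ref{lem:fourth} whenever the strict case $\defi_\S(y_1) = a_y - a_{y_1} + 1$ costs us a unit. The first step is routine bookkeeping: since $W(y, y_1) = 0$ by Lemma~\ref{lem:second}, direct edge counts give $\defi_{\S_2}(y_1) = -\defi_\S(y_1)$, $\defi_{\S_2}(u) = \defi_\S(u) - 2W(u,y) + 2W(u,y_1)$ for $u \in S \setminus \{y\}$, and $\defi_{\S_2}(v) = \defi_\S(v) + 2W(v,y) - 2W(v,y_1)$ for $v \in \notS \setminus \{y_1\}$. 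I will keep $\defi_\S(y) = 0$ (from Lemma~\ref{lem:obstruction3}) in mind throughout.

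For the first bullet with $u = y_1$, substituting these identities and collapsing the $\pm 2W(v, y_1)$ terms leaves $\defi_{\S_2}(y_1) + \defi_{\S_2}(v) + 2W(y_1, v) = -\defi_\S(y_1) + \defi_\S(v) + 2W(v, y)$. If $\defi_\S(y_1) = a_y - a_{y_1}$, then applying Lemma~\ref{lem:minimality} to $y, v$ produces $\defi_\S(v) + 2W(v, y) \geq a_y - a_v$ and hence the target $\geq a_{y_1} - a_v$. If instead $\defi_\S(y_1) = a_y - a_{y_1} + 1$ (which, by Lemma~\ref{lem:third}, forces $a_y$ even), Lemma~\ref{lem:fourth} gives $\defi_\S(v) + W(v, y) \geq a_y - a_v + 1$ and therefore $\defi_\S(v) + 2W(v, y) \geq a_y - a_v + 1$, which again suffices.

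For the first bullet with $u \in S \setminus \{y\}$ and $\defi_\S(y_1) = a_y - a_{y_1}$, I will exploit the $2$-minimality of $\S$. The bisection $\T''$ obtained from $\S_2$ by the $1$-swap $u \leftrightarrow v$ is exactly the one obtained from $\S$ by the $2$-swap $\{u, y\} \leftrightarrow \{v, y_1\}$. Applying the standard $1$-swap potential-change formula (as in the proof of Lemma~\ref{lem:minimality}) first along $\S \to \S_2$ and then along $\S_2 \to \T''$ yields $\Phi(\T'') - \Phi(\S) = \bigl(\defi_\S(y_1) + a_{y_1} - a_y\bigr) + \bigl(\defi_{\S_2}(u) + \defi_{\S_2}(v) + 2W(u, v) + a_v - a_u\bigr)$. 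Since $\S$ is $3$-minimal and therefore $2$-minimal, the left side is non-negative; under the hypothesis $\defi_\S(y_1) = a_y - a_{y_1}$ the first parenthesis vanishes, leaving precisely $\defi_{\S_2}(u) + \defi_{\S_2}(v) + 2W(u, v) \geq a_u - a_v$.

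For the second bullet, I set $c := \defi_\S(u) + \defi_\S(v) + 2W(u, v) - a_u + a_v$; substituting the identities from the first step makes the advertised equality hold with this $c$. The bound $c \geq 0$ is immediate from Lemma~\ref{lem:minimality} applied to $u, v$. The remaining inequality $c \geq 2W(u, v) - 2W(u, y_1) - W(y, v)$ is equivalent, after cancellation, to $\defi_\S(u) + \defi_\S(v) + 2W(u, y_1) + W(y, v) \geq a_u - a_v$, which I will obtain by summing two inequalities: Lemma~\ref{lem:minimality} applied to $u, y_1$ and specialized via $\defi_\S(y_1) = a_y - a_{y_1} + 1$ contributes $\defi_\S(u) + 2W(u, y_1) \geq a_u - a_y - 1$, and Lemma~\ref{lem:fourth} applied to $v$ contributes $\defi_\S(v) + W(y, v) \geq a_y - a_v + 1$. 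I expect the only real obstacle to be sign bookkeeping under the swap; the conceptual content is concentrated in the strict case $\defi_\S(y_1) = a_y - a_{y_1} + 1$, where Lemma~\ref{lem:fourth} recovers exactly the unit that the weaker hypothesis costs in Lemma~\ref{lem:minimality}.
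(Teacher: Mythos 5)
Your proof is correct and follows essentially the same approach as the paper's: the same swap identities for deficiencies under $\S \to \S_2$, the same invocations of Lemmas~\ref{lem:minimality}, \ref{lem:third} and \ref{lem:fourth}, and the same construction of $c$. The only cosmetic difference is that you handle $u = y_1$ as a single case (covering both values of $\defi_\S(y_1)$) and derive the $u \neq y_1$, $\defi_\S(y_1) = a_y - a_{y_1}$ subcase by composing two $1$-swap potential-change identities, whereas the paper first observes $\Phi(\S_2) = \Phi(\S)$ to conclude $\S_2$ is minimal and then cites Lemma~\ref{lem:minimality}; these are the same calculation.
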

\begin{proof}
If $\defi_\S(y_1) = a_y - a_{y_1}$, then 
\begin{align*}
 \Phi(\S_2) - \Phi(\S) & = \defi_\S(y) + \defi_\S(y_1) + 2W(y,y_1) + a_{y_1} - a_y\\
   & = 0 + (a_y - a_{y_1}) + 0 + a_{y_1} - (a_y - 1) = 0,
\end{align*}
and thus $\Phi(\S_2) = \Phi(\S)$,
and, since $\S$ has $3$-minimal potential, then $\S_2$ has minimal potential.
The desired property then follows from Lemma~\ref{lem:minimality}.

Assume now that $\defi_\S(y_1) \neq a_y - a_{y_1}$.
Then, by Lemma~\ref{lem:third}, $a_y$ is even and $\defi_\S(y_1) = a_y - a_{y_1} + 1$.
We first consider the case that $u = y_1$. Then,
\begin{align*}
 \defi_{\S_2}(y_1) + \defi_{\S_2}(v) + 2W(y_1,v) & = - \defi_{\S}(y_1) + \left(\defi_{\S}(v) - 2W(y_1,v) + 2W(y,v)\right) + 2W(y_1,v)\\
 & = - (a_y - a_{y_1} + 1) + \defi_{\S}(v) + 2W(y,v)
\end{align*}
Since, by Lemma~\ref{lem:fourth}, $\defi_\S(v) \geq a_y - a_v + 1 - W(y, v)$, we have
\begin{align*}
 \defi_{\S_2}(y_1) + \defi_{\S_2}(v) + 2W(y_1,v) & \geq -(a_y - a_{y_1} + 1) + (a_y - a_v + 1 - W(y,v)) + 2W(y,v)\\
 & = a_{y_1} - a_v + W(y, v) \geq a_{y_1} - a_v.
\end{align*}

Consider now the case that $u \neq y_1$.
Observe that $\defi_{\S_2}(u) = \defi_\S(u) + 2W(u, y_1) - 2W(u,y)$ and $\defi_{\S_2}(v) = \defi_\S(v) - 2W(v, y_1) + 2W(v,y)$.
Moreover, by applying Lemma~\ref{lem:minimality} to $u \in S$ and $y_1 \in \notS$, we have that there is $c_u \geq 0$ such that
\begin{equation}
\label{eq:defiS}
 \defi_\S(u) = -\defi_\S(y_1) - 2W(y_1,u) + a_u - a_{y_1} + c_u = a_u - a_y - 1 + c_u - 2W(y_1,u).
\end{equation}

Moreover, by Lemma~\ref{lem:fourth}, there is $c_v \geq 0$ such that $\defi_\S(v) = a_y - a_v + 1 + c_v - W(y,v)$.
By applying Lemma~\ref{lem:minimality} to $u \in S$ and $v \in \notS$, we have that
\begin{align*}
 0 & \leq \defi_\S(u) + \defi_\S(v) + 2W(u,v) + a_v - a_u\\
 & = (a_u - a_y - 1 + c_u - 2W(y_1,u)) + (a_y - a_v + 1 + c_v - W(y,v)) + 2W(u,v) + a_v - a_u\\
 & = (c_u + c_v) - 2W(y_1,u) - W(y,v) + 2W(u,v).
\end{align*}
By setting $c = (c_u + c_v) - 2W(y_1,u) - W(y,v) + 2W(u,v) \geq 0$, we than have
\begin{align*}
 &\defi_{\S_2}(u) + \defi_{\S_2}(v) + 2W(u,v)\\
 & \qquad = (\defi_\S(u) + 2W(u, y_1) - 2W(u,y)) + (\defi_\S(v) - 2W(v, y_1) + 2W(v,y)) + 2W(u,v)\\
 & \qquad = a_u - a_v + c + 2W(y,v) + 2W(u,y_1) - 2W(u,y) - 2W(v, y_1). \qedhere
\end{align*}
\end{proof}

\begin{lemma}
 \label{lem:caseT3noproblem}
 For every $u \in S \setminus \{y_1\}$ and $v \in \notS \setminus \{y\}$,
 if $W(u, y) = 1$ and $W(u, y_1) = 0$,
 then $\defi_{\S_2}(u) + \defi_{\S_2}(v) + 2W(u,v) \geq a_u - a_v - 1$.
\end{lemma}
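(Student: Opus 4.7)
The plan is to leverage the $3$-minimal potential of $\S$ via a size-$2$ swap that exchanges precisely the four distinguished vertices $u$, $y$, $v$, $y_1$ between the two sides of $\S$, and then to convert the resulting $\S$-inequality into the desired $\S_2$-inequality.

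First, I would observe that the case $\defi_\S(y_1) = a_y - a_{y_1}$ is already subsumed by Lemma~\ref{lem:last} (case~1), which even yields the stronger bound $\defi_{\S_2}(u) + \defi_{\S_2}(v) + 2W(u,v) \geq a_u - a_v$. Hence, by Lemma~\ref{lem:third}, I may assume $a_y$ is even and $\defi_\S(y_1) = a_y - a_{y_1} + 1$.

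Now consider the bisection $\T$ obtained from $\S$ by swapping $A = \{u,y\} \subseteq S$ with $B = \{v,y_1\} \subseteq \notS$. Using the standard formula for the change of potential under such a swap, and substituting the known quantities $\defi_\S(y) = 0$, $\defi_\S(y_1) = a_y - a_{y_1} + 1$, $e(A) = W(u,y) = 1$, $W(u,y_1) = 0$, and $W(y,y_1) = 0$ (Lemma~\ref{lem:second}), the $3$-minimality condition $\Phi(\T) \geq \Phi(\S)$ simplifies, after cancellation of the $a_y$ terms, to
\begin{equation*}
\defi_\S(u) + \defi_\S(v) + 2W(u,v) + 2W(y,v) - 2W(v,y_1) \geq a_u - a_v + 1.
\end{equation*}

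Finally, since $\S_2$ is obtained from $\S$ by swapping only $y$ and $y_1$, a direct computation gives $\defi_{\S_2}(u) = \defi_\S(u) + 2W(u,y_1) - 2W(u,y) = \defi_\S(u) - 2$ and $\defi_{\S_2}(v) = \defi_\S(v) + 2W(v,y) - 2W(v,y_1)$. Adding $2W(u,v)$ to both and feeding in the displayed inequality yields
\begin{equation*}
\defi_{\S_2}(u) + \defi_{\S_2}(v) + 2W(u,v) \geq (a_u - a_v + 1) - 2 = a_u - a_v - 1,
\end{equation*}
as required. The main obstacle is ensuring the validity of the size-$2$ swap, namely that $u$, $y$, $v$, $y_1$ are pairwise distinct: this is immediate from the hypotheses when $v \neq y_1$, while the boundary case $v = y_1$ must be handled separately by a direct computation combining $\defi_{\S_2}(y_1) = -\defi_\S(y_1)$ with the $1$-minimality inequality of Lemma~\ref{lem:minimality} applied to $u$ and $y_1$.
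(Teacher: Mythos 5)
Your proof is correct and takes essentially the same route as the paper: both arguments boil down to invoking the $\geq 2$-minimality of $\S$ on the $2$-swap that exchanges $\{u,y\} \subseteq S$ with $\{v,y_1\} \subseteq \notS$. The paper packages this via the intermediate bisection, first noting $\Phi(\S_2) = \Phi(\S) + 1$ and then applying the one-swap potential formula to $\S' = (S_2 \cup \{v\} \setminus \{u\}, \notS_2 \cup \{u\} \setminus \{v\})$, whereas you compute the two-swap potential change directly and then translate into $\S_2$-quantities; these are algebraically identical. One point worth flagging: your closing remark about the boundary case $v = y_1$ is a genuinely useful observation that the paper's proof silently skips (if $v = y_1$ then $y_1 \in S_2$ and the swap $\S'$ is not well-defined), but your proposed fix via Lemma~\ref{lem:minimality} alone does not close the gap — you additionally need that $y_1 \in \notS$ has $\rank_\S(y_1) \geq \ell + 1$ (since it is a non-stubborn vertex in $\notS$ and all minimum-rank non-stubborn vertices lie in $S$), which combined with $\defi_\S(y_1) = a_y - a_{y_1} + 1$ gives $a_{y_1} \geq a_y + 2$ and then the bound follows. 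In the application of this lemma inside the proof that {\tt MinRankInS} stops at Line~\ref{line:returnT3}, the vertex $v = x$ satisfies $x \in \notS_2$ and $x \neq y$, hence automatically $x \neq y_1$, so the degenerate case never arises; this explains why the paper can afford to omit it.
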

\begin{proof}
The claim follows from Lemma~\ref{lem:last} if $\defi_\S(y_1) = a_y - a_{y_1}$.

Consider instead the case that $\defi_\S(y_1) = a_y - a_{y_1} - 1$. We first observe that 
 \begin{align*}
 \Phi(\S_2) - \Phi(\S) & = \defi_\S(y) + \defi_\S(y_1) + 2W(y,y_1) + a_{y_1} - a_y\\
 & = 0 + (a_y - a_{y_1} + 1) + 0 + a_{y_1} - a_y = 1,
\end{align*}
 Consider now the bisection $\S' = (S_2 \cup v \setminus u, \notS_2 \cup u \setminus v)$.
 It must be the case that $\Phi(\S') \geq \Phi(S_2) - 1$,
 otherwise $\Phi(\S') < \Phi(\S)$, contradicting the minimality of $\S$.
 Then,
 $$
  - 1 \leq \Phi(\S') - \Phi(\S_2) = \defi_{\S_2}(u) + \defi_{\S_2}(v) + 2W(u,v) + a_{v} - a_u,
 $$
 from which the claim follows.
\end{proof}

\paragraph{{\tt MinRankInS} stops at Line~\ref{line:returnT3}.}
Then there is a vertex $v \in S_2$, whose rank in $\S_2$ is less than $\rank_{\S_2}(y) = \ell$.
Note that, since $\defi_{\S_2}(v) = \defi_\S(v) + 2W(v,y_1) - 2W(v,y)$,
\begin{align*}
 \rank_{\S_2}(v) & = \left\lceil \frac{a_v + 1 - \defi_{\S_2}(v)}{2}\right\rceil = \left\lceil \frac{a_v + 1 - \defi_{\S}(v) - 2W(v,y_1) + 2W(v,y)}{2}\right\rceil\\
 & = \left\lceil \frac{a_v + 1 - \defi_{\S}(v)}{2}\right\rceil  - W(v,y_1) + W(v,y) = \rank_\S(v) - W(v,y_1) + W(v,y).
\end{align*}
Hence, $\rank_{\S_2}(v) < \ell$ if and only if $\rank_\S(v) = \ell$ (that is, $v$ has minimum rank in $\S$), $W(v,y_1) = 1$ and $W(v,y) = 0$.
From this we obtain that for every vertex $v$ with $\rank_{\S_2}(v) < \ell$,
it holds that $\defi_\S(v) \geq 0$
(since $v$ has minimum rank in $S$ and no vertex of minimum rank in $S$ with negative deficiency can exist,
otherwise a good bisection was returned at Line~\ref{line:neg_def} of Algorithm~\ref{algo}),
and, $\defi_{\S_2}(v) \geq 2$.
We also observe that every vertex $x \in \notS_2 = \notS \cup \{y\} \setminus \{y_1\}$ has $\rank_{\S_2}(x) \geq \ell$.
If $x=y$, then this follows from Lemma~\ref{lem:sixth}.
If $x \neq y$, then the claim follows since $\rank_\S{x} \geq \ell + 1$,
and the rank can decrease of at most one when two vertices are swapped.

The bisection $\T_2$ associated to $v$-pair $(A_v, B_v)$ for $\S_2$ has an obstruction $y_2$.
By Lemma~\ref{lem:obstruct_prop}, $y_2 \in S_2 \setminus A_v$.
Suppose that $\defi_{\S_2}(y_2) \geq 0$, then, from Lemma~\ref{lem:obstruction3},
it follows that $\defi_{\S_2}(y_2) = 0$ and has minimum rank, i.e., $\rank_{\S_2}(y_2) = \ell - 1$.
However, this is a contradiction, since we showed that if $\rank_{\S_2}(y_2) = \ell - 1$, then $\defi_{\S_2}(y_2) \geq 2$.

It must be then the case that $\defi_{\S_2}(y_2) < 0$ and clearly $\rank_{\S_2}(y_2) \geq \ell$
(i.e., $y_2$ has not minimum rank in $S_2$).
Then, by Lemma~\ref{lem:obstruction2}, we have that $\rank_{\S_3}(y_2) \leq \ell -1$,
where $\S_3 = (\notS_2 \cup \{y_2\}, S_2 \cup \{y_2\})$ (see Line~\ref{line:S3}).
Note that it must be also the case that either $\rank_\S(y_2) \geq \ell + 1$ or
$\rank_\S(y_2) = \ell$, $W(y_2, y) = 1$ and $W(y_2, y_1) = 0$.
Indeed $\rank_\S(y_2)\geq \ell$, since $\ell$ is the minimum rank in $\S$.
If $\rank_\S(y_2) = \ell$, then $\defi_{\S}(y_2) \geq 0$.
In this case if $W(y_2, y) = 0$ and $W(y_2, y_1) = 1$, then $\rank_{\S_2}(y_2) = \ell - 1$, a contradiction.
If $W(y_2, y) = W(y_2, y_1)$, then $\defi_{\S_2}(y_2) = \defi_{\S}(y_2) \geq 0$, still a contradiction.

We are now ready to prove that the bisection $\T_3$ returned at Line~\ref{line:returnT3} is good.
Recall that $\T_3$ is the bisection associated to $y_2$-pair $(A_{y_2}, B_{y_2})$ for $\S_3$,
i.e., $\T_3 = (\notS_2 \cup \{y_2\} \setminus A_{y_2} \cup B_{y_2},
S_2 \setminus \{y_2\} \cup A_{y_2} \setminus B_{y_2})$,
where $|A_{y_2}| = |B_{y_2}| = \rank_{\S_3}(y_2) \leq \ell - 1$.

We first prove that for every $x \in \notS_2 \cup \{y_2\} \setminus A_{y_2}$,
we have that $\defi_{\T_3}(x) \geq -a_x$.
If $x \neq y$, we distinguish two cases.
If $\rank_\S(y_2) \geq \ell + 1$,
then by applying Lemma~\ref{lem:minimality} to $y_2 \in S$ and $x \in \notS$,
we have that
$$
 \defi_\S(x) + 2W(x,y_2) \geq - \defi_\S(y_2) + a_{y_2} - a_x \geq -a_x + 2\ell,
$$
where we used that $\rank_\S(y_2) \geq \ell + 1$ and thus $\defi_\S(y_2) \leq a_{y_2} - 2\ell$.
Then,
$$
 \defi_{\S_2}(x) = \defi_\S(x) + 2W(x,y) - 2W(x, y_1) \geq -a_x + 2\ell - 2.
$$
If $\rank_\S(y_2) < \ell + 1$,
then, as stated above, it must be the case that $\rank_\S(y) = \ell$, $W(y_2, y) = 1$ and $W(y_2, y_1) = 0$.
Then, from Lemma~\ref{lem:caseT3noproblem}, it holds that
\begin{align*}
 \defi_{\S_2}(x) + 2W(x,y_2) & \geq - \defi_{\S_2}(y_2) + a_{y_2} - a_x  - 1\\
 & \geq - a_x + 2\ell - 1,
\end{align*}
where we used that $\rank_{\S_2}(y_2) = \rank_\S(y_2) + 1 = \ell + 1$ and thus $\defi_\S(y_2) \leq a_{y_2} - 2\ell$.

Hence, in both cases, we have
\begin{align*}
 \defi_{\T_3}(x) & \geq \defi_{\S_3}(x) - 2W(x,A_{y_2}) \geq \defi_{\S_2}(x) + 2W(x,y_2) - 2(\ell-1)\\
 & \geq  - a_x + 2\ell - 1 - 2(\ell-1) \geq -a_x + 1.
\end{align*}
If $x = y$, then 
\begin{align*}
 \defi_{\T_3}(y) & \geq \defi_{\S_3}(y) - 2(\ell-1)\\
 & = \defi_{\S_2}(y) + 2W(y,y_2) - 2(\ell-1)\\
 & = - \defi_{\S}(y) + 2W(y,y_2) - 2(\ell-1),
\end{align*}
where we used that $\defi_{\S_2}(y) = - \defi_{\S}(y)$ because $W(y, y_1) = 0$.
Since, as showed above, $\defi_{\S}(y) = 0$ and $\ell = \left\lceil\frac{a_y + 1}{2}\right\rceil \leq \frac{a_y + 2}{2}$,
we have that $\defi_{\T_3}(y) \geq - a_y + 2W(y,y_2) \geq -a_y$.

Finally, we prove that for every $x \in B_{y_2}$, $\defi_{\T_3}(x) \geq -a_x$.
Recall that $B_{y_2} \subseteq S_2 \setminus \{y_2\}$
and $W(x, y_2) = 1$ for every $x \in B_{y_2}$.
We distinguish two cases. If $x \neq y_1$, then
\begin{align*}
 \defi_{\T_3}(x) & \geq - \defi_{\S_3}(x) - 2(\ell - 1) = - \defi_{\S_2}(x) + 2W(x,y_2) - 2(\ell-1)\\
 & = -\defi_{\S}(x) + 2W(x,y) - 2W(x,y_1) + 2 - 2(\ell-1)\\
 & \geq -a_x + 2\ell -2 + 2W(x,y) - 2W(x,y_1) + 2 - 2(\ell-1) \geq -a_x,
\end{align*}
where we used that $\rank_\S(x) \geq \ell$ and thus $\defi_\S(x) \leq a_x - 2\ell + 2$.
If $x = y_1$, then
\begin{align*}
 \defi_{\T_3}(y_1)
 & \geq - \defi_{\S_3}(y_1) - 2(\ell-1)\\
 & = - \defi_{\S_2}(y_1) + 2W(y_1,y_2) - 2(\ell-1)\\
 & = \defi_{\S}(y_1) + 2 - 2(\ell-1),
\end{align*}
where we used that $\defi_{\S_2}(y_1) = -\defi_{\S}(y_1)$ because $W(y, y_1) = 0$ and $W(y_1, y_2) = 1$ because $y_1 \in B_{y_2}$.
Since $\defi_\S(y_1) \geq a_y - a_{y_1} \geq 2(\ell-1) - a_{y_1}$, then
$\defi_{\T_3}(y_1) \geq - a_{y_1} + 2 \geq -a_{y_1}$.

\paragraph{{\tt MinRankInS} stops at Line~\ref{line:neg_defS}.}
In this case $y$ and $y_1$ have minimum rank in $\S_2$
and there is a vertex $w \in S \setminus \{y\} \cup \{y_1\}$ of minimum rank $\ell$ and negative deficiency in $\S_2$.
Note that if $\rank_{\S_2}(w) \leq \rank_{\S}(w)$, then $\defi_{\S_2}(w) \geq \defi_{\S}(w)$.
Thus, since in $\S$ all vertices of minimum rank have non-negative deficiency,
it must be the case that $\rank_\S(w) = \ell + 1$, and $W(w, y) = 0$ and $W(w,y_1)=1$.
Thus the $w$-pair defined at Line~\ref{line:pair_neg_defS} can be constructed.

Consider now the bisection $\S_4$ defined at Line~\ref{line:Sp_neg_defS}.
Observe that $\defi_{\S_4}(w)=-\defi_{\S_2}(w)$ and therefore
\begin{align*}
\rank_{\S_4}(w) & = \left\lceil \frac{a_w+1-\defi_{\S_4}(w)}{2}\right\rceil = 
                   \left\lceil \frac{a_w+1+\defi_{\S_4}(w)}{2}\right\rceil\\
   &=\left\lceil\frac{a_w+1-\defi_{\S_2}(w)}{2}\right\rceil+\defi_{\S_2}(w)=\rank_{\S_2}(w) + \defi_{\S_2}(w) \leq \ell - 1,
\end{align*}
where we used that $w$ has rank $\ell$ and negative deficiency in $\S_2$.

Now, for every $x \in \notS_2 \setminus A_w$, we have
$$
\defi_{\T_5}(x) \geq \defi_{\S_2}(x)+2W(x,w) -2\rank_{\S_4}(w) \geq -\defi_{\S_2}(w) + a_w - a_x - 2(\ell -1),
$$
where we used that, by Lemma~\ref{lem:last},
\begin{align*}
 \defi_{\S_2}(x)+2W(x,w) & \geq -\defi_{\S_2}(w) + a_w - a_x + 2W(y,x) + 2W(w,y_1) - 2W(w,y) - 2W(x, y_1)\\
 & \geq -\defi_{\S_2}(w) + a_w - a_x.
\end{align*}
Since $\rank_{\S_2}(w) = \ell$, then $\defi_{\S_2}(w) \leq a_w - 2\ell + 2$,
from which we achieve that $\defi_{\T_5}(x) \geq - a_x$.

Finally, let us consider $x\in B_w \subseteq S_2$. We have 
$$
\defi_{\T_5}(x) \geq -\defi_{\S_2}(x) -2\rank_{\S_4}(w) \geq -\defi_{\S_2}(x)-2(\ell - 1).
$$
However, by hypothesis $w$ has minimum rank among the non-stubborn vertices 
and thus it must be the case that $\rank_{\S_2}(x)\geq\rank_{\S_2}(w) = \ell$ which implies that
 $\defi_{\S_2}(x)\leq a_x - 2\ell + 2$.
Therefore, $\defi_\T(x)\geq -a_x$.

\subsubsection{{\tt MinRankInS} reaches Line~\ref{line:obstruction_y6}}
In this case $y$ and $y_1$ have minimum rank in $\S_2$,
but there is an obstruction $y_4$ to bisection $\T_4$
associated with $y$-pair $(A_y, B_y)$ for $\S_2$.
From Lemma~\ref{lem:obstruct_prop}, $y_4 \in \notS_2$.
Then, from Lemma~\ref{lem:last}, it holds that
$$
 \defi_{\S_2}(y_1) + \defi_{\S_2}(y_4) + 2W(y_1,y_4) \geq a_{y_1} - a_{y_4}.
$$
Thus, by Lemma~\ref{lem:obstruction}, $y_4$ and $y_1$ are adjacent.
This shows that it is always possible to pick a $y_1$-pair $(A_{y_1},B_{y_1})$ as defined at Line~\ref{line:T5}.
However, there is obstruction $y_6$ to the bisection $\T_6$ associated with this pair.

We start by proving some properties of $y_4$ and $y_6$ and of the bisection
$\S_6$ defined in Line~\ref{line:S5} and obtained by swapping $y_4$ and $y_6$.
In particular, we will prove that $\S_6$ is a minimal bisection.
Note that this implies that, according to Remark~\ref{remark:notSmin},
if the bisection is returned at Line~\ref{line:lastreturn},
then it enjoys the desired properties.
Hence, it will be sufficient to show
that the bisections returned at Line~\ref{line:returnT9} and Line~\ref{line:returnT8} are good.

\paragraph{Properties of $y_4$ and $y_6$.}
\begin{lemma}
 \label{lem:s6minimal}
 $\Phi(\S_6) = \Phi(\S_2) - 1 = \Phi(\S)$. Hence, since $\S$ is $3$-minimal, $\S_6$ is minimal.
\end{lemma}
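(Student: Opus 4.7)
The plan is to prove the two equalities $\Phi(\S_2) = \Phi(\S) + 1$ and $\Phi(\S_6) = \Phi(\S_2) - 1$ separately, then combine them. Both rest on the standard one-swap identity, already implicit in the proof of Lemma~\ref{lem:caseT3noproblem}: if $\S'$ is obtained from a bisection $\S$ by moving $u\in S$ to $\notS$ and $v\in\notS$ to $S$, then
\[
\Phi(\S') - \Phi(\S) \;=\; \defi_\S(u) + \defi_\S(v) + 2 W(u,v) + a_v - a_u,
\]
which is immediate from the definition of $\Phi$.

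First, I would apply this identity to the swap $y \leftrightarrow y_1$ that takes $\S$ to $\S_2$. Plugging in $\defi_\S(y)=0$ from Lemma~\ref{lem:obstruction3} and $W(y,y_1)=0$ from Lemma~\ref{lem:second}, the change collapses to $\defi_\S(y_1) + a_{y_1} - a_y$. By Lemma~\ref{lem:third}, this quantity is either $0$ (when $\defi_\S(y_1) = a_y - a_{y_1}$) or $1$ (when $\defi_\S(y_1) = a_y - a_{y_1} + 1$). I would rule out the first subcase by appealing to the algorithmic flow: in that subcase $\S_2$ would have the same potential as $\S$, hence would itself be $3$-minimal with $y\in\notS_2$ of minimum rank, so by Remark~\ref{remark:notSmin} a good bisection would have been produced before Line~\ref{line:obstruction_y6} was ever reached. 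Thus $\defi_\S(y_1) = a_y - a_{y_1} + 1$ and $\Phi(\S_2) = \Phi(\S)+1$.

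Next, I would apply the same identity to the swap $y_6 \leftrightarrow y_4$ taking $\S_2$ to $\S_6$:
\[
\Phi(\S_6) - \Phi(\S_2) \;=\; \defi_{\S_2}(y_6) + \defi_{\S_2}(y_4) + 2 W(y_6,y_4) + a_{y_4} - a_{y_6}.
\]
The lower bound $\Phi(\S_6) - \Phi(\S_2) \geq -1$ is immediate from the $3$-minimality of $\S$: $\S_6$ is reachable from $\S$ by two swaps, so $\Phi(\S_6)\geq\Phi(\S)=\Phi(\S_2)-1$. For the matching upper bound, I would combine the obstruction inequalities $\defi_{\T_4}(y_4)\leq -a_{y_4}-1$ and $\defi_{\T_6}(y_6)\leq -a_{y_6}-1$ with the definitions of the $y$-pair $(A_y,B_y)$ and the $y_1$-pair $(A_{y_1},B_{y_1})$ for $\S_2$ (the latter chosen with $y_4\in B_{y_1}$), translating them into upper bounds on $\defi_{\S_2}(y_4)$ and $\defi_{\S_2}(y_6)$ expressed in terms of the adjacencies $W(y_6,y)$, $W(y_4,y)$, $W(y_6,y_4)$ and $W(y_1,y_6)$. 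The failure of the \textbf{If}-condition at Line~\ref{line:T7} — that these four edges are not simultaneously present — eliminates the one adjacency pattern in which the combined bound would only be $0$, forcing the sum to be at most $-1$ in every remaining case.

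The main obstacle will be this upper bound: it demands a careful case analysis by the adjacencies among $y, y_1, y_4, y_6$, precise bookkeeping of the contributions of each such edge to $W(y_i, A_\cdot)$ and $W(y_i, B_\cdot)$ when the obstruction inequalities are unfolded, and exploitation of the deliberate choice $y_4\in B_{y_1}$ to extract the cross-term that buys the final unit. Once both bounds are established we obtain $\Phi(\S_6)=\Phi(\S_2)-1=\Phi(\S)$; minimality of $\S_6$ then follows instantly, since any single further swap from $\S_6$ would produce a bisection reachable from $\S$ by at most three swaps, and hence with potential at least $\Phi(\S)=\Phi(\S_6)$ by $3$-minimality of $\S$.
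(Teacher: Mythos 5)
Your decomposition into two separate equalities is not how the paper argues, and both of your sub-arguments have genuine problems; the paper instead runs a single proof by contradiction. Concretely, it supposes $\Phi(\S_6)\geq\Phi(\S_2)$ (which is forced if either equality fails, since $\Phi(\S_2)-\Phi(\S)\in\{0,1\}$ by Lemma~\ref{lem:third} and $\Phi(\S_6)\geq\Phi(\S)$ by $3$-minimality). Then $\Phi(\S_6)-\Phi(\S_2)\geq 0$ is exactly condition~\eqref{eq:min_cond} for $y_6$ and $y_4$ in $\S_2$, so the ``moreover'' part of Lemma~\ref{lem:obstruction} yields that $y_6$ and $y_4$ are adjacent and $\rank_{\S_2}(y_6)=\ell$. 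Combining $\defi_{\S_2}(y_6)\geq 0$ (otherwise the algorithm would have stopped at Line~\ref{line:neg_defS}), the rank bound $\defi_{\S_2}(y_6)\leq a_{y_6}-2\ell+2$, and $W(y_6,B_{y_1})\geq 1$ (from adjacency with $y_4\in B_{y_1}$) gives $\defi_{\T_6}(y_6)\geq -a_{y_6}$, contradicting that $y_6$ is an obstruction to $\T_6$. Both equalities drop out simultaneously.

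Your first sub-argument (ruling out $\defi_\S(y_1)=a_y-a_{y_1}$) does not go through. Even if $\Phi(\S_2)=\Phi(\S)$ so that $\S_2$ is a minimal bisection with $y\in\notS_2$ of minimum rank, Remark~\ref{remark:notSmin} only guarantees that \emph{\MinRankInNotS{$\S_2$}} would succeed --- and the algorithm never invokes that procedure here. Control is inside \MinRankInS, whose Lines~\ref{line:T4} and \ref{line:T5} pick specific $y$-pairs and $y_1$-pairs, not the particular pairs that \MinRankInNotS{} would try (it loops over $v\in S\cap N(u)\cap M$ with $v\in A_u$, and has its own fallback branches). Nothing forces \MinRankInS{} to produce a good bisection before Line~\ref{line:obstruction_y6} just because a different procedure would.

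Your second sub-argument is circular with respect to the algorithm's control flow. Lemma~\ref{lem:s6minimal} must hold as soon as $y_6$ is defined at Line~\ref{line:obstruction_y6}, i.e.\ \emph{before} the \textbf{If}-test ``$W(y_6,y)=W(y_4,y)=W(y_6,y_4)=1$ and $W(y_1,y_6)=0$'' is evaluated. The lemma is indeed used (through $a_y$ even, $\ell=\frac{a_y+2}{2}$, equation~\eqref{eq:defiS}, and Lemma~\ref{lem:cx}) in the analysis of the branch where that test \emph{passes} (stop at Line~\ref{line:returnT9}). You therefore cannot assume the test fails. The pattern you flag as giving only a $0$ bound is exactly condition~1 of Lemma~\ref{lem:conditions}, and the paper shows that $\Phi(\S_6)=\Phi(\S_2)-1$ holds there too; the extra unit your plan is missing comes from the adjacency $W(y_6,y_4)=1$ forcing $W(y_6,B_{y_1})\geq 1$, which your accounting of the obstruction inequalities does not appear to capture without first invoking Lemma~\ref{lem:obstruction}.

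Your final remark --- that the two equalities imply $1$-minimality of $\S_6$ via $3$-minimality of $\S$ --- is correct and matches the paper.
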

\begin{proof}
We first remind the reader that 
\begin{align*}
 \Phi(\S_2) - \Phi(\S) & = \defi_\S(y) + \defi_\S(y_1) + 2W(y,y_1) + a_{y_1} - a_y\\
   & \geq 0 + (a_y - a_{y_1} + 1) + 0 + a_{y_1} - (a_y - 1) = 1,
\end{align*}
and thus $\Phi(\S_2) \geq \Phi(\S) + 1$. Moreover, since $\S$ is $3$-minimal, $\Phi(\S_6) \geq \Phi(\S)$.

Suppose now, by sake of contradiction, that $\Phi(\S_2) - 1 \neq \Phi(\S)$ or $\Phi(\S_6) \neq \Phi(\S_2)$.
In both cases we have that $\Phi(\S_6) \geq \Phi(\S_2)$ and thus
$$
 0 \leq \Phi(\S_6) - \Phi(\S_2) = \defi_{\S_2}(y_6) + \defi_{\S_2}(y_4) + 2W(y_6, y_4) + a_{y_4} - a_{y_6}.
$$
It is then possible to apply Lemma~\ref{lem:obstruction} to $y_6$ and $y_4$,
and have that these two vertices are adjacent. Thus,
since by construction $y_4\in B_{y_1}$, $W(y_6, B_{y_1}) \geq 1$.
Moreover, by Lemma~\ref{lem:obstruction}, $\rank_{\S_2}(y_6)=\ell$ and thus minimal.
Hence, it must be the case that $\defi_{\S_2}(y_6) \geq 0$ (otherwise the algorithm stops at Line~\ref{line:neg_defS})
and $\defi_\S(y_6)\leq a_{y_6}-2\ell+2$,
from which we achieve that $a_{y_6}-2\ell+2\geq 0$ and then $\defi_\S(y_6)\geq-(a_{y_6}-2\ell+2)$.
Therefore, 
$$
 \defi_{\T_6}(y_6)=\defi_{\S_2}(y_6)-2W(y_6,A_{y_1})+2W(y_6,B_{y_1})\geq -(a_{y_6}-2\ell+2) -2\ell +2 = -a_{y_6},
$$
that is a contradiction because $\defi_{\T_6}(y_6) < -a_{y_6}$ since $y_6$ is an obstruction to $\T_6$.
\end{proof}

By Lemma~\ref{lem:last}, since $\Phi(\S_6) = \Phi(\S_2) - 1$, it must be the case that $\defi_\S(y_1) \neq a_y - a_{y_1}$.
Then, from Lemma~\ref{lem:third}, we have that
$a_y$ is even (hence, $\ell = \frac{a_y + 2}{2}$) and $\defi_\S(y_1) = a_y - a_{y_1} + 1$.

\begin{lemma}
 \label{lem:conditions}
 One of the following conditions is satisfied:
\begin{enumerate}
 \item \label{item:cond1} $W(y_6, y) = W(y_4, y) = W(y_6, y_4) = 1$ and $W(y_1, y_6) = 0$;
 \item \label{item:cond2} $W(y_6, y) = 1$ and $W(y_4, y) = 0$;
 \item \label{item:cond3} $W(y_6, y_1) = W(y_6, y_4) = 0$.
\end{enumerate}
\end{lemma}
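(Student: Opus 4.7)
My plan is to argue by exhaustive case analysis on the four remaining undetermined adjacencies among the vertices $\{y, y_1, y_4, y_6\}$, namely $W(y,y_4)$, $W(y,y_6)$, $W(y_4,y_6)$, and $W(y_1,y_6)$. The two other adjacencies are already pinned down: $W(y,y_1)=0$ by Lemma~\ref{lem:second}, and $W(y_1,y_4)=1$ because $y_4 \in B_{y_1}$ by the choice made at Line~\ref{line:T5}. Of the sixteen $0/1$-configurations, nine are absorbed by one of the three stated conditions. What is left is to rule out the remaining seven configurations, which group naturally into the three sub-cases
\textbf{(A)} $W(y,y_6)=0,\; W(y_1,y_6)=1$;
\textbf{(B)} $W(y,y_6)=0,\; W(y_1,y_6)=0,\; W(y_4,y_6)=1$;
\textbf{(C)} $W(y,y_6)=1,\; W(y,y_4)=1,\; W(y_1,y_6)=1$.

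The proof will combine three kinds of constraints. First, the obstruction inequalities: for $\T_4$ this gives $\defi_{\S_2}(y_4)+2W(y_4,A_y)-2W(y_4,B_y) < -a_{y_4}$, and for $\T_6$ it gives $\defi_{\S_2}(y_6)-2W(y_6,A_{y_1})+2W(y_6,B_{y_1}) < -a_{y_6}$. Second, the tight potential identity $\defi_{\S_2}(y_4)+\defi_{\S_2}(y_6)+2W(y_4,y_6) = a_{y_6}-a_{y_4}-1$ extracted in the proof of Lemma~\ref{lem:s6minimal}, together with the structural facts derived there and in Lemma~\ref{lem:third} that $a_y$ is even, $\ell = (a_y+2)/2$, and $\defi_\S(y_1)=a_y-a_{y_1}+1$; also the lower bound $\defi_{\S_2}(y_4) \geq a_y-a_{y_4}-1$ obtained by applying Lemma~\ref{lem:last} to $(y_1,y_4)$. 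Third, the freedom the algorithm has in choosing the $y$-pair $(A_y,B_y)$ and the $y_1$-pair $(A_{y_1},B_{y_1})$ (the latter subject only to $y_4 \in B_{y_1}$), which lets me exhibit alternative pairs that would no longer be obstructed, contradicting the fact that $y_4$ (respectively $y_6$) had to be an obstruction.

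For each sub-case I plan to derive a contradiction in one of two ways: either (i) I construct an alternative $y$-pair or $y_1$-pair in which the would-be obstruction is removed (using the adjacency pattern of the sub-case to swap $y_6$ into $A_{y_1}$ or $y_4$ out of $B_{y_1}$ while preserving cardinalities and the disjointness conditions), contradicting the fact that $y_4$ or $y_6$ was picked as an obstruction; or (ii) I assemble a 3-swap of $\S$ whose cumulative potential drop contradicts the $3$-minimality of $\S$, using the exact identity for $\Phi(\S_6)-\Phi(\S)$ as a starting point and absorbing any slack into the guaranteed inequality. The hardest sub-case is expected to be \textbf{C}: here $y_6$ is adjacent to $y$, to $y_1$, and possibly to $y_4$, so the natural single-step replacements do not immediately give a contradiction, and I expect to have to combine the tight deficiency equalities with a careful comparison between $B_{y_1}$ and the alternative set $B_{y_1}\setminus\{y_4\}\cup\{y_6'\}$ for some neighbor $y_6'$ of $y_1$, thereby forcing $\defi_{\T_6'}(y_6)\geq -a_{y_6}$ and contradicting the status of $y_6$ as obstruction.
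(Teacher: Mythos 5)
Your decomposition into the sub-cases (A), (B), (C) is correct as far as it goes (though a small recount is in order: of the $16$ configurations of $W(y,y_4)$, $W(y,y_6)$, $W(y_4,y_6)$, $W(y_1,y_6)$, the three listed conditions absorb $8$, not $9$, and your cases (A)--(C) cover the other $8$, not $7$), and you have correctly spotted the two hard facts the paper leans on: $W(y,y_1)=0$ and $W(y_1,y_4)=1$, together with the tight potential identity from Lemma~\ref{lem:s6minimal}. However, the paper's proof is far more direct than what you are proposing, and you are reaching for machinery you do not need.

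The key step you are missing is to apply the \emph{second} bullet of Lemma~\ref{lem:last} directly to the pair $u=y_6\in S_2$, $v=y_4\in\notS_2\setminus\{y\}$ (the second bullet applies because $\defi_\S(y_1)\neq a_y-a_{y_1}$ was established just after Lemma~\ref{lem:s6minimal}, and $y_6\neq y_1$). That gives an exact decomposition
$$\defi_{\S_2}(y_6) + \defi_{\S_2}(y_4) + 2W(y_6,y_4) = a_{y_6} - a_{y_4} + c + 2W(y,y_4) + 2W(y_6,y_1) - 2W(y_6,y) - 2W(y_4,y_1)$$
with $c\geq\max\bigl\{0,\ 2W(y_6,y_4)-2W(y_1,y_6)-W(y,y_4)\bigr\}$. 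Plugging in $\Phi(\S_6)-\Phi(\S_2)=-1$ from Lemma~\ref{lem:s6minimal} for the left-hand side (minus $a_{y_6}-a_{y_4}$) and $W(y_4,y_1)=1$, this becomes the single linear equation
$$-1 = c + 2W(y,y_4) + 2W(y_6,y_1) - 2W(y_6,y) - 2,$$
and now ruling out (A), (B), (C) is just arithmetic. In (A) ($W(y,y_6)=0$, $W(y_1,y_6)=1$) the right-hand side is $c+2W(y,y_4)\geq 0$. In (B) ($W(y,y_6)=W(y_1,y_6)=0$, $W(y_4,y_6)=1$) the lower bound on $c$ gives $c\geq 2-W(y,y_4)$, so the right-hand side is at least $-W(y,y_4)\geq -1$ with equality forcing $W(y,y_4)=-1$, impossible. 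In (C) ($W(y,y_6)=W(y,y_4)=W(y_1,y_6)=1$), the right-hand side is $c\geq 0$. All three fail, with no case analysis on the remaining free adjacency and no alternative-pair or $3$-swap argument required. Notably, your ``hardest'' sub-case (C) is in fact the most immediate: it collapses to $-1=c\geq 0$ in one line.

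The alternative-pair and $3$-swap arguments you plan for route (i) and route (ii) are not wrong in spirit, but they add substantial complexity (you would have to worry about whether the swapped vertex is available, preserves cardinalities, stays in $N(y_1)$, etc.) when the contradiction is already baked into the combination of Lemma~\ref{lem:s6minimal} and the second bullet of Lemma~\ref{lem:last}. If you rewrite the proof around that single equation, the whole lemma becomes a three-case arithmetic check.
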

\begin{proof}
 By Lemma~\ref{lem:s6minimal} and Lemma~\ref{lem:last}, we have that
\begin{equation}
\label{eq:bad_case}
 \begin{aligned}
 - 1 = \Phi(\S_6) - \Phi(\S_2) & = \defi_{\S_2}(y_6) + \defi_{\S_2}(y_4) + 2W(y_6, y_4) + a_{y_4} - a_{y_6}\\
 & = c + 2W(y,y_4) + 2W(y_6,y_1) - 2W(y_6,y) - 2W(y_4, y_1)\\
 & = c + 2W(y,y_4) + 2W(y_6,y_1) - 2W(y_6,y) - 2
\end{aligned}
\end{equation}
for $c \geq \max\left\{0, 2W(y_6,y_4) - 2 W(y_1, y_6) - W(y, y_4)\right\}$.

Suppose, for sake of contradiction, that conditions \ref{item:cond1}-\ref{item:cond3} are not satisfied.
First consider the case that $W(y,y_4) = 0$. Then, $W(y_6,y) = 0$.
Now, if $W(y_6, y_1) = 1$, then \eqref{eq:bad_case} fails.
If $W(y_6,y_1) = 0$. Then, $W(y_6, y_4) = 1$.
Therefore, $c \geq 2$ and thus \eqref{eq:bad_case} fails.

Consider now that $W(y_4, y) = 1$. Then, either $W(y_6,y) = 0$ or $W(y_6, y_4) = 0$ or $W(y_6,y) = W(y_6, y_4) = W(y_1, y_6) = 1$.
In the first case and in the third case, \eqref{eq:bad_case} fails.
In the second case, we must have $W(y_6, y_1) = 1$, and then \eqref{eq:bad_case} fails again.
\end{proof}

\begin{lemma}
\label{lem:cx}
$c_{x} \leq 1 + 2W(y_1,x)$ for every $x \in S$.
\end{lemma}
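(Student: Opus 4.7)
The plan is to unpack the definition of $c_x$ from the proof of Lemma~\ref{lem:last}, substitute the known value of $\defi_\S(y_1)$ obtained in the paragraph just after Lemma~\ref{lem:s6minimal}, and reduce the claim to an upper bound on $\defi_\S(x)$ that is immediate from the minimality of $\S$.

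First, I will recall from equation \eqref{eq:defiS} in the proof of Lemma~\ref{lem:last} that
\[
c_x \;=\; \defi_\S(x) + \defi_\S(y_1) + 2W(y_1,x) + a_{y_1} - a_x,
\]
the slack coming from Lemma~\ref{lem:minimality} applied to $x \in S$ and $y_1 \in \notS$. Substituting the relation $\defi_\S(y_1) = a_y - a_{y_1} + 1$, which we established via Lemma~\ref{lem:third} in the paragraph right after Lemma~\ref{lem:s6minimal}, collapses the expression to
\[
c_x \;=\; \defi_\S(x) + a_y + 1 + 2W(y_1,x) - a_x.
\]
Thus the desired inequality $c_x \le 1 + 2W(y_1,x)$ is equivalent to $\defi_\S(x) \le a_x - a_y$. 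Since $a_y$ is even and $\ell = (a_y+2)/2$, this is the same as $\defi_\S(x) \le a_x - 2\ell + 2$.

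For non-stubborn $x \in S$ this follows directly from \eqref{eq:def_rank} combined with the fact that $\rank_\S(x) \ge \ell$, because $\ell$ is the minimum rank; in particular, for $x = y$ the inequality holds with equality since $\defi_\S(y) = 0$. The main obstacle I foresee is handling a stubborn $x \in S$, whose rank in $\S$ may in principle drop below $\ell$ while the generic bound $\defi_\S(x) \le a_x$ from Lemma~\ref{lem:defiForbibben} is too weak when $a_y \ge 2$. Here I plan to invoke the $3$-minimality of $\S$ through a $2$-swap of $\{x,y\}$ against $\{y_1,v\}$ for a suitably chosen $v \in \notS$: after substituting $\defi_\S(y)=0$, $\defi_\S(y_1)=a_y-a_{y_1}+1$, and $W(y,y_1)=0$ into the resulting potential-change identity, the nonnegativity of the swap should give exactly the missing bound $\defi_\S(x) \le a_x - a_y$, completing the proof.
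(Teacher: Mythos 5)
For non-stubborn $x\in S$ your plan is correct and is essentially the paper's own argument: $c_x = \defi_\S(x) + a_y + 1 + 2W(y_1,x) - a_x$, so the claim reduces to $\defi_\S(x)\le a_x-a_y=a_x-2\ell+2$, which follows from $\rank_\S(x)\ge\ell$ together with \eqref{eq:def_rank}. (The paper's written proof has a mechanical slip, using $y_6$ where $x$ is meant, but it is the same computation in contrapositive form: $c_x\ge 2+2W(y_1,x)$ would force $\defi_\S(x)\ge a_x-a_y+1$ and hence $\rank_\S(x)\le\lceil a_y/2\rceil=\ell-1$, contradicting minimality of $\ell$.)

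The fix you sketch for stubborn $x$ does not work. Since $x\in S$, any swap that moves $x$ across the cut introduces the term $+\defi_\S(x)$ into $\Phi(\T)-\Phi(\S)$; in the identity you would derive, $\defi_\S(x)$ appears with a positive sign, exactly as $\defi_\S(u)$ does for $u\in S$ in Lemma~\ref{lem:minimality}. Minimality gives $\Phi(\T)-\Phi(\S)\ge 0$, which therefore bounds $\defi_\S(x)$ from below, not from above; no choice of $v$ (nor any $k$-swap that moves $x$ out of $S$) can flip this sign, so this route cannot yield $\defi_\S(x)\le a_x-a_y$, and for a stubborn vertex with $\defi_\S(x)$ near $a_x$ the bound can genuinely fail. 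Fortunately the extension is not needed: $\ell$ is the minimum rank among non-stubborn vertices (the set $M$ of Algorithm~\ref{algo}), and the lemma is only ever applied to $y_6$, which is an obstruction and hence non-stubborn by Lemma~\ref{lem:defiForbibben}, and to vertices $x\in B_y$, for which $\defi_{\T_7}(x)\ge -a_x$ need only be verified when $x$ is non-stubborn (stubborn ones satisfy it automatically by the same lemma). So the statement should really be read as restricted to non-stubborn $x\in S$, and your first paragraph already proves exactly that.
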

\begin{proof}
If $c_{x} \geq 2 + 2W(y_1,x)$, then
$$
\defi_\S(x) = a_{x} - a_y - 1 + c_{y_6} - 2W(y_1,y_6) \geq a_{y_6} - a_y + 1.
$$
Therefore,
$$
 \rank_\S(y_6) = \left\lceil\frac{a_{y_6} + 1 - \defi_\S(y_6)}{2}\right\rceil = \left\lceil\frac{a_y}{2}\right\rceil = \ell - 1,
$$
where we used that $\ell = \frac{a_y + 2}{2}$.
Anyway, this is a contradiction since $\ell$ is the minimum rank in $\S$.
\end{proof}

\begin{lemma}
\label{lem:minrank}
 If $W(y_6, y) = 1$ and $W(y_4, y) = 0$, then there is a vertex in $\notS_6$ of minimum rank.
\end{lemma}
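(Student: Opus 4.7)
The plan is to identify a non-stubborn vertex in $\notS_6$ that attains the minimum rank in $\S_6$; the natural candidate is $y$ itself, whose rank I expect to drop from $\ell$ (in $\S$) to $\ell - 1$ (in $\S_6$) under the hypothesis.

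First I would collect the preliminary facts. Lemma~\ref{lem:s6minimal} gives $\Phi(\S_6) = \Phi(\S_2) - 1$, which via Lemma~\ref{lem:last} rules out $\defi_\S(y_1) = a_y - a_{y_1}$; Lemma~\ref{lem:third} then forces $a_y$ to be even with $\defi_\S(y_1) = a_y - a_{y_1} + 1$, so $\ell = (a_y + 2)/2$. Combining $\defi_\S(y) = 0$ from Lemma~\ref{lem:obstruction3} with $W(y, y_1) = 0$ from Lemma~\ref{lem:second}, the swap from $\S$ to $\S_2$ merely flips $y$'s side and produces $\defi_{\S_2}(y) = 0$.

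Next I would compute the deficiency after the second swap. Going from $\S_2$ to $\S_6$, $y_6$ moves from $S_2$ to $\notS_6$ and $y_4$ moves from $\notS_2$ to $S_6$, while $y$ stays in $\notS$. A direct edge count under the hypothesis $W(y, y_6) = 1$ and $W(y, y_4) = 0$ gives $\defi_{\S_6}(y) = \defi_{\S_2}(y) + 2W(y, y_6) - 2W(y, y_4) = 2$, whence $\rank_{\S_6}(y) = \lceil (a_y - 1)/2 \rceil = a_y/2 = \ell - 1$. Since obstructions cannot be stubborn (Lemma~\ref{lem:defiForbibben}), $y$ is non-stubborn, so $y \in \notS_6$ is a non-stubborn vertex of rank $\ell - 1$, proving that the minimum rank in $\S_6$ is at most $\ell - 1$ and is witnessed in $\notS_6$.

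The main obstacle is to exclude the possibility that some vertex of $S_6$ attains a rank strictly smaller than any vertex of $\notS_6$, which would leave the minimum rank only in $S_6$. I would argue by contradiction: any non-stubborn $w \in S_6$ with $\rank_{\S_6}(w) \leq \ell - 2$ must satisfy $\defi_{\S_6}(w) \geq a_w - 2\ell + 5$, whereas the minimum-rank-$\ell$ property of $\S$ gives $\defi_\S(w) \leq a_w - 2\ell + 2$. The required jump $\defi_{\S_6}(w) - \defi_\S(w) \geq 3$, combined with the fact that each of $y, y_6, y_1, y_4$ can shift $w$'s deficiency by at most $\pm 2$ through adjacency, pins $w$ down to the specific pattern $W(w, y_1) = W(w, y_4) = 1$ and $W(w, y) = W(w, y_6) = 0$. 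From this constrained configuration I expect to exhibit a $3$-swap of $\S$ whose potential is strictly below $\Phi(\S)$, contradicting the $3$-minimality established at Line~\ref{line:bisection}.
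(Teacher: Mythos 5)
Your computation of $\defi_{\S_6}(y)=2$ and $\rank_{\S_6}(y)=\ell-1$, together with the observation that $y$ is non-stubborn, is correct and matches the paper. The second half of the argument, however, has a genuine gap. The comparison $\defi_{\S_6}(w)-\defi_\S(w)\ge 3$ (forced to $\ge 4$ by parity) only makes sense when $w$ lies on the $S$-side in both $\S$ and $\S_6$, i.e.\ when $w\in S\setminus\{y,y_6\}$. But $S_6=(S\setminus\{y,y_6\})\cup\{y_1,y_4\}$, and both $y_1$ and $y_4$ cross over from $\notS$: their $\S_6$-deficiencies relate to their $\S$-deficiencies through a sign flip plus adjacency corrections, not a simple additive shift, so the ``$\pm 2$ per neighbor'' bound and the resulting pinning pattern do not apply to them. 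The vertex $y_4$ genuinely needs a separate argument, and the paper supplies one: from Lemma~\ref{lem:fourth} it gets $\defi_{\S_2}(y_4)\ge a_y-a_{y_4}-1$, hence $\defi_{\S_6}(y_4)\le a_{y_4}-2\ell+3$ and $\rank_{\S_6}(y_4)\ge\ell-1$.

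You also leave the contradiction you aim for unrealized; you only ``expect'' to exhibit a $3$-swap of $\S$ with $\Phi$ strictly below $\Phi(\S)$, and it is not obvious one exists (a natural $2$-swap attempt through $\S_2$ does not go below $\Phi(\S)$). It is also unnecessary: the paper instead uses that for every $x\in S_2$, $x\neq y_6$, the test preceding Line~\ref{line:returnT3} already guarantees $\rank_{\S_2}(x)\ge\ell$, and since the single swap $y_4\leftrightarrow y_6$ changes deficiency by at most $\pm 2$ and hence rank by at most $\pm 1$ for any vertex that does not itself move, we get $\rank_{\S_6}(x)\ge\ell-1$ immediately. That elementary ``rank moves by at most one per swap'' step covers $S_2\setminus\{y_6\}$ (including $y_1$) in one line, and the only remaining vertex of $S_6$ is $y_4$, handled as above. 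Your approach misses both the algorithmic invariant on $\rank_{\S_2}$ and the need to treat side-switching vertices separately.
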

\begin{proof}
 Note indeed that
 $$\defi_{\S_6}(y) = \defi_{\S_2}(y) + 2W(y_6, y) - 2W(y_4, y) = 2,$$
 where we used the hypothesis and the fact that $\defi_{\S_2}(y) = 0$ since $W(y, y_1) = 0$.
 
 Hence,
 $$\rank_{\S_6}(y) = \left\lceil\frac{a_y + 1 - \defi_{\S_6}(y)}{2}\right\rceil = \left\lceil\frac{a_y + 1}{2}\right\rceil - 1 = \ell - 1.$$
 
 The claim then follows, by observing that for every $x \in S_2$, $x \neq y_6$,
 $\rank_{\S_2}(x) \geq \ell$ (otherwise a bisection would be returned at Line~\ref{line:returnT3}),
 and thus $\rank_{\S_6}(x) \geq \ell - 1$.
 
 Moreover for $x = y_4$, by Lemma~\ref{lem:fourth}, we have
\begin{align*}
\defi_{\S_2}(y_4) &= \defi_{\S}(y_4) - 2W(y_1,y_4) + 2W(y, y_4)\\
& \geq a_y - a_{y_4} + 1 + W(y, y_4) - 2W(y_1,y_4)
& \geq a_y - a_{y_4} - 1.
\end{align*}
Hence,
\begin{align*}
 \defi_{\S_6}(y_4) & = - \defi_{\S_2}(y_4) - 2W(y_4, y_6)\\
 & \leq a_{y_4} - a_y + 1\\
 & = a_{y_4} - 2\ell + 3.
\end{align*}
Then $\rank_{\S_6}(y_4) \geq \ell - 1$.
\end{proof}

\paragraph{{\tt MinRankInS} stops at Line~\ref{line:returnT9}.}
In this case we have that $W(y_6, y) = W(y_4, y) = W(y_6, y_4) = 1$ and $W(y_1, y_6)$.
Let $\S_5$ be the bisection defined at Line~\ref{line:S7} of {\tt MinRankInS},
i.e., $\S_5 = (\notS_2 \cup \{y_6\}, S_2 \setminus \{y_6\})$.
Observe that $\defi_{\S_5}(y) = \defi_{\S_2}(y) + 2W(y, y_6) = -\defi_\S(y) - 2W(y, y_1) + 2W(y,y_6) = 2$,
where we used that $\defi_\S(y) = 0$, $W(y, y_1) = 0$ from Lemma~\ref{lem:second}, and $W(y,y_6) = 1$.
Hence,
$$
 \rank_{\S_5}(y) = \left\lceil\frac{a_y + 1 - \defi_{\S_7}(y)}{2}\right\rceil = \left\lceil\frac{a_y + 1}{2}\right\rceil - 1 = \ell - 1,
$$
where we used that $\ell = \frac{a_y + 2}{2}$.

We now prove that the bisection $\T_7 = (\notS_2 \cup \{y_6\} \setminus A_y \cup B_y, S_2 \setminus \{y_6\} \cup A_y \setminus B_y)$
defined at Line~\ref{line:T9} is good,
that is, for every non-stubborn vertex in $x \in \notS_2 \cup \{y_6\} \setminus A_y \cup B_y$ it holds that $\defi_{\T_7}(x) \geq a_x$.
Consider first $x \in \notS_2 \setminus A_y = \notS \setminus \{y_1\} \cup \{y\} \setminus A_y$.
Observe that
\begin{align*}
 \defi_{\S_5}(x) & = \defi_{\S_2}(x) + 2W(x, y_6)\\
 & = \defi_\S(x) + 2W(x, y) - 2W(x,y_1) + 2W(x, y_6).
\end{align*}
Since, from Lemma~\ref{lem:fourth}, $\defi_\S(x) \geq a_y - a_x + 1 - W(x, y)$,
we have that
\begin{align*}
 \defi_{\S_5}(x) & \geq a_y - a_x + 1 + W(x, y) - 2W(x,y_1) + 2W(x, y_6)\\
 & \geq 2\ell - a_x - 1 + W(x, y) - 2W(x,y_1) + 2W(x, y_6),
\end{align*}
where we used that $\ell = \frac{a_y + 2}{2}$ and thus $a_y \geq 2\ell - 2$.
Therefore,
\begin{align*}
 \defi_{\T_7}(x) & \geq \defi_{\S_5}(x) - 2\rank_{\S_5}(y)\\
 & \geq 2\ell - a_x - 1 + W(x, y) - 2W(x,y_1) + 2W(x, y_6) - 2\ell + 2\\
 & = - a_x + 1 + W(x, y) - 2W(x,y_1) + 2W(x, y_6)
\end{align*}
We next show that either $W(x,y_1) = 0$ or $W(x,y) + W(x, y_6) \geq 1$,
from which it follows that $\defi_{\T_7}(x) \geq -a_x + 1$.

Suppose by sake of contradiction that $W(x,y_1) = 1$ and $W(x,y) + W(x, y_6) = 0$.
Then $\defi_{\S_2}(x) = a_y - a_x + 1 + W(x, y) - 2W(x,y_1) = a_y - a_x - 1$.
On the other side, from \eqref{eq:defiS} we have
$$\defi_{\S_2}(y_6) = a_{y_6} - a_y - 1 + c_{y_6} - 2W(y_6,y) = a_{y_6} - a_y - 3 + c_{y_6}.$$
Let $\S' = (S_2 \setminus \{y_6\} \cup \{x\}, \notS_2 \setminus \{x\} \cup \{y_6\})$.
Since $\S$ is $2$-minimal and, by Lemma~\ref{lem:s6minimal}, $\Phi(\S_2) = \Phi(\S) + 1$, we need that
$$
 -1 \leq \Phi(\S') - \Phi(\S_2) = \defi_{\S_2}(y_6) + \defi_{\S_2}(x) + 2W(y_6,x) - a_{y_6} + a_x = c_{y_6} - 4.
$$
However, this is a contradiction since, by Lemma~\ref{lem:cx}, $c_{y_6} \leq 1 + 2W(y_1,y_6) = 1$,
where we used that $W(y_1, y_6) = 0$ by hypothesis.

Consider now $x = y_6$.
Observe that, from \ref{eq:defiS} and $W(y_1, y_6) = 0$, it follows that
\begin{align*}
 \defi_{\S_5}(y_6) & = -\defi_{\S_2}(y_6) = -\defi_\S(y_6) + 2W(y_6, y) - 2W(y_6,y_1)\\
 & = -a_{y_6} + a_y + 3 - c_{y_6}\\
 & = -a_{y_6} + 2\ell + 1 - c_{y_6},
\end{align*}
where we used that $\ell = \frac{a_y + 2}{2}$.
Since, by Lemma~\ref{lem:cx}, $c_{y_6} \leq 1 + 2W(y_1, y_6) = 1$, we have
$$
 \defi_{\T_7}(y_6) = \defi_{\S_5}(y_6) - 2(\ell - 1) = -a_{y_6} + 3 - c_{y_6} \geq - a_{y_6} + 2.
$$

Finally, consider $x \in B_y$. Recall that in this case $W(x,y)=1$.
From \eqref{eq:defiS}, we achieve
\begin{align*}
 \defi_{\S_5}(x) & = \defi_{\S_2}(x) - 2W(x, y_6)\\
 & = \defi_\S(x) - 2 + 2W(x,y_1) - 2W(x, y_6)\\
 & = a_{x} - a_y - 3 + c_{x} - 2W(x, y_6)\\
 & = a_{x} - 2\ell - 1 + c_{x}  - 2W(x, y_6).
\end{align*}
Then, by Lemma~\ref{lem:cx}
$$
 \defi_{\T_7}(x) = -\defi_{\S_5}(y_6) - 2(\ell - 1) = -a_{x} + 3 - c_{x} + 2W(x, y_6) \geq - a_{x},
$$
where we used that $c_x \leq 1 + 2W(x,y_1) \leq 3$.

\paragraph{{\tt MinRankInS} stops at Line~\ref{line:returnT8}.}
We first note that if {\MinRankInS} reaches Line~\ref{line:returnT9},
then it must be the case that $W(y_6, y_1) = W(y_6, y_4) = 0$.
Indeed, it cannot be the case that $W(y_6, y) = W(y_4, y) = W(y_6, y_4) = 1$,
otherwise a bisection would be returned at Line~\ref{line:returnT9}.
And, it cannot be the case that $W(y_6, y) = 1$ and $W(y_4, y) = 0$,
otherwise, according to Lemma~\ref{lem:minrank}, there is a vertex of minimum rank in $\notS_6$,
and thus a bisection would be returned at Line~\ref{line:lastreturn}.
Then, by Lemma~\ref{lem:conditions}, we have $W(y_6, y_1) = W(y_6, y_4) = 0$.

Consider then $\S_6$ as defined at Line~\ref{line:S5},
i.e., $\S_6 = (S_6, \notS_6) = (S_2 \cup \{y_4\} \setminus \{y_6\}, \notS_2 \cup\{y_6\} \setminus \{y_4\})$.
First note that
$$\defi_{\S_6}(y_1) = \defi_{\S_2}(y_1) + 2W(y_1, y_4) - 2W(y_1, y_6) = \defi_{\S_2}(y_1) + 2,$$
where $W(y_1, y_4) = 1$, as follows from Lemma~\ref{lem:obstruction} when applied to $y_4$ and $y_1$ in $\S_2$.
Consequently $\rank_{\S_6}(y_1) = \rank_{\S_2}(y_1) - 1 = \ell -1$.

Moreover observe that every vertex $x \in S_6 \setminus \{y_1\}$ of minimum rank in $\S_6$ has positive deficiency.
Indeed, either $x \in S_2 \setminus \{y_1, y_6\}$ or $x = y_4$.
In the first case, $\rank_{\S_2}(x) \geq \ell$ (otherwise a bisection was returned at Line~\ref{line:returnT3}).
Thus, $\rank_{\S_6}(x) = \ell - 1$ if and only if $\rank_{\S_2}(x) = \ell$, $W(x, y_4) = 1$ and $W(x, y_6) = 0$.
However, if $\rank_{\S_2}(x) = \ell$, then $\defi_{\S_2}(x) \geq 0$
(otherwise a bisection was returned at Line~\ref{line:neg_defS}).
Then $\defi_{\S_6}(x) = \defi_{\S_2}(x) + 2W(x, y_4) - W(x, y_6) \geq 2$.

If instead $x = y_4$, then, by Lemma~\ref{lem:fourth}, there is $c_{y_4} \geq 0$ such that
\begin{align*}
\defi_{\S_2}(y_4) & = \defi_{\S}(y_4) - 2W(y_1,y_4) + 2W(y, y_4)\\
& = a_y - a_{y_4} + 1 + W(y, y_4) - 2W(y_1,y_4) + c_{y_4}.
\end{align*}
Hence,
\begin{align*}
 \defi_{\S_6}(y_4) & = - \defi_{\S_2}(y_4) - 2W(y_4, y_6)\\
 & = a_{y_4} - a_y - 1 - W(y, y_4) + 2W(y_1,y_4) - c_{y_4}.
\end{align*}
Then $\rank_{\S_6}(y_4) = \ell - 1$ if an only if $W(y, y_4) = c_{y_4} = 0$ and $W(y_1,y_4) = 1$.
In this case, $\defi_{\S_6}(y_4) = a_{y_4} - a_y + 1 = -\defi_\S(y_4)$.
The claim then follows by showing that $\defi_{\S}(y_4) < 0$.
Indeed, if $\defi_\S(y_4) \geq 0$, then $a_{y_4} \leq a_y + 1$.
Then, since $a_y$ is even,
 $$
  \rank_\S(y_4) = \left\lceil \frac{a_{y_4} + 1 - \defi_\S(y_4)}{2}\right\rceil \leq \left\lceil \frac{a_{y_4} + 1}{2}\right\rceil \leq \left\lceil \frac{a_{y} + 2}{2}\right\rceil = \ell.
 $$
 Hence, $y_4$ has minimum rank in $\S$, that is a contradiction.

 Finally, note that every vertex in $\notS_6$ has rank at least $\ell$,
 otherwise a bisection was returned at Line~\ref{line:lastreturn}.
 
 Consider now the bisection $\T_8$ defined at Line~\ref{line:T7}.
 Recall that $\T_8$ is the bisection associated with an $y_1$-pair $(A_{y_1}, B_{y_1})$ for $\S_6$.
 Since $\T_8$ has not been returned at Line~\ref{line:returnT7},
 then it has an obstruction $y_8$.
By Lemma~\ref{lem:obstruct_prop}, $y_8 \in S_6 \setminus A_{y_1}$.
If $\defi_{\S_6}(y_8) \geq 0$, then, from Lemma~\ref{lem:obstruction3},
it follows that $\defi_{\S_6}(y_8) = 0$ and has minimum rank.
However, this is a contradiction, since we showed that $\rank_{\S_6}(y_2) = \ell - 1$ implies $\defi_{\S_6}(y_2) > 0$.

It must be then the case that $\defi_{\S_6}(y_8) < 0$ and clearly $\rank_{\S_6}(y_8) \geq \ell$
(i.e., $y_8$ has not minimum rank in $\S_6$).
Then, by Lemma~\ref{lem:obstruction2}, we have that $\rank_{\S_7}(y_8) \leq \ell -1$,
where $\S_7 = (\notS_6 \cup \{y_8\}, S \setminus \{y_8\})$ has been defined at Line~\ref{line:S6}.

We are now ready to prove that the bisection $\T_9$ returned at Line~\ref{line:returnT8} is good.
Recall that $\T_9$ is the bisection associated to $y_8$-pair $(A_{y_8}, B_{y_8})$ for $\S_7$,
i.e., $\T_9 = (\notS_6 \cup \{y_8\} \setminus A_{y_8} \cup B_{y_8},
S_6 \setminus \{y_6\} \cup A_{y_6} \setminus B_{y_6})$,
where $|A_{y_8}| = |B_{y_8}| = \rank_{\S_7}(y_8) \leq \ell - 1$.

We first prove that for every $x \in \notS_6 \cup \{y_8\} \setminus A_{y_8}$,
we have that $\defi_{\T_9}(x) \geq -a_x$.
If $x \neq y_6$,
then, since by Lemma~\ref{lem:s6minimal}, $\S_6$ is minimal,
by applying Lemma~\ref{lem:minimality} to $y_8 \in S_6$ and $x \in \notS_6$,
we have that
\begin{align*}
 \defi_{\S_7}(x) & = \defi_{\S_6}(x) + 2W(x,y_8)\\
 & \geq - \defi_{\S_6}(y_8) + a_{y_8} - a_x\\
 & \geq -a_x + 2\ell - 2,
\end{align*}
where we used that $\rank_{\S_6}(y_8) \geq \ell$ and thus $\defi_{\S_6}(y_8) \leq a_{y_8} - 2\ell + 2$.
Therefore,
\begin{align*}
 \defi_{\T_9}(x) & \geq \defi_{\S_7}(x) - 2W(x,A_{y_8})\\
 & \geq -a_x + 2\ell - 2 - 2(\ell-1) = -a_x.
\end{align*}

If $x = y_6$, then, recalling that $W(y_6, y_4) = 0$ and thus $\defi_{\S_6}(y_6) = -\defi_{\S_2}(y_6)$, we have
\begin{align*}
 \defi_{\T_9}(y_6)
 & \geq \defi_{\S_7}(y_6) - 2(\ell-1)\\
 & = \defi_{\S_6}(y_6) + 2W(y_6,y_8) - 2(\ell-1)\\
 & = - \defi_{\S_2}(y_6) + 2W(y_6,y_8) - 2(\ell-1).
\end{align*}
By \eqref{eq:defiS} and by recalling that $W(y_6, y_1) = 0$,
\begin{align*}
 \defi_{\S_2}(y_6) & = \defi_\S(y_6) + 2W(y_6, y_1) - 2W(y_6, y)\\
 & \geq a_{y_6} - a_y - 1 - 2W(y_6, y)\\
 & = a_{y_6} - 2\ell + 1 - 2W(y_6, y),
\end{align*}
where we used that $\ell = \frac{a_y + 2}{2}$.
Hence,
\begin{align*}
 \defi_{\T_9}(y_6) & = -a_{y_6} + 2\ell - 1 + 2W(y_6, y) + 2W(y_6,y_8) - 2(\ell-1)\\
 & \geq -a_{y_6} + 1.
\end{align*}

Finally, we prove that for every $x \in B_{y_8}$, $\defi_{\T_9}(x) \geq -a_x$.
Recall that $B_{y_8} \subseteq S_6 \setminus \{y,6, y_8\}$
and $W(x, y_8) = 1$ for every $x \in B_{y_8}$.
We distinguish two cases. If $x \neq y_4$, then
\begin{align*}
 \defi_{\T_9}(x) & \geq - \defi_{\S_7}(x) - 2(\ell - 1) = - (\defi_{\S_6}(x) - 2W(x,y_8)) - 2(\ell-1)\\
 & = -(\defi_{\S_2}(x) + 2W(x,y_6) - 2W(x,y_4) - 2W(x,y_8)) - 2(\ell-1)\\
 & \geq -a_x + 2\ell - 2 - 2W(x,y_6) + 2W(x,y_4) + 2 - 2(\ell-1) \geq -a_x,
\end{align*}
where we used that $\rank_\S(x) \geq \ell$
and thus $\defi_{\S_2}(x) \leq a_x - 2\ell + 2$.

If $x = y_4$, then
\begin{align*}
 \defi_{\T_9}(y_4)
 & \geq - \defi_{\S_7}(y_4) - 2(\ell-1)\\
 & = - (\defi_{\S_6}(y_4) - 2W(y_4,y_8)) - 2(\ell-1)\\
 & = - (-\defi_{\S_2}(y_4) - 2W(y_4,y_8)) - 2(\ell-1)\\
 & = \defi_{\S_2}(y_4) + 2 - 2(\ell-1),
\end{align*}
where we used that $\defi_{\S_6}(y_4) = -\defi_{\S_6}(y_4)$ because $W(y_6, y_4) = 0$ and $W(y_4, y_8) = 1$ because $y_4 \in B_{y_8}$.
By Lemma~\ref{lem:fourth}, we have that
\begin{align*}
 \defi_{\S_2}(y_{4}) & = \defi_\S(y_4) + 2W(y_4, y) - 2W(y_4, y_1)\\
 & \geq a_y - a_{y_4} + 1 + W(y,y_4) - 2W(y_4, y_1)\\
 & = 2\ell - a_{y_4} - 1 + W(y,y_4) - 2W(y_4, y_1)
\end{align*}
Then,
\begin{align*}
 \defi_{\T_9}(y_4) & \geq 2\ell - a_{y_4} - 1 + W(y,y_4) - 2W(y_4, y_1) + 2 - 2(\ell-1)\\
 & \geq - a_{y_4} + 3 + W(y,y_4) - 2W(y_4, y_1)\\
 & \geq - a_{y_4} + 1.
\end{align*}

\section{Lower Bound}
We next show that deciding if it is possible to subvert the majority when starting from a weaker minority is a computationally hard problem, even if we start with a minority of size very close to $\frac{n-1}{2}$.
The main result of this section is given by the following theorem.
\begin{theorem}
\label{thm:reduction}
For every constant $0 < \varepsilon < \frac{133}{155}$, given a graph $G$ with $n$ vertices,
it is NP-hard to decide whether there exists a subvertable belief assignment
with at most $\frac{n-1}{2}(1-\varepsilon)$ vertices in the initial minority.
\end{theorem}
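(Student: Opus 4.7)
The plan is to give a polynomial-time reduction from a suitable NP-hard problem $\Pi$ so that an instance $I$ of $\Pi$ is a YES instance if and only if the constructed pair $(G_I,(\alpha_1,\ldots,\alpha_n))$ admits a subvertable belief assignment whose initial minority has at most $\frac{n-1}{2}(1-\varepsilon)$ vertices. Since the main theorem already guarantees subvertability at the $\frac{n-1}{2}$ threshold whenever a non-stubborn vertex exists, the hardness has to come from the quantitative constraint on the minority: forcing the initial $1$-side to be very small means that the swing vertex must trigger a cascade that flips enough additional $0$-vertices to recover a $1$-majority at equilibrium, and that cascade has to encode a combinatorial witness for $I$.

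I would realize this through two building blocks. First, a \emph{core gadget} $H$ whose non-stubborn vertices encode the variables and constraints of $I$: variable-gadgets are small subgraphs where a single swing flip either sets a ``true'' or a ``false'' value via the resulting best-response chain, and clause-gadgets connect variable-gadgets so that opinion $1$ wins the local majority only if the corresponding clause is satisfied. The stubbornness levels $\alpha_x$ are tuned through the quantities $a_x=\lfloor\alpha_x/(1-\alpha_x)\rfloor$ so that the defining inequality for a best response, $d_1-d_0\geq a_x+1$, enforces exactly the logical behaviour we want. Second, a collection of \emph{stubborn pads}: low-degree vertices with belief $0$ and $a_x\geq d(x)$, which by definition are stubborn and therefore frozen. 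These pads do not participate in any dynamics but inflate $n$, so by tuning the ratio between $|H|$ and the number of pads we can push the initial minority down to any prescribed fraction $\frac{1}{2}(1-\varepsilon)$ with $\varepsilon<\frac{133}{155}$.

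Completeness is the easy direction: from a YES witness of $I$ I can read off a belief assignment that places opinion $1$ exactly on the ``witness positions'' of the core, verify it has small minority thanks to the stubborn pads, and apply Lemma~\ref{lem:good_bisection} (using a canonical swing vertex built into the variable-gadget) to produce an actual best-response sequence reaching a $1$-majority equilibrium. The main obstacle will be \emph{soundness}: I must argue that if $I$ is a NO instance then \emph{no} small-minority truthful profile on $G_I$ can be subverted, no matter how the beliefs on $H$ are distributed and no matter which sequence of best responses is played. To handle this I would translate any hypothetical subverting dynamics into a good bisection of $G_I$ and use the deficiency constraints underlying the notion of a good bisection to extract, from the good and obstruction vertices inside $H$, a combinatorial object (a satisfying assignment, a large cut, etc.) that contradicts the NO assumption on $I$.

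The constant $\frac{133}{155}$ itself is not intrinsic to the argument but arises from an arithmetic accounting: it is the supremum of the ratios $1-\varepsilon$ for which the number of non-stubborn ``flippable'' vertices required by the gadget, divided by the total size of $H$ plus the stubborn pads, stays above $\frac{1}{2}(1-\varepsilon)$ while still enforcing the encoded constraints. The anticipated difficulty is purely combinatorial bookkeeping inside the soundness proof: ruling out ``spurious'' subversions that exploit the pads or the interfaces between gadgets without actually satisfying the logical constraints encoded by the clauses. Addressing it will likely require local arguments on each gadget, aided by Lemma~\ref{lem:minimality} and Lemma~\ref{lem:defiForbibben} to constrain the deficiencies of core vertices in any alleged subverting bisection.
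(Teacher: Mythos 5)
Your high-level strategy — reduce from an NP-complete problem via variable/clause gadgets, and use extra ``inert'' vertices to tune the initial-minority fraction — is the same skeleton the paper follows (the paper reduces from 2P2N-3SAT and adds isolated stubborn vertices), but two key pieces of your plan would not go through.

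The more serious gap is soundness. You propose to argue that if $I$ is a NO instance then no small-minority assignment can be subverted by ``translating any hypothetical subverting dynamics into a good bisection of $G_I$'' and then invoking Lemma~\ref{lem:minimality} and Lemma~\ref{lem:defiForbibben}. But Lemma~\ref{lem:good_bisection} only gives one implication: a good bisection yields a subvertable assignment and a swing vertex. Nothing in the paper establishes, and it is false in general, that every subvertable belief assignment (or every sequence of best responses leading to a subverted majority) arises from a good bisection; the good-bisection machinery is a \emph{sufficient} certificate, engineered so the main theorem can be proved algorithmically. The paper's soundness argument (Lemma~\ref{lemma:non-proper}) is of a completely different kind: it is a direct counting argument about best-response dynamics on the specific gadget graph, showing that any assignment with at most $\frac{7C}{2}+N$ ones that is not in a rigid ``proper'' form cannot cascade to $\frac{n+1}{2}$ ones. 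Without a necessity result for good bisections, your proposed soundness route has a hole that cannot be patched by Lemmas~\ref{lem:minimality} or~\ref{lem:defiForbibben}.

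The second gap is in your ``stubborn pads.'' Adding frozen belief-$0$ pads does make the \emph{fraction} $k/n$ smaller, but it also raises the absolute threshold $\frac{n+1}{2}$ that the equilibrium must reach, and every one of those extra $\frac{n+1}{2}$ opinion-$1$ vertices must come from the core $H$; this caps the number of pads at $|H|-1$. You cannot ``push the initial minority down to any prescribed fraction'' freely — the pad budget, the gadget cascade length, and $\varepsilon$ are tightly coupled, and $\frac{133}{155}$ is exactly the outcome of balancing these. The paper handles this by adding, in addition to $N+\frac{123C}{4}$ inert isolated vertices, a \emph{clique of size $N+6C+1$} whose stubbornness levels are staggered so that, starting from only $N$ vertices at opinion $1$, the whole clique cascades to opinion $1$ regardless of $\phi$; this is an amplifier, not padding, and there is no analogue of it in your sketch. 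Finally, your completeness step misuses Lemma~\ref{lem:good_bisection}: that lemma \emph{constructs} a belief assignment from a good bisection, it does not certify a best-response cascade for a pre-specified assignment read off a 3SAT witness; the paper instead directly exhibits a specific sequence of updates through the gadgets (Lemma~\ref{lemma:proper}).
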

\begin{proof}
To prove our result, we will use a reduction from the NP-hard problem 2P2N-3SAT, that is, the problem of deciding whether a 3SAT formula in which every variable appears as positive in two clauses and as negative in two clauses has a truthful assignment or not (the NP-hardness of 2P2N-3SAT follows by the results in \cite{Y05TRA}).

Given an instance of 2P2N-3SAT, i.e. a Boolean formula $\phi$ with $C$ clauses and $V$ variables where $3C=4V$ (thus, $C$ is a multiple of $4$),
we will construct a graph $G(\phi)$ with odd $n$ vertices, such that $\phi$ has a satisfying assignment if and only if there exists a 
belief assignment $\blf$ for the vertices of $G(\phi)$, with at most $\frac{n-1}{2}(1-\varepsilon)$ vertices having belief $1$, and a sequence of updates leading from $\blf$ to an equilibrium in
which at least $\frac{n+1}{2}$ vertices adopt opinion $1$.

The graph $G(\phi)$ contains the following vertices and edges.
\begin{itemize}
\item For each variable $x$ of $\phi$,
$G(\phi)$ includes a {\em variable} gadget for $x$ consisting of $25$ 
vertices and $50$ edges (see Figure~\ref{fig:variable}).

The vertices of the variable gadget for $x$ are:
the {\em literal vertices}, $x$ and $\ox$;
vertices $v_1(x),\ldots,v_{7}(x)$, $v_1(\ox),\ldots,v_{7}(\ox)$, and $w_1(x), \ldots, w_{7}(x)$; 
vertex $v_0(x)$ and $w_0(x)$.

The edges of the variable gadget for $x$ are:
edges $(x,v_i(x))$, $(\overline{x},v_i(\overline{x}))$ and $(w_0(x), w_i(x))$, for $i=1, \ldots, 7$;
edges $(v_i(x),v_{i+1}(x))$ and $(v_i(\overline{x}),v_{i+1}(\overline{x}))$ for $i=1, \ldots, 6$;
edges $(v_{0}(x),v_7(x))$, $(v_{0}(x),v_7(\overline{x}))$, $(v_0(x), w_0(x))$,
$(w_0(x),v_i(x))$, $(w_0(x),v_i(\overline{x}))$.
\begin{figure}[htb]
 \begin{tikzpicture}[-,auto,on grid=true,semithick,
                     prof/.style={shape=circle,draw,inner sep=0pt,minimum size=1mm},
                     every label/.style={font=\tiny}]

  \node[prof] (V1) [label={below:$v_{1}(x)$}] {};
  \node[prof] (V2) [right=0.75cm of V1] [label={below:$v_{2}(x)$}] {};
  \node[prof] (V3) [right=0.75cm of V2] [label={below:$v_{3}(x)$}] {};
  \node[prof] (V4) [right=0.75cm of V3] [label={below:$v_{4}(x)$}] {};
  \node[prof] (V5) [right=0.75cm of V4] [label={below:$v_{5}(x)$}] {};
  \node[prof] (V6) [right=0.75cm of V5] [label={below:$v_{6}(x)$}] {};
  \node[prof] (V7) [right=0.75cm of V6] [label={below:$v_{7}(x)$}] {};
  \node[prof] (V0) [right=0.75cm of V7] [label={below:$v_{0}(x)$}] {};
  \node[prof] (V7N) [right=0.75cm of V0] [label={below:$v_{7}(\ox)$}] {};
  \node[prof] (V6N) [right=0.75cm of V7N] [label={below:$v_{6}(\ox)$}] {};
  \node[prof] (V5N) [right=0.75cm of V6N] [label={below:$v_{5}(\ox)$}] {};
  \node[prof] (V4N) [right=0.75cm of V5N] [label={below:$v_{4}(\ox)$}] {};
  \node[prof] (V3N) [right=0.75cm of V4N] [label={below:$v_{3}(\ox)$}] {};
  \node[prof] (V2N) [right=0.75cm of V3N] [label={below:$v_{2}(\ox)$}] {};
  \node[prof] (V1N) [right=0.75cm of V2N] [label={below:$v_{1}(\ox)$}] {};

  \node[prof] (W0) [above=1.25cm of V0] [label={right:$\qquad w_{0}(x)$}] {};

  \node[prof] (W4) [above=1.25cm of W0] [label=$w_{4}(x)$] {};
  \node[prof] (W3) [left=1cm of W4] [label=$w_{3}(x)$] {};
  \node[prof] (W2) [left=1cm of W3] [label=$w_{2}(x)$] {};
  \node[prof] (W1) [left=1cm of W2] [label=$w_{1}(x)$] {};
  \node[prof] (W5) [right=1cm of W4] [label=$w_{5}(x)$] {};
  \node[prof] (W6) [right=1cm of W5] [label=$w_{6}(x)$] {};
  \node[prof] (W7) [right=1cm of W6] [label=$w_{7}(x)$] {};

  \node[prof] (X) [below=1.25cm of V4] [label={left:$x \quad$}] {};
  \node[prof] (XN) [below=1.25cm of V4N] [label={right:$\quad \ox$}] {};

  \phantom{\node[prof] (j) [below=1.75cm of V3] {};}
  \phantom{\node[prof] (k) [below=1.75cm of V5] {};}
  \phantom{\node[prof] (l) [below=1.75cm of V3N] {};}
  \phantom{\node[prof] (m) [below=1.75cm of V5N] {};}

  \draw (X) -- (j);
  \draw (X) -- (k);
  \draw (XN) -- (l);
  \draw (XN) -- (m);
  
  \draw (V1) -- (V2);
  \draw (V2) -- (V3);
  \draw (V3) -- (V4);
  \draw (V4) -- (V5);
  \draw (V5) -- (V6);
  \draw (V6) -- (V7);
  \draw (V7) -- (V0);
  \draw (V0) -- (V7N);
  \draw (V7N) -- (V6N);
  \draw (V6N) -- (V5N);
  \draw (V5N) -- (V4N);
  \draw (V4N) -- (V3N);
  \draw (V3N) -- (V2N);
  \draw (V2N) -- (V1N);
  
  \draw (W0) -- (W1);
  \draw (W0) -- (W2);
  \draw (W0) -- (W3);
  \draw (W0) -- (W4);
  \draw (W0) -- (W5);
  \draw (W0) -- (W6);
  \draw (W0) -- (W7);
  \draw (W0) -- (V1);
  \draw (W0) -- (V2);
  \draw (W0) -- (V3);
  \draw (W0) -- (V4);
  \draw (W0) -- (V5);
  \draw (W0) -- (V6);
  \draw (W0) -- (V7);
  \draw (W0) -- (V0);
  \draw (W0) -- (V7N);
  \draw (W0) -- (V6N);
  \draw (W0) -- (V5N);
  \draw (W0) -- (V4N);
  \draw (W0) -- (V3N);
  \draw (W0) -- (V2N);
  \draw (W0) -- (V1N);
  
  \draw (X) -- (V1);
  \draw (X) -- (V2);
  \draw (X) -- (V3);
  \draw (X) -- (V4);
  \draw (X) -- (V5);
  \draw (X) -- (V6);
  \draw (X) -- (V7);
  \draw (XN) -- (V1N);
  \draw (XN) -- (V2N);
  \draw (XN) -- (V3N);
  \draw (XN) -- (V4N);
  \draw (XN) -- (V5N);
  \draw (XN) -- (V6N);
  \draw (XN) -- (V7N);
\end{tikzpicture}
\centering
\caption{The variable gadget.}
\label{fig:variable}
\end{figure}
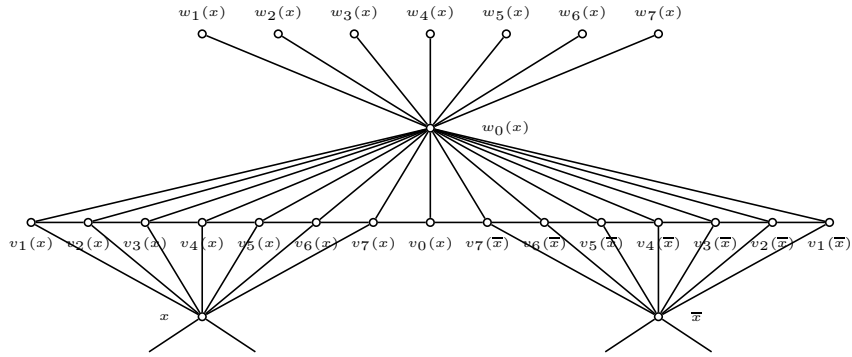

\item For each clause $c$ of $\phi$, graph $G(\phi)$ includes
a {\em clause gadget} for $c$ consisting of $18$ vertices
and $32$ edges (see Figure~\ref{fig:clause}).

The vertices of the clause gadget for $c$ are:
the {\em clause} vertex $c$;
vertices $u_1(c)$, $u_2(c)$; 
vertices $\upsilon_1(c),\ldots,\upsilon_{15}(c)$.

The $32$ edges of the clause gadget are;
edges $(c,u_1(x)), (c,u_2(x))$;
edges $(u_i(c),\upsilon_j(c))$ with $i=1, 2$ and $j=1, \ldots, 15$.

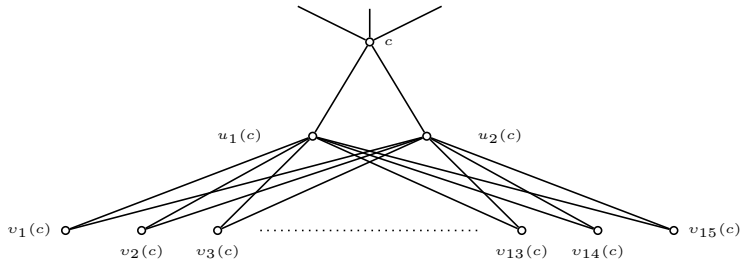
\begin{figure}[htb]
 \begin{tikzpicture}[-,auto,on grid=true,semithick,
                     prof/.style={shape=circle,draw,inner sep=0pt,minimum size=1mm},
                     every label/.style={font=\tiny}]

  \node[prof] (V1) [label={left:$\upsilon_1(c)$}] {};
  \node[prof] (V2) [right=1cm of V1] [label={below:$\upsilon_2(c)$}] {};
  \node[prof] (V3) [right=1cm of V2] [label={below:$\upsilon_3(c)$}] {};

  \phantom{\node[prof] (LL) [right=0.5cm of V3] {};}
  \phantom{\node[prof] (RR) [right=5.5cm of V1] {};}
  
  \phantom{\node[prof] (G) [right=4cm of V1] {};}
  \node[prof] (V13) [right=6cm of V1] [label={below:$\upsilon_{13}(c)$}] {};
  \node[prof] (V14) [right=1cm of V13] [label={below:$\upsilon_{14}(c)$}] {};
  \node[prof] (V15) [right=1cm of V14] [label={right:$\upsilon_{15}(c)$}] {};
  
  \phantom{\node[prof] (H) [above=1.25cm of G] {};}
  \node[prof] (U1) [left=0.75cm of H] [label={left:$u_1(c) \qquad$}] {};
  \node[prof] (U2) [right=0.75cm of H] [label={right:$\qquad u_2(c)$}] {};
  
  \node[prof] (C) [above=1.25cm of H] [label={right:$c$}] {};

  \phantom{\node[prof] (L) [above=0.5cm of C] {};}
  \phantom{\node[prof] (M) [left=1cm of L] {};}
  \phantom{\node[prof] (N) [right=1cm of L] {};}

  \draw[dotted] (LL) -- (RR);
  \draw (C) -- (L);
  \draw (C) -- (M);
  \draw (C) -- (N);
  
  \draw (C) -- (U1);
  \draw (C) -- (U2);
  \draw (U1) -- (V1);
  \draw (U1) -- (V2);
  \draw (U1) -- (V3);
  \draw (U1) -- (V13);
  \draw (U1) -- (V14);
  \draw (U1) -- (V15);
  \draw (U2) -- (V1);
  \draw (U2) -- (V2);
  \draw (U2) -- (V3);
  \draw (U2) -- (V13);
  \draw (U2) -- (V14);
  \draw (U2) -- (V15);
\end{tikzpicture}
\centering
\caption{The clause gadget.}
\label{fig:clause}
\end{figure}
 
\item For every clause $c$ in $\phi$, $G(\phi)$ contains edges between the clause vertex $c$ and the three literal 
vertices corresponding to the literals appearing $c$.
Thus, each literal vertex $x$ is connected to exactly two clause vertices, corresponding to the two clauses in which that literal appears in $\phi$.

\item Graph $G(\phi)$ contains a {\em clique} disconnected from the rest of the graph of size $N+6C+1$ vertices,
with even $N$ such that $2 \leq N \leq \frac{133 - 147\varepsilon}{4\varepsilon}C$. (Note that our choice of $\varepsilon$ implies that $\frac{133 - 147\varepsilon}{4\varepsilon}C \geq 2$.)

\item $N+\frac{123C}{4}$ additional {\em isolated} vertices.
\end{itemize}
Observe that the total number of vertices in $G(\phi)$ is $n = 2\left(N+\frac{147C}{4}\right) + 1$
(recall that $C$ is a multiple of $4$, so $\frac{147C}{4}$ is an integer).

The stubbornness level of the vertices of $G(\phi)$ is defined as follows.
All vertices belonging to gadgets and the isolated vertices have stubbornness level less than $1/2$.
Vertices in the clique are divided in two groups. $N$ clique vertices have a stubbornness level greater than $\frac{N+6C}{N+6C+1}$. We will call them \emph{asocial} vertices, 
since they never have an incentive to declare an opinion different from their own belief, whatever are the opinions declared by the remaining vertices in the clique. 
Call the remaining clique vertices $c_0, \ldots, c_{6C}$; then vertex $c_i$ has a stubbornness level $\alpha_{c_i}$ such that $\frac{N-6C+2(i-1)}{N-6C + 2(i-1)+1} < \alpha_{c_i} < \frac{N-6C+2i}{N-6C + 2i+1}$ (since $N$ is even, then the denominator is never $0$ and it always has the same sign of the numerator, from which he have that $\alpha_{c_i} \geq 0$, as desired).
This implies that vertex $c_i$ adopts an opinion different from her 
own belief if and only if at least $N+i$ clique vertices adopt that opinion.

A belief assignment to the vertices of $G(\phi)$ is called \emph{proper} if it assigns belief 1 to the following vertices:
\begin{itemize}
\item
for every variable $x$,
vertex $w_0(x)$ and only one of the two literal vertices in the gadget of $x$;

\item
for every clause $c$,
vertices $u_1(c)$ and $u_2(c)$ in the gadget of $c$;

\item 
the $N$ asocial clique vertices.

\item 
All the remaining vertices have belief 0.
\end{itemize}
Hence, in a proper profile the number of vertices with belief $1$ is
$2V + 2C + N = \frac{7C}{2} + N \leq \frac{n-1}{2}(1 - \varepsilon)$, (remind that  $N \leq \frac{133 - 147\varepsilon}{4\varepsilon}C$).

To prove Theorem \ref{thm:reduction} we will use the following two lemmas.
The first lemma proves that there exists a proper subvertable belief assignment for $G(\phi)$ if and only $\phi$ is satisfiable.
\begin{lemma}\label{lemma:proper}
The Boolean formula $\phi$ is satisfiable if and only 
if there exists a proper belief assignment to the vertices 
of $G(\phi)$ that is subvertable.
\end{lemma}
\begin{proof}
We observe that in a proper belief assignment all clique vertices eventually adopt opinion $1$, even if their belief is $0$. In fact, consider vertices $c_i$, for $i= 0, \ldots, 6C$ in this sequence: when vertex $c_i$ is selected there are $N+i$ vertices in the clique with opinion $1$ and then $c_i$ has an incentive to adopt opinion $1$.

Let $\blf$ be a proper belief assignment. Using an argument similar to what shown in \cite{acfgpWINE15}, we can prove that: 
(i) there exists a sequence of updates that leads from $\blf$ to an equilibrium in which
$17$ vertices of every variable gadget have opinion $1$
but no sequence of updates can reach an equilibrium from $\blf$ where 
more than $17$ vertices in a variable gadget have opinion $1$;
(ii) there exists a sequence of updates that leads to an equilibrium in which $17$ vertices of the clause gadget have opinion $1$;
(iii) the updates lead to an additional number of $C$ clause vertices adopting opinion $1$ in the equilibrium
if and only if $\phi$ is satisfiable.

Thus, we have that if $\phi$ is satisfiable then there exists a sequence of updates leading to an equilibrium where $N + 6C + 1 + 17V + 17C + C = N+\frac{147C}{4} + 1= \frac{n+1}{2}$ vertices have opinion $1$.
Otherwise, if $\phi$ is not satisfiable, any sequence of updates
leads to an equilibrium where less than $n/2$ vertices have opinion $1$.
\end{proof}
To conclude our proof we will show that we can ignore non-proper assignments since there is no sequence of updates that leads from a non-proper belief assignment to an equilibrium where the majority of vertices adopt opinion $1$.
\begin{lemma}\label{lemma:non-proper}
For each non-proper belief assignment $\blf$ to the vertices of $G(\phi)$ that assigns opinion $1$ to at most $\frac{7C}{2} + N$ vertices,
there is no sequence of updates that leads from $\blf$ to an equilibrium where at least $\frac{n+1}{2}$ vertices adopt opinion $1$.
\end{lemma}
\begin{proof}
We start observing that if the number of clique and isolated vertices in $G(\phi)$ adopting opinion $1$ is
strictly less than $N$, then no clique vertex with opinion $0$ will never adopt opinion $1$ (the same trivially holds for isolated vertices). 
Thus, in this case, even if it would be possible to convince all vertices in the variable and clause gadgets to adopt opinion $1$,
the number of vertices with opinion $1$ in the equilibrium would be at most $25V+18C+N-1 = N + \frac{147C}{4} - 1 < \frac{n-1}{2}$.

Let us now focus on a belief assignment that assigns belief $1$ to at most $7C/2=2C+2V$ vertices 
from variable and clause gadgets. By adopting an argument similar to what given in \cite{acfgpWINE15},
we can then prove that, if this belief assignment is such that a sequence of updates leads to an equilibrium where at least $\frac{n+1}{2}$ 
vertices adopt opinion $1$, then this belief assignment must be proper.
\end{proof}
\end{proof}

\bibliographystyle{plain}
\bibliography{opinion}

\end{document}